\definecolor{medgreen}{rgb}{0.0, 0.75, 0.0}
\definecolor{darkgreen}{rgb}{0.0, 0.4, 0.0}
\colorlet{Changes@Color}{darkgreen}
\theoremstyle{definition}
\newtheorem{theorem}{Theorem}[section]
\newtheorem{proposition}[theorem]{Proposition}
\newtheorem{corollary}[theorem]{Corollary}
\newtheorem{example}[theorem]{Example}
\newtheorem{definition}[theorem]{Definition}
\newtheorem{lemma}[theorem]{Lemma}
\newtheorem{remark}[theorem]{Remark}
\begin{document} 

\settimeformat{ampmtime}

 \title{Axioms for Defeat in Democratic Elections}
 \author{Wesley H. Holliday$^\dagger$ and Eric Pacuit$^\ddagger$ \\ \\
 $\dagger$ University of California, Berkeley {\normalsize (\href{mailto:wesholliday@berkeley.edu}{wesholliday@berkeley.edu})} \\
 $\ddagger$ University of Maryland {\normalsize (\href{mailto:epacuit@umd.edu}{epacuit@umd.edu})}}
 
\date{{\normalsize Published in \textit{Journal of Theoretical Politics}, Vol.~33(4), 475-524, 2021}.}
 
\maketitle

 \begin{abstract}
 We propose six axioms concerning when one candidate should defeat another in a democratic election involving two or more candidates. Five of the axioms are widely satisfied by known voting procedures. The sixth axiom is a weakening of Kenneth Arrow's famous condition of the Independence of Irrelevant Alternatives (IIA). We call this weakening Coherent IIA. We prove that the five axioms plus Coherent IIA single out a method of determining defeats studied in our recent work: Split Cycle. In particular, Split Cycle provides the most resolute definition of defeat among any satisfying the six axioms for democratic defeat. In addition, we analyze how Split Cycle escapes Arrow's Impossibility Theorem and related impossibility results.
 \end{abstract}
 
 \tableofcontents

\section{Introduction}\label{IntroDefeat}

In the abstract for his lecture at a 2017 Lindau Nobel Laureate Meeting, Eric Maskin \citeyearpar{Maskin2017} claimed that ``The systems that most countries use to elect presidents are deeply flawed,'' a claim defended in writing by Maskin and Sen \citeyearpar{Maskin2016,Maskin2017a,Maskin2017b}. In fact, the issue goes far beyond presidential elections: the same voting systems are used in elections ranging from national elections to elections in small committees and clubs. In our view, the key issue  highlighted by Maskin and Sen can be stated in terms of the following normative principle (closely related to what voting theorists call Condorcet consistency, defined below\footnote{For the relation between Majority Defeat and Condorcet consistency, see Remark \ref{MajCon}.}).
\begin{itemize}
\item[] Majority Defeat: if a candidate loses an election (before any tiebreaking), they must have been \textit{defeated} by some other candidate in the election, and a candidate should defeat another only if a \textit{majority} of voters prefer the first candidate to the second.\footnote{This principle is not intended to apply to a candidate who ties for the win but ultimately loses a tiebreaker to another candidate. Furthermore, the  principle is intended only for elections in which the outcome of the election is determined solely by the votes cast in that election, as in conventional elections. The principle is not intended as a constraint on voting systems in which the outcome of an election is determined in part by votes cast in previous elections and previous election outcomes (see, e.g., \citealt{Harrenstein2020}).}
\end{itemize}

\noindent As is well known, widely used voting systems can violate the principle of Majority Defeat.

\begin{example}\label{BushGore} In the 2000 U.S. presidential election in Florida,  George W. Bush defeated Al Gore and Ralph Nader according to Plurality voting, which only allows voters to vote for one candidate. Yet based on the plausible inference that most Nader voters preferred Gore to Bush (see \citealt{Magee2003}), it follows that a majority of all voters preferred Gore to Bush.
\end{example}

\begin{example}\label{Burlington} In the 2009 mayoral election in Burlington, Vermont, the Progressive candidate Bob Kiss defeated the Democratic candidate Andy Montroll according to Instant Runoff Voting (defined in Example \ref{OtherVCCRs} below), but Montroll was preferred to each of the other candidates including Kiss by majorities of voters, according to the ranked ballots collected.
\end{example}

\begin{example}\label{Trump} During the 2016  U.S. presidential primary season, an NBC News/Wall Street Journal poll (March 3-6) asked respondents both for their top choice and their preference between Donald Trump and each of Ted Cruz, John Kasich, and Marco Rubio. Trump was the Plurality winner, receiving 30\% of first place votes, but Cruz, Kasich, and Rubio were each preferred to Trump by 57\%, 57\%, and 56\% of respondents, respectively (see \citealt{Kurrild-Klitgaard2018} concerning statistical significance). For further discussion of whether another Republican might have been majority preferred to Trump, see \citealt{Maskin2016}, \citealt{Maskin2017}, \citealt{Kurrild-Klitgaard2018}, and \citealt{Woon2020}.
\end{example}

\noindent For related examples outside the U.S., see, e.g., \citealt[\S~20.3.2]{Kaminski2015} and \citealt{Feizi2020}. 

The above failures of Majority Defeat involve \textit{spoiler effects}. In Example \ref{BushGore}, although it is likely that a majority of voters preferred Gore to Bush and also preferred Gore to Nader, Nader's inclusion in the race spoiled the election for Gore, handing victory  to Bush. In Example \ref{Burlington}, although a majority of voters preferred Montroll to Kiss and a majority preferred Montroll to the Republican candidate, Kurt Wright, Wright's inclusion in the race spoiled the election for Montroll, handing victory to Kiss. Finally, in Example \ref{Trump}, although the NBC News/Wall Street Journal poll did not ask for respondents' preferences between Cruz, Kasich, and Rubio, if one of them was majority preferred to the other two, then it would be reasonable to call the latter two spoilers for the first, as their inclusion in the poll handed the plurality victory to Trump.

What other benefits might respecting Majority Defeat provide besides mitigating spoiler effects of the kind shown above? Maskin and Sen \citeyearpar{Maskin2017a} make the following conjecture:
\begin{quote}
[M]ajority rule may reduce polarization. A centrist like Bloomberg [in the 2016 U.S. presidential election] may not be ranked first by a
large proportion of voters [and hence cannot win under Plurality], but can still be elected [with the backing of majorities against each other candidate] if viewed as a good compromise. Majority rule also encourages public debate about a larger group of potential candidates [since more candidates can participate without worry of their being spoilers], bringing us closer to John Stuart Mill's ideal of democracy as ``government by discussion.''
\end{quote}

To respect the principle of Majority Defeat, we need to collect ballots in which voters rank the candidates in the election.\footnote{One could collect even more information from each voter than a rankings of the candidates: e.g., a ranking plus a distinguished set of ``approved'' candidates (cf.~\citealt{Brams2009}) or a grading of each candidate (cf.~\citealt{Balinski2010}) from which a ranking can be derived. In this paper, we assume that only rankings of the candidates are collected from voters.} Due to the possibility of strategic voting, we cannot guarantee that voters' rankings of the candidates always reflect their sincere preferences (see \citealt{Taylor2005}), but one can try to choose voting procedures that provide fewer incentives for strategic voting (see, e.g., \citealt{Chamberlin1985}, \citealt{Nitzan1985}, \citealt{Bassi2015}, \citealt{HP2019}). Assuming we collect ranked ballots, a wide variety of voting procedures become available (see, e.g., \citealt{Brams2002}, \citealt{Pacuit2019}, and Examples \ref{OtherVCCRs} and \ref{BeatPathEx}~below).

One obvious idea for satisfying the principle of Majority Defeat is to say that one candidate defeats another \textit{if and only if} a majority of voters prefer the first candidate to the second. Notoriously, however, this can result in every candidate being defeated, leaving no candidate who wins. In particular, there can be a \textit{majority cycle}, wherein a majority of voters prefer $a$ to $b$, a majority of voters prefer $b$ to $c$, and a majority of voters prefer $c$ to $a$ (\citealt{Condorcet1785}). Majority cycles may also involve more than three candidates. This so-called Paradox of Voting is perhaps the main theoretical obstacle to the possibility of rational democratic decision making with more than two candidates. Riker \citeyearpar{Riker1982} has famously argued that the Paradox of Voting, along with the related Arrow Impossibility Theorem (\citealt{Arrow1963}), destroys the notion of a coherent ``will of the people'' in a democracy.\footnote{Wolf \citeyearpar{Wolf1970} takes these results to show that ``majority rule is fatally flawed by an internal inconsistency'' (p.~59), and Hardin \citeyearpar{Hardin1990} takes them to cast ``doubt on the conceptual coherence of majoritarian democracy'' (p.~184). For further discussion, see \citealt{Risse2001,Risse2009}.} Although there is not yet enough empirical research to know how prevalent majority cycles are in real elections of various scales,\footnote{As Van Deemen \citeyearpar{VanDeemen2014} remarks, ``it is remarkable to see that the empirical research on the paradox has been conducted mainly for large elections. Collective decision making processes in relatively small committees, such as corporate boards of directors, management teams in organizations, government cabinets, councils of political parties and so on, have hardly been studied'' (p.~325) (though see \citealt{Mattei2013}). Concerning the relevance of such empirical research, Ingham \citeyearpar{Ingham2019} argues that ``Arrow's theorem and related results threaten the populist's principle of democratic legitimacy even if majority preference cycles never occur'' (p.~97).} majority cycles have been found in some large elections (see \citealt{VanDeemen2014}). In a typical election, we expect (or at least hope) that there will be a \textit{Condorcet winner}---a candidate $a$ such that for every candidate $b$, more voters prefer $a$ to $b$ than $b$ to $a$---in which case some voting theorists believe that the choice is clear: elect the Condorcet winner (see, e.g., \citealt{Felsenthal1992,Maskin2017a,Maskin2017b}). A voting method is \textit{Condorcet consistent} if it chooses as the unique winner of an election the Condorcet winner, whenever a Condorcet winner exists. But there must be some backup plan in place in case there is no Condorcet winner.

In the absence of a Condorcet winner, Maskin and Sen \citeyearpar{Maskin2017a} suggest ``having a runoff between the two top candidates,'' but defining ``the two top candidates'' faces some of the same difficulties as defining ``the best candidate.'' In a recent paper (\citealt{HP2020}), we study a voting procedure that we call Split Cycle, which provides a different backup plan for the case where no Condorcet winner exists.\footnote{After posting \citealt{HP2020} online, we learned from Jobst Heitzig of his notion of the ``immune set'' discussed in a 2004 post on the Election-Methods mailing list (\citealt{Heitzig2004}), which is equivalent to the set of winners for Split Cycle after replacing `stronger' with `at least as strong' in Heitzig's definition in the post. After submitting the present paper, we learned from Markus Schulze of Steve Eppley's notion of the ``Beatpath Criterion Method'' in a 2000 post on the Election-Methods mailing list (\citealt{Eppley2000}), which is equivalent to a version of Split Cycle that measures strength of majority preference using winning votes (the number of voters who rank $x$ above $y$), though \citealt{HP2020} focuses on the version that uses margin of victory (the number of voters who rank $x$ above $y$ minus the number of voters who rank $y$ above $x$).} Instead of saying \textit{if there is a Condorcet winner, elect that person, and if not, do something else with a different justification}, Split Cycle provides a unified rule for cases with or without Condorcet winners:
\begin{itemize}
\item[] In an election with candidates $x$ and $y$, say that $x$ \textit{wins by a margin of $n$ over $y$} when there are $n$ more voters who prefer $x$ to $y$ than who prefer $y$ to $x$.  Then $x$ defeats $y$ according to Split Cycle if $x$ wins by more than $n$ over $y$ for the smallest number $n$ such that there is no majority cycle containing $x$ and $y$ in which each candidate wins by more than $n$ over the next candidate in the cycle.\end{itemize}

\noindent While there may be no Condorcet winner, there is always an undefeated candidate according to Split Cycle (if there is more than one, a tiebreaking process must be applied---cf.~Remark \ref{Tiebreaking}). An intuitive way\footnote{See \citealt{HP2020} for a more efficient algorithm for computing the undefeated candidates.} to determine the Split Cycle defeat relation is as follows: 
\begin{enumerate}
\item In each majority cycle, identify the wins with the smallest margin in that cycle. 
\item After completing step 1 for all cycles, discard the identified wins. All remaining wins count as defeats.
\end{enumerate}
\noindent As we show, Split Cycle mitigates spoiler effects (see Section \ref{CoherentIIASection}) and has several other virtues, including avoiding the so-called Strong No Show Paradox (see \citealt{HP2020,HP2021PI}).

In this paper, we arrive at Split Cycle (defined formally in Section \ref{SCsection}) by another route. We propose six general axioms concerning when one candidate should defeat another in a democratic election involving two or more candidates (Section \ref{AxiomsSection}). Five of the axioms are widely satisfied by known voting procedures. The sixth axiom is a weakening of Kenneth Arrow's famous condition of the Independence of Irrelevant Alternatives (IIA) (\citealt{Arrow1963}). We call this weakening Coherent IIA. Arrow's IIA states that if in two elections, all voters rank candidate $x$ vs. candidate $y$ in the same way, then if $x$ defeats $y$ in the first election, $x$ must also defeat $y$ in the second election. Coherent IIA agrees provided that the second election does not involve greater \textit{incoherence} with respect to $x$ and $y$, in the sense of new majority cycles or stronger majority cycles involving $x$ and $y$. For if the second election involves greater incoherence, we may need to suspend the judgment that $x$ defeats $y$ that we could coherently accept in the first election. Both by itself and together with other axioms, Coherent IIA has a number of desirable consequences. Already by itself Coherent IIA rules out---as IIA does---a flip from $x$ defeating $y$ to $y$ defeating $x$ in two elections in which all voters rank $x$ vs. $y$ in the same way; and together with other natural axioms, Coherent IIA implies Majority Defeat.

The first half of the paper culminates in a proof that the five axioms plus Coherent IIA single out Split Cycle (Section \ref{CharSection}). In particular, our main result is that Split Cycle provides the most resolute definition of defeat of any satisfying the six axioms for democratic defeat. In the second half, we analyze  how Split Cycle manages to escapes Arrow's Impossibility Theorem and related impossibility results in social choice theory (Section \ref{EscapeSection}). The answer is twofold: we weaken IIA to Coherent IIA, and we relax Arrow's assumptions about the properties of the defeat relation between candidates. We explain how neither of these moves is sufficient by itself to escape Arrow-like impossibility theorems. But by doing both, Split Cycle provides a compelling response, we think, to both the Paradox of Voting and Arrow's Impossibility Theorem.

A key aspect of our characterization of Split Cycle using the six axioms for defeat is that we work with a model in which elections can have different sets of voters and different sets of candidates, just as they do in reality. Given the importance of this variable-election setting to our characterization, we consider how standard impossibility results for a fixed set of candidates/voters can be adapted to and even strengthened in the variable-election setting, and yet how Split Cycle still escapes them (Sections \ref{IIAVCCR} and \ref{AlphaVoting}). One of the methodological lessons of the paper, in our view, is the value of working in a variable-election framework.

We start in Section \ref{Prelim} by reviewing the formal framework we will use to conduct our analysis.

\section{Voting methods and collective choice rules}\label{Prelim}

As suggested in Section \ref{IntroDefeat}, we work in a \textit{variable-voter} and \textit{variable-candidate} setting. This means that our group decision method can input  elections---formalized as \textit{profiles} below---with different sets of voters and different sets of candidates (see Remark \ref{Subtleties}.\ref{Subtleties1} for a comparison with a fixed-voter and fixed-candidate setting). To allow sets of voters and candidates of arbitrary (but finite) size in elections, we first fix infinite sets $\mathcal{V}$ and $\mathcal{X}$ of \textit{voters} and \textit{candidates}, respectively. A given election will use only finite subsets $V\subset \mathcal{V}$ and $X\subset \mathcal{X}$. We consider elections in which each voter in the election submits a ranking of the candidates in the election. For simplicity, in this paper we assume that each voter submits a \textit{strict linear order} on the set $X$ of candidates, i.e., a binary relation $P$ on $X$ satisfying the following conditions for all $x,y,z\in X$:
\begin{itemize}
\item asymmetry: if $xPy$ then not $yPx$;
\item transitivity: if $xPy$ and $yPz$, then $xPz$;
\item connectedness: if $x\neq y$, then $xPy$ or $yPx$.
\end{itemize}
We take $xPy$ to mean that the voter strictly prefers candidate $x$  to candidate $y$. In practice, one may wish to allow voters not to rank all the candidates or even to indicate indifference between candidates. The voting procedures discussed below can be generalized to this setting, as we will discuss for our favored procedure in future work, but doing so raises some choice points that are not essential to the main ideas in this paper.

We formalize the notion of an election as a function associating with each voter their ranking of the candidates. For a set $X$, let $\mathcal{L}(X)$ be the set of all strict linear orders on $X$.

 \begin{definition}\label{ProfileDef}
 A \textit{profile} is a function $\mathbf{P}: V\to \mathcal{L}(X)$ for some nonempty finite $V\subset \mathcal{V}$ and nonempty finite $X\subset \mathcal{X}$, which we denote by $V(\mathbf{P})$ (called the set of \textit{voters in $\mathbf{P}$})  and $X(\mathbf{P})$ (called the set of \textit{candidates in $\mathbf{P}$}), respectively. We call $\mathbf{P}(i)$ voter $i$'s \textit{ballot}, and we write `$x\mathbf{P}_iy$' for $(x,y)\in\mathbf{P}(i)$.
\end{definition}

When we display profiles, we show their ``anonymized form'' that records only the number of candidates with each type of ballot, rather than the  identities of the voters. For example:
\begin{center}
\begin{minipage}{2in}\begin{tabular}{ccc}
$4$ & $2$ & $3$   \\\hline
$a$ & $b$ &  $c$ \\
$b$ &  $c$ & $a$ \\
$c$ &  $a$ &  $b$ \\
\end{tabular}\end{minipage}
\end{center}
The above diagram indicates that two voters rank $b$ above $c$ above $a$ (notation: $bca$), etc.

It will be important later to consider the restriction of a profile to a subset $Y$ of the candidates: we erase from each voter's ballot any candidates not in $Y$, leaving the ranking of the candidates in $Y$ unchanged.

\begin{definition}\label{Restriction}
Given a binary relation $P$ on $X$ and $Y\subseteq X$, let $P_{\mid Y}$ be the restriction of $P$ to the set $Y$, i.e., $P_{\mid Y}=P\cap (Y\times Y)$. Given a profile $\mathbf{P}$, let $\mathbf{P}_{\mid Y}$ be the profile with $X(\mathbf{P}_{\mid Y})= Y$ and $V(\mathbf{P}_{\mid Y})=V(\mathbf{P})$ obtained from $\mathbf{P}$ by restricting each voter's ballot to the set $Y$.\footnote{I.e., for all $i\in V(\mathbf{P})$, $\mathbf{P}_{\mid Y}(i)=\mathbf{P}(i)_{\mid Y}$.}
\end{definition}

We now consider two different kinds of group decision methods, differing in what they output. The first kind outputs a set of potential winners for the election.

\begin{definition}\label{VotingMethod} A \textit{voting method} is a function $F$ on the domain of all profiles that returns for any profile $\mathbf{P}$ a nonempty subset $F(\mathbf{P})$ of the candidates in $\mathbf{P}$, i.e., $\varnothing\neq F(\mathbf{P})\subseteq X(\mathbf{P})$.
\end{definition}
\noindent As usual, if $F(\mathbf{P})$ contains multiple candidates, we assume that some further tiebreaking process would then apply, though we do not fix the nature of this process (see \citealt[pp.~14-15]{Schwartz1986} for further discussion). If $x\not\in F(\mathbf{P})$, this means that $x$ is excluded from the rest of the process that leads to the ultimate winner.

The second kind of group decision method outputs an asymmetric binary relation on the set of candidates. In social choice theory, this relation is typically called the ``strict social preference'' relation. We interpret this binary relation as a \textit{defeat relation} for the election in the sense of Section \ref{IntroDefeat}.

\begin{definition} A \textit{variable-election collective choice rule} (VCCR) is a function $f$ on the domain of all profiles such that for any profile $\mathbf{P}$, $f(\mathbf{P})$ is an asymmetric binary relation on $X(\mathbf{P})$, which we call \textit{the defeat relation for $\mathbf{P}$ according to $f$}. For $x,y\in X(\mathbf{P})$, we say that \textit{$x$ defeats $y$ in $\mathbf{P}$ according to $f$} when $(x,y)\in f(\mathbf{P})$.
\end{definition}

A well-known special case of a collective choice rule is what Arrow called a \textit{social welfare function} (SWF). The output of an SWF is a \textit{strict weak order}, i.e., a binary relation $P$ on $X$ satisfying asymmetry and the condition that for all $x,y,z\in X:$
\begin{itemize}
\item if $xPy$, then $xPz$ or $zPy$.
\end{itemize}
Note that these conditions imply that $P$ is transitive. In the variable-election setting, we define the following.

\begin{definition} A \textit{variable-election social welfare function} (VSWF) is a VCCR $f$ such that for any profile $\mathbf{P}$, $f(\mathbf{P})$ is a strict weak order.
\end{definition}

For readers familiar with the standard setup in social choice theory, we note some subtleties about our definitions.

\begin{remark}\label{Subtleties}$\,$
\begin{enumerate}
\item\label{Subtleties1} We add the modifier  `variable-election' because the term `collective choice rule' due to Sen \citeyearpar[Ch.~2$^*$]{Sen2017} appears in a fixed-voter and fixed-candidate setting.  In this setting, one begins by fixing a nonempty set $V$ of voters and a nonempty set $X$ of candidates; one then defines a collective choice rule (CCR) as a function that takes as its input a profile for $V$ and $X$ (see Section \ref{FixedImposs}). One could call this a ``variable election'' setting insofar as the sizes of $V$ and $X$ are not specified and definitions of CCRs and axioms concerning CCRs do not refer to specific numbers of voters and candidates. However, it is not a variable election setting in our sense, since the domain of a CCR cannot contain both a profile whose set of voters is $\{i,j,k\}$ and a profile whose set of voters is $\{i,j\}$; likewise, it cannot contain both a profile whose set of candidates is $\{a,b,c\}$ and a profile whose set of candidates is $\{a,b\}$, etc. Yet there are important axioms in voting theory concerning the addition or removal of voters or candidates: not only some of the axioms proposed below, but also, for instance, axioms concerning adding voters who support a given candidate (see, e.g., \citealt{Felsenthal2016}) or adding a candidate who is a ``clone'' of another candidate (\citealt{Tideman1987}). These axioms cannot be formalized in terms of a CCR whose domain contains only profiles for $V$ and $X$.

\item In social choice theory, one often defines the output of a CCR (as Sen does) to be a ``weak social preference'' relation $R$ that is reflexive instead of asymmetric. For social welfare functions, the choice does not matter, because strict weak orders $P$ are in one-to-one correspondence with complete and transitive relations $R$.\footnote{A relation $R$ on $X$ is complete if for all $x,y\in X$, we have $xRy$ or $yRx$.} However, since we aim to study the concept of \textit{defeat}, an asymmetric relation, we have defined VCCRs accordingly.
\item For simplicity, we build the axiom of Universal Domain into the definition of a VCCR, but one could of course define a notion of VCCR where only certain profiles are in the domain of $f$ (cf.~\citealt{Gaertner2001}).
\end{enumerate}
\end{remark}

Any VSWF $f$ induces a voting method $\overline{f}$ such that for any profile $\mathbf{P}$, $\overline{f}(\mathbf{P})$ is the set of candidates who are not defeated by any candidates in $\mathbf{P}$ according to $f$. That $f(\mathbf{P})$ is a strict weak order implies that $\overline{f}(\mathbf{P})$ is nonempty, but in fact a much weaker condition is sufficient---namely, acyclicity. 

\begin{definition}\label{AcyclicDef} Let $P$ be an asymmetric binary relation on a set $X$. A \textit{cycle in $P$} is a sequence $x_1,\dots,x_n$ of elements of $X$ such that $x_1Px_2,\dots,x_{n-1}Px_n$, $x_n=x_1$, and all elements are distinct except $x_1$ and $x_n$.\footnote{In requiring that all elements are distinct except $x_1$ and $x_n$, we are using the term `cycle' for what is called a \textit{simple cycle}.} The relation $P$ is \textit{acyclic} if there is no cycle in $P$. A VCCR $f$ is \textit{acyclic} if for all profiles $\mathbf{P}$, $f(\mathbf{P})$ is acyclic.
\end{definition}

Any acyclic VCCR induces a voting method that outputs for a given profile the set of undefeated candidates. All defeated candidates are excluded from the rest of the process that leads to the ultimate winner.

\begin{lemma}\label{VCCRtoVoting} Given any acyclic VCCR $f$, the function $\overline{f}$  on the set of profiles defined by 
\[\overline{f}(\mathbf{P})=\{x\in X(\mathbf{P})\mid \mbox{there is no }y\in X(\mathbf{P}): y\mbox{ defeats }x\mbox{ in $\mathbf{P}$ according to }f\}\]
is a voting method, as $\varnothing\neq \overline{f}(\mathbf{P})\subseteq X(\mathbf{P})$.
\end{lemma}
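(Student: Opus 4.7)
The statement has two parts: that $\overline{f}(\mathbf{P})\subseteq X(\mathbf{P})$ and that $\overline{f}(\mathbf{P})\neq\varnothing$. The first is immediate from the very form of the set-builder, since every element of $\overline{f}(\mathbf{P})$ is required to lie in $X(\mathbf{P})$. So the entire content of the lemma is the nonemptiness claim, which I will prove by contrapositive: if no candidate is undefeated, then $f(\mathbf{P})$ has a cycle, contradicting acyclicity.

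The plan is a standard finite-descent argument exploiting finiteness of $X(\mathbf{P})$. Fix any $x_0\in X(\mathbf{P})$ (this set is nonempty by Definition \ref{ProfileDef}). Recursively, if $x_k$ has been defined and $x_k\notin\overline{f}(\mathbf{P})$, pick some $x_{k+1}\in X(\mathbf{P})$ with $(x_{k+1},x_k)\in f(\mathbf{P})$, which exists by the definition of $\overline{f}$. If this process ever terminates, it does so at an undefeated candidate, and we are done. Otherwise, we obtain an infinite sequence $x_0,x_1,x_2,\dots$ in the finite set $X(\mathbf{P})$, so by the pigeonhole principle there are indices $i<j$ with $x_i=x_j$; choosing the least such $j$ guarantees that $x_i,x_{i+1},\dots,x_{j-1}$ are pairwise distinct.

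Now I would reverse-index to match the definition of cycle in Definition \ref{AcyclicDef}. Set $y_\ell=x_{j-\ell+1}$ for $\ell=1,\dots,j-i+1$. Then $y_1=x_j=x_i=y_{j-i+1}$, all intermediate $y_\ell$ are distinct, and for each $\ell$ we have $(y_\ell,y_{\ell+1})=(x_{j-\ell+1},x_{j-\ell})\in f(\mathbf{P})$ by construction. This is precisely a cycle in $f(\mathbf{P})$, contradicting the assumption that $f$ is acyclic. Hence the descent must terminate, giving an element of $\overline{f}(\mathbf{P})$, so $\overline{f}(\mathbf{P})\neq\varnothing$.

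There is no real obstacle here; the only things to be careful about are (i) quoting the exact form of the cycle definition (elements distinct except endpoints, and consecutive pairs in the relation, read in the correct direction), which is why I invert the indexing, and (ii) making sure the recursive choice of $x_{k+1}$ is well-defined, which it is because $x_k$ being defeated means by definition that some $y\in X(\mathbf{P})$ defeats it. Everything else is bookkeeping.
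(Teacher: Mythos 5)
Your proof is correct: the paper states Lemma \ref{VCCRtoVoting} without an explicit proof, treating nonemptiness as routine, and your finite-descent/pigeonhole argument (every candidate defeated would force a repetition and hence a cycle in $f(\mathbf{P})$, contradicting acyclicity) is exactly the standard reasoning the paper implicitly relies on, mirroring the argument it does spell out for Proposition \ref{NontrivialDefeat}. Your care with the reverse indexing and the least repeated index to match Definition \ref{AcyclicDef} is fine, so nothing is missing.
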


Given a voting method $F$, we can consider the acyclic VCCRs from which $F$ arises as in Lemma \ref{VCCRtoVoting}.

\begin{definition}\label{DefeatRationalize} Let $F$ be a voting method and $f$ a VCCR. Then $F$ is \textit{defeat rationalized by $f$} if $F=\overline{f}$.
\end{definition}

Let us now review some standard VCCRs. Several of the VCCRs are based on the majority preference relation, defined as follows.

\begin{definition}Given a profile $\mathbf{P}$ and $x,y\in X(\mathbf{P})$, we say that $x$ is \textit{majority preferred to $y$ in $\mathbf{P}$} (and $y$ is \textit{majority dispreferred to $x$ in $\mathbf{P}$}) if more voters rank $x$ above $y$ in $\mathbf{P}$ than rank $y$ above $x$ in $\mathbf{P}$. We write $x\to_\mathbf{P} y$ (or $x\to y$ if $\mathbf{P}$ is clear from context) to indicate that $x$ is majority preferred to $y$ in $\mathbf{P}$. 

A \textit{majority cycle in $\mathbf{P}$} is a cycle in the relation $\to_\mathbf{P}$.
\end{definition}

\begin{example}\label{OtherVCCRs} $\,$
\begin{enumerate}
\item \textbf{Simple Majority}. For $x,y\in X(\mathbf{P})$, $x$ defeats $y$ in $\mathbf{P}$ if and only if $x\to y$.
\item \textbf{Covering} (\citealt{Gillies1959,Fishburn1977,Miller1980}). For $x,y\in X(\mathbf{P})$, say that \textit{$x$ left-covers $y$ in $\mathbf{P}$} if for all $z\in X(\mathbf{P})$, if $z\to x$, then $z\to y$; and \textit{$x$ right-covers $y$ in $\mathbf{P}$} if for all $z\in X(\mathbf{P})$, if $y\to z$, then $x\to z$. (Left-covering and right-covering are equivalent if $\mathbf{P}$ has an odd number of voters but not for an even number of voters.)  We say that $x$ defeats $y$ in $\mathbf{P}$ according to the Left Covering VCCR (resp.~Right Covering VCCR) if $x\to y$ and $x$ left-covers $y$ (resp.~$x\to y$ and $x$ right-covers $y$). We say that $x$ defeats $y$ in $\mathbf{P}$ according to the Fishburn VCCR if $x$ left-covers $y$ but $y$ does not left-cover $x$.
\item \textbf{Copeland} (\citealt{Copeland1951}). The Copeland score of a candidate $x$ in profile $\mathbf{P}$ is the number of candidates to whom $x$ is majority preferred in $\mathbf{P}$ minus the number of candidates who are majority preferred to $x$ in $\mathbf{P}$: $|\{z\in X(\mathbf{P})\mid x\to z\} | - | \{z\in X(\mathbf{P})\mid z\to x\}|$. Then for $x,y\in X(\mathbf{P})$, $x$ defeats $y$ in $\mathbf{P}$ if and only if the Copeland score of $x$ is greater than the Copeland score of $y$.
\item \textbf{Borda}. The Borda score of a candidate $x$ in profile $\mathbf{P}$ is calculated as follows: for every voter who ranks $x$ in last place, $x$ receives 0 points, and for every voter who ranks $x$ in second to last place, $x$ receives 1 point, and so on. That is, for every voter who ranks $x$ in $k$ places above last place, $x$ receives $k$ points. The sum of the points that $x$ receives is $x$'s Borda score in $\mathbf{P}$. Then for $x,y\in X(\mathbf{P})$, $x$ defeats $y$ in $\mathbf{P}$ if and only if the Borda score of $x$ is greater than the Borda score of $y$.
\item \textbf{Plurality}. The plurality score of a candidate $x$ in profile $\mathbf{P}$ is the number of voters who rank $x$ in first place. Then for $x,y\in X(\mathbf{P})$, $x$ defeats $y$ in $\mathbf{P}$ if and only if the plurality score of $x$ is greater than the plurality score of $y$.
\item \textbf{Instant Runoff} (\textbf{Hare}). Given a profile $\mathbf{P}$, define a sequence $\mathbf{P}_0,\dots,\mathbf{P}_n$ of profiles as follows. First, $\mathbf{P}_0=\mathbf{P}$. Second, given a profile $\mathbf{P}_k$ in the sequence, if all candidates in $\mathbf{P}_k$ have the same plurality score, set $n=k$ to end the sequence; otherwise, where $A_k$ is the set of candidates whose plurality score in $\mathbf{P}_k$ is above the lowest plurality score of a candidate in $\mathbf{P}_k$, let $\mathbf{P}_{k+1}$ be obtained from $\mathbf{P}_k$ by restricting the set of candidates to $A_k$, i.e., $\mathbf{P}_{k+1}= (\mathbf{P}_k)_{\mid A_k}$.\footnote{\label{IRVPUT}When there is more than one candidate with lowest plurality score, this definition of Instant Runoff, taken from \citealt[p.~7]{Taylor2008}, eliminates all such candidates. For discussion of an alternative ``parallel universe'' approach to dealing with ties for the lowest plurality score, see \citealt[\S~3]{Freeman2015}.} The Hare score of candidate $x$ in $\mathbf{P}$ is  the number of rounds of elimination that $x$ survives, i.e., the greatest $k$ such that $x\in A_k$. Then $x$ defeats $y$ in $\mathbf{P}$ if and only if the Hare score of $x$ is greater than the Hare score of $y$. 
\end{enumerate}
VCCRs 2-6 are all acyclic---but 1 is not, due to the possibility of majority cycles---and 3-6 are VSWFs. For axiomatic characterizations of the Copeland and Borda VSWFs, see \citealt{Rubinstein1980} for Copeland and \citealt{Nitzan1981} and \citealt{Mihara2017} for Borda.
\end{example}

VCCRs 1-4 all have an important property in common: their output depends only on the \textit{margins} between candidates in the given profile.

\begin{definition}\label{MarginDef} Let $\mathbf{P}$ be a profile and $x,y\in X(\mathbf{P})$. The \textit{margin of $x$ over $y$ in $\mathbf{P}$} is the number of voters who rank $x$ above $y$ in $\mathbf{P}$ minus the number of voters who rank $y$ above $x$ in $\mathbf{P}$.\footnote{Note that the margin of $x$ over $y$ is \textit{negative} when $y$ is majority preferred to $x$.} Let $Margin_\mathbf{P}(x,y)$  be the margin of $x$ over $y$ in $\mathbf{P}$.

The \textit{margin graph of $\mathbf{P}$}, $\mathcal{M}(\mathbf{P})$, is the directed graph with weighted edges whose set of nodes is $X(\mathbf{P})$ with an edge from $x$ to $y$ when $x$ is majority preferred to $y$, weighted by the margin of $x$ over $y$ in $\mathbf{P}$.
\end{definition}

\begin{example} For a profile $\mathbf{P}$ shown in anonymized form on the left, its margin graph $\mathcal{M}(\mathbf{P})$ is shown on the right:
\begin{center}
\begin{minipage}{2in}\begin{tabular}{ccc}
$4$ & $2$ & $3$   \\\hline
$a$ & $b$ &  $c$ \\
$b$ &  $c$ & $a$ \\
$c$ &  $a$ &  $b$ \\
\end{tabular}\end{minipage}\begin{minipage}{2in}\begin{tikzpicture}

\node[circle,draw, minimum width=0.25in] at (0,0) (a) {$a$}; 
\node[circle,draw,minimum width=0.25in] at (3,0) (c) {$c$}; 
\node[circle,draw,minimum width=0.25in] at (1.5,1.5) (b) {$b$}; 

\path[->,draw,thick] (b) to node[fill=white] {$3$} (c);
\path[->,draw,thick] (c) to node[fill=white] {$1$} (a);
\path[->,draw,thick] (a) to node[fill=white] {$5$} (b);

\end{tikzpicture}
\end{minipage}\end{center}
\end{example}

\noindent Clearly the edge relation in $\mathcal{M}(\mathbf{P})$ must be \textit{asymmetric}, since if $x$ is majority preferred to $y$, then $y$ is not majority preferred to $x$. Also note that if there is an even number of voters, then it may be that neither $x$ nor $y$ is majority preferred to the other, in which case there is no edge from $x$ to $y$ or from $y$ to $x$ in $\mathcal{M}(\mathbf{P})$. Thus, the underlying graph of $\mathcal{M}(\mathbf{P})$ is not necessarily a \textit{tournament}, which is a directed graph whose edge relation is asymmetric and \textit{connected}, i.e., if $x\neq y$, then there is an edge from $x$ to $y$ or an edge from $y$ to $x$. However, when the number of voters is odd, then the directed graph is a tournament.

Now the idea that the output of a VCCR depends only on margins can be formalized as follows.\footnote{\label{Fish1}Cf.~De Donder et al.'s \citeyearpar{DeDonder2000} notion of C1.5 functions. Note that Fishburn's \citeyearpar{Fishburn1977} C2 functions can use not only the \textit{difference} between the number of voters who prefer $x$ to $y$ and the number of voters who prefer $y$ to $x$ but also those two numbers themselves. The Pareto VCCR in Example \ref{ParetoReversal} is C2 but not margin based (not C1.5).}

\begin{definition}\label{MarginBasedDef} A VCCR $f$ is \textit{margin based} if for any profiles $\mathbf{P}$ and $\mathbf{P}'$, if $\mathcal{M}(\mathbf{P})=\mathcal{M}(\mathbf{P}')$, then $f(\mathbf{P})=f(\mathbf{P}')$.
\end{definition}

It is obvious that VCCRs 1-3 in Example \ref{OtherVCCRs} are margin based, but this is less obvious for Borda.

\begin{lemma}\label{BordaMargins} For any profile $\mathbf{P}$ and $x,y\in X(\mathbf{P})$, $x$ defeats $y$ according to the Borda VCCR if and only if the sum of the margins of $x$ over other all other candidates is greater than the sum of the margins of $y$ over all other candidates.\footnote{Remember that the margin of $x$ over $z$ is negative when $z$ is majority preferred to $x$.}
\end{lemma}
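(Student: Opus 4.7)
The plan is to express the Borda score as an affine function of the sum of margins, with a constant term that depends only on the number of voters and the number of candidates (not on the particular candidate). Once we have such an identity, the equivalence in the lemma is immediate, because subtracting the shared constant preserves the strict inequality.

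First I would rewrite the Borda score in the following convenient form. Let $n = |V(\mathbf{P})|$ and $m = |X(\mathbf{P})|$, and for any $x,z \in X(\mathbf{P})$ let $N(x,z)$ denote the number of voters $i$ such that $x\mathbf{P}_i z$. A voter $i$ ranks $x$ exactly $k$ places above last place iff exactly $k$ candidates $z \neq x$ satisfy $x\mathbf{P}_i z$. Summing over voters and swapping the order of summation,
\[
\mathrm{Borda}_\mathbf{P}(x) \;=\; \sum_{i \in V(\mathbf{P})} |\{z \in X(\mathbf{P})\setminus\{x\} : x\mathbf{P}_i z\}| \;=\; \sum_{z \neq x} N(x,z).
\]

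Next I would translate $N(x,z)$ into margins. Because each ballot is a strict linear order, every voter ranks either $x$ above $z$ or $z$ above $x$, so $N(x,z) + N(z,x) = n$, which together with the definition $Margin_{\mathbf{P}}(x,z) = N(x,z) - N(z,x)$ gives $N(x,z) = \tfrac{1}{2}\bigl(n + Margin_{\mathbf{P}}(x,z)\bigr)$. Substituting into the expression above,
\[
\mathrm{Borda}_\mathbf{P}(x) \;=\; \tfrac{1}{2}(m-1)n \;+\; \tfrac{1}{2}\sum_{z \neq x} Margin_{\mathbf{P}}(x,z).
\]
The constant $\tfrac{1}{2}(m-1)n$ is the same for every candidate, so $\mathrm{Borda}_\mathbf{P}(x) > \mathrm{Borda}_\mathbf{P}(y)$ iff $\sum_{z \neq x} Margin_\mathbf{P}(x,z) > \sum_{z \neq y} Margin_\mathbf{P}(y,z)$, which is exactly the claim.

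There is no real obstacle here; the only thing to be careful about is the double-counting/bookkeeping when swapping the sum over voters with the sum over candidates, and the use of strict linear orders to get $N(x,z) + N(z,x) = n$ exactly (with no ties in individual ballots). Once these are in place, the lemma is a one-line consequence of the identity above.
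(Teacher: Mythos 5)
Your proposal is correct, and it is the standard argument: writing $\mathrm{Borda}_\mathbf{P}(x)=\sum_{z\neq x}N(x,z)$ and using $N(x,z)=\tfrac{1}{2}\bigl(n+Margin_\mathbf{P}(x,z)\bigr)$ (valid because ballots are strict linear orders) shows the Borda score differs from half the margin sum by a constant independent of the candidate, from which the equivalence is immediate. The paper states this lemma without proof, so there is nothing to contrast with; your derivation is exactly the expected one.
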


Other examples of margin-based VCCRs include the following.

\begin{example}\label{BeatPathEx} $\,$
\begin{enumerate}
\item \textbf{Weighted Covering} (\citealt{Dutta1999}, \citealt{Fernandez2018}). Given a profile $\mathbf{P}$ and $x,y\in X(\mathbf{P})$, $x$ defeats $y$ in $\mathbf{P}$ if $x\to y$ and for all $z\in X(\mathbf{P})$, $Margin_\mathbf{P}(x,z)\geq Margin_\mathbf{P}(y,z)$ (or equivalently $Margin_\mathbf{P}(z,x)\leq Margin_\mathbf{P}(z,y)$).
\item \textbf{Beat Path} (\citealt{Schulze2011}). Given a profile $\mathbf{P}$ and $x,y\in X(\mathbf{P})$ a \textit{path from $x$ to $y$} is a sequence $z_1,\dots,z_n$ of candidates with $z_1=x$ and $z_n=y$ such that each candidate is majority preferred to the next candidate in the sequence. The \textit{strength} of a path is the smallest margin between consecutive candidates in the path. Then $x$ defeats $y$ in $\mathbf{P}$ according to the Beat Path VCCR if the strength of the strongest path from $x$ to $y$ is greater than the strength of the strongest path from $y$ to $x$.
\end{enumerate}
\end{example}

\begin{remark}\label{QualMarg} Within the family of margin-based VCCRs, we can make a useful three-way distinction.
\begin{enumerate}
\item A \textit{majority graph} is any directed graph $M$ whose edge relation is asymmetric. Given a profile $\mathbf{P}$, the \textit{majority graph of $\mathbf{P}$}, $M(\mathbf{P})$, is the directed graph whose set of nodes is $X(\mathbf{P})$ with an edge from $x$ to $y$ when $x$ is majority preferred to $y$ in $\mathbf{P}$. We say that a VCCR $f$ is \textit{majority based} if for any profiles $\mathbf{P}$ and $\mathbf{P}'$, if $M(\mathbf{P})=M(\mathbf{P}')$, then $f(\mathbf{P})=f(\mathbf{P}')$.\footnote{\label{Fish2}Cf.~Fishburn's \citeyearpar{Fishburn1977} C1 functions.} 
The Simple Majority, Covering, and Copeland VCCRs in Example \ref{OtherVCCRs} are majority based in this sense. Similarly one can define the class of majority based voting methods (as distinguished from VCCRs), which generalize \textit{tournament solutions} (\citealt{Laslier1997}, \citealt{Brandt2016}) from tournaments to majority graphs.

\item A \textit{qualitative margin graph} is a pair $\mathbb{M}=(M,\prec)$ where $M$ is a majority graph and $\prec$ is a strict weak order on the set of edges of $M$. The \textit{qualitative margin graph of $\mathbf{P}$} is the pair $\mathbb{M}(\mathbf{P})=(M(\mathbf{P}), \prec_\mathbf{P})$ such that for any edges $(a,b)$ and $(c,d)$ in $M(\mathbf{P})$, we have $(a,b)\prec_\mathbf{P}(c,d)$ if $Margin_\mathbf{P}(a,b)<Margin_\mathbf{P}(c,d)$. We say that a VCCR $f$ is \textit{qualitative-margin based} if for any profiles $\mathbf{P}$ and $\mathbf{P}'$, if $\mathbb{M}(\mathbf{P})=\mathbb{M}(\mathbf{P}')$, then $f(\mathbf{P})=f(\mathbf{P}')$.\footnote{In terms of the C1, C1.5, and C2 classifications in Footnotes \ref{Fish1} and \ref{Fish2},  methods that are qualitative-margin based could be called~C1.25. Another example of a VCCR that is qualitative-margin based is the following, which defeat rationalizes the Simpson-Kramer Minimax method (\citealt{Simpson1969}, \citealt{Kramer1977}): $x$ defeats $y$ if $x$'s largest majority loss is smaller than $y$'s largest majority loss.} The Weighted Covering and Beat Path VCCRs in Example \ref{BeatPathEx} are qualitative-margin based in this sense, as is the Split Cycle VCCR defined in Section \ref{SCsection}, but none of these VCCRs are majority based.

\item\label{QualMarg3} A \textit{margin graph} is a weighted directed graph $\mathcal{M}$ such that: the edge relation of the graph is asymmetric; either all weights of edges are even positive integers or all weights of edges are odd positive integers; and if there are two nodes with no edge between them, then all weights are even. We already defined for a profile $\mathbf{P}$ the \textit{margin graph of $\mathbf{P}$} in Definition \ref{MarginBasedDef}, as well as \textit{margin-based} VCCRs in Definition \ref{MarginBasedDef}. Note that the Borda VCCR in Example \ref{OtherVCCRs} is margin based but not qualitative-margin based.
\end{enumerate}
\end{remark}

Finally, a useful fact for the study of margin-based VCCRs is that any abstract margin graph as in Remark \ref{QualMarg}.\ref{QualMarg3} can be realized as the margin graph of a profile.

\begin{theorem}[\citealt{Debord1987}] For any margin graph $\mathcal{M}$, there is a profile $\mathbf{P}$ such that $\mathcal{M}$ is the margin graph of $\mathbf{P}$.
\end{theorem}

\noindent In light of Debord's Theorem, one can construct margin graphs at will, without deriving them from particular profiles, when experimenting with the operation of margin-based VCCRs.

This concludes our review of basic notions. In the next section we turn to our preferred VCCR.

\section{Split Cycle}\label{SCsection}

In \citealt{HP2020}, we studied a voting method that we call Split Cycle. Here we formulate the Split Cycle VCCR that defeat rationalizes the Split Cycle voting method (recall Definition \ref{DefeatRationalize}). We give two formulations in Definition \ref{WinByDef} and Lemma \ref{SplittingLem}, respectively. The first definition of Split Cycle formalizes the definition given in Section \ref{IntroDefeat}. For a profile $\mathbf{P}$, candidates $x,y\in X(\mathbf{P})$, and natural number $n$, say that $x$ \textit{wins by $n$ over $y$} if the margin of $x$ over $y$ in $\mathbf{P}$ is $n$ (recall Definition \ref{MarginDef}).

\begin{definition}\label{WinByDef}
Given a profile $\mathbf{P}$ and candidates $x,y\in X(\mathbf{P})$, $x$ defeats $y$ in $\mathbf{P}$ according to Split Cycle if $x$ wins by more than $n$ over $y$ for the smallest number $n$ such that there is no majority cycle containing $x$ and $y$ in which each candidate wins by more than $n$ over the next candidate in the cycle.\end{definition}

The basic idea is that when the electorate's majority preference relation is \textit{incoherent}, in the sense that there is a majority cycle, this raises the threshold required for one candidate $a$ in the cycle to defeat another $b$---but not infinitely. If we raise the threshold $n$ sufficiently, then there will be no incoherence involving $a$ and $b$ with respect to the higher threshold, i.e., no cycles in the \textit{win by more than $n$} relation that contain $a$ and $b$. If the margin of $a$ over $b$ is greater than this sufficiently large $n$,  Split Cycle says that $a$ defeats $b$. Note, crucially, that for some other pair of candidates $a'$ and $b'$, the threshold for $a'$ to defeat $b'$ may be different, if $a'$ and $b'$ are contained in different majority cycles than $a$ and $b$ are.

\begin{example}\label{WinByEx} Consider a profile $\mathbf{P}$ with the following margin graph:
\begin{center}
\begin{minipage}{2in}\begin{tikzpicture}

\node[circle,draw, minimum width=0.25in] at (0,0) (a) {$a$}; 
\node[circle,draw,minimum width=0.25in] at (3,0) (c) {$c$}; 
\node[circle,draw,minimum width=0.25in] at (1.5,1.5) (b) {$b$}; 

\node[circle,draw,minimum width=0.25in] at (1.5,-1.5) (d) {$d$}; 

\path[->,draw,thick] (a) to node[fill=white] {$5$} (b);
\path[->,draw,thick] (b) to node[fill=white] {$7$} (c);
\path[->,draw,thick] (c) to[pos=.7] node[fill=white] {$3$} (a);

\path[->,draw,thick] (c) to node[fill=white] {$3$} (d);
\path[->,draw,thick] (b) to[pos=.7] node[fill=white] {$3$} (d);
\path[->,draw,thick] (a) to node[fill=white] {$3$} (d);

\end{tikzpicture}
\end{minipage}\end{center}
The only majority cycle is $a\to b\to c\to a$. Note that each candidate wins by \textit{more than $2$} over the next candidate in the sequence. However, it is not the case that each candidate wins by \textit{more than $3$} over the next candidate in the sequence. Thus, a threshold of  \textit{win by more than $3$} splits the $a\to b\to c\to a$ cycle:
\begin{center}
\begin{minipage}{2in}\begin{tikzpicture}

\node[circle,draw, minimum width=0.25in] at (0,0) (a) {$a$}; 
\node[circle,draw,minimum width=0.25in] at (3,0) (c) {$c$}; 
\node[circle,draw,minimum width=0.25in] at (1.5,1.5) (b) {$b$}; 

\path[->,draw,thick] (a) to node[fill=white] {$5$} (b);
\path[->,draw,thick] (b) to node[fill=white] {$7$} (c);

\end{tikzpicture}
\end{minipage}\end{center}
Hence there is no incoherence involving $a$ and $b$ with respect to the \textit{win by more than $3$} relation. Then since $a$ wins by more than $3$ over $b$, Split Cycle says that $a$ defeats $b$. Similarly, since $b$ wins by more than $3$ over $c$, Split Cycle says that $b$ defeats $c$. However, since $c$ does not win by more than $3$ over $a$, Split Cycle says that $c$ does not defeat $a$. Crucially, though, since $c$ and $d$ are not involved in any cycles together, and $c$ wins by more than 0 over $d$, Split Cycle says that $c$ defeats $d$. The key point is that \textit{incoherence can be localized}: $c$ and $a$ belong to a cycle together, but $c$ and $d$ do not. By the same reasoning, $a$ defeats $d$, and $b$ defeats $d$. Thus, we obtain the following defeat relation:
\begin{center}
\begin{minipage}{2in}\begin{tikzpicture}

\node[circle,draw, minimum width=0.25in] at (0,0) (a) {$a$}; 
\node[circle,draw,minimum width=0.25in] at (3,0) (c) {$c$}; 
\node[circle,draw,minimum width=0.25in] at (1.5,1.5) (b) {$b$}; 

\node[circle,draw,minimum width=0.25in] at (1.5,-1.5) (d) {$d$}; 

\path[->,draw,thick] (a) to node[fill=white] {$D$} (b);
\path[->,draw,thick] (b) to node[fill=white] {$D$} (c);

\path[->,draw,thick] (c) to node[fill=white] {$D$} (d);
\path[->,draw,thick] (b) to[pos=.7] node[fill=white] {$D$} (d);
\path[->,draw,thick] (a) to node[fill=white] {$D$} (d);

\end{tikzpicture}
\end{minipage}\end{center}
Note that just as in a sporting tournament, it can happen that while team $a$ defeats team $b$ and team $b$ defeats team $c$, team $a$ does not defeat team $c$, the same phenomenon occurs in the defeat relation above. Finally, since $a$ is the only undefeated candidate, $a$ is the winner according to Split Cycle.
\end{example}

\begin{remark}Where $f$ is the Split Cycle VCCR as in Definition \ref{WinByDef}, the induced voting method $\overline{f}$, which picks as winners the undefeated candidates, is the Split Cycle voting method. As a voting method, Split Cycle is Condorcet consistent: if $x$ is majority preferred to every other candidate $y$---if $x$ is a Condorcet winner---then $x$ is the unique winner of the election. For if $x$ is the Condorcet winner, then there are no cycles involving $x$, so $x$ defeats all other candidates according to Split Cycle.\end{remark}

An equivalent definition of Split Cycle can be given in terms of the following concept.

\begin{definition} Let $\mathbf{P}$ be a profile and $\rho$ a majority cycle in $\mathbf{P}$. The \textit{splitting number of $\rho$ in $\mathbf{P}$} is the smallest margin between consecutive candidates in $\rho$. Let $Split\#_\mathbf{P}(\rho)$ be the splitting number of $\rho$ in $\mathbf{P}$.
\end{definition}

For example, the splitting number of the cycle $a,b,c,a$ in the profile in Example \ref{WinByEx} is 3. In \citealt{HP2020}, we took the following formulation of Split Cycle to be the official definition (for a proof of Lemma \ref{SplittingLem}, see Appendix \ref{Proofs}).

\begin{restatable}{lemma}{Reformulation}\label{SplittingLem} Let $\mathbf{P}$ be a profile and $x,y\in X(\mathbf{P})$. Then $x$ defeats $y$ in $\mathbf{P}$ according to Split Cycle if and only if $Margin_\mathbf{P}(x,y)>0$ and
\[Margin_\mathbf{P}(x,y)>Split\#_\mathbf{P}(\rho)\mbox{ for every majority cycle $\rho$ in }\mathbf{P}\mbox{ containing $x$ and $y$}.\]
\end{restatable}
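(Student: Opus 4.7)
The plan is to unpack both definitions side by side and show that they pick out the same threshold, handled via a short case split on whether $x$ and $y$ lie together on any majority cycle.

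Write $m = Margin_\mathbf{P}(x,y)$, and let $n^\ast$ denote the smallest natural number $n$ such that no majority cycle in $\mathbf{P}$ containing both $x$ and $y$ has every consecutive margin strictly greater than $n$ (as in Definition \ref{WinByDef}). The key observation I will exploit is the translation: ``no such cycle has every consecutive margin $>n$'' is logically equivalent to ``every such cycle has at least one consecutive margin $\le n$,'' which in turn is equivalent to $Split\#_\mathbf{P}(\rho)\le n$ for every majority cycle $\rho$ containing $x$ and $y$. So determining $n^\ast$ reduces to analyzing the finite set $S=\{Split\#_\mathbf{P}(\rho)\mid \rho$ is a majority cycle in $\mathbf{P}$ containing $x$ and $y\}$ (finite because $\mathbf{P}$ has finitely many candidates, hence finitely many simple cycles in $\to_\mathbf{P}$).

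First I would treat the vacuous case: if $S=\varnothing$, then the condition defining $n^\ast$ holds already at $n=0$, so $n^\ast=0$. Definition \ref{WinByDef} then says $x$ defeats $y$ iff $m>0$, which is exactly what Lemma \ref{SplittingLem} says in this case (its universal clause being vacuous). Next, if $S\neq\varnothing$, by the translation above, $n\ge\max S$ is necessary and sufficient for the condition defining $n^\ast$, so $n^\ast=\max S$. Hence Definition \ref{WinByDef} yields ``$x$ defeats $y$'' iff $m>\max S$, i.e., $m>Split\#_\mathbf{P}(\rho)$ for every majority cycle $\rho$ containing $x$ and $y$. To match Lemma \ref{SplittingLem} I still need to verify that $m>0$ is automatic here: every consecutive margin in a majority cycle is strictly positive (since each step uses a majority preference, which requires positive margin), so $Split\#_\mathbf{P}(\rho)\ge 1$ for all $\rho\in S$, and thus $m>\max S\ge 1$ implies $m>0$. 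This establishes the equivalence in both directions.

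No step is an obstacle in a technical sense; the only care required is the verbal translation between ``there is no cycle all of whose margins exceed $n$'' and ``every cycle has splitting number at most $n$,'' together with confirming that splitting numbers of majority cycles are positive so that the $Margin_\mathbf{P}(x,y)>0$ clause in Lemma \ref{SplittingLem} is redundant precisely when $S\neq\varnothing$ and needed when $S=\varnothing$. With those points in place the proof reduces to a few lines of unpacking definitions.
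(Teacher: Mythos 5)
Your proposal is correct and takes essentially the same route as the paper's own proof: both identify the threshold $n$ in Definition \ref{WinByDef} as the maximum of the splitting numbers of the majority cycles containing $x$ and $y$ (or $0$ when there are none), using the same translation between ``no such cycle has every consecutive margin $>n$'' and ``every such cycle has splitting number $\le n$.'' Your added remark that splitting numbers are at least $1$, so the clause $Margin_\mathbf{P}(x,y)>0$ is only doing work in the vacuous case, is a fine clarification but not a different method.
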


Thus, in Example \ref{WinByEx}, $a$ defeats $b$ because $Margin_\mathbf{P}(a,b)=5$, the only majority cycle is $a,b,c,a$, and its splitting number is $3$. Observe that since we are only comparing the sizes of margins, Split Cycle is qualitative-margin based in the sense of Remark \ref{QualMarg}.

\begin{example} It is important to note that two candidates may be contained in multiple majority cycles, as in the following margin graph, repeated three times with the majority cycles highlighted:
\begin{center}
\begin{minipage}{1.5in}\begin{tikzpicture}

\node[circle,draw, minimum width=0.25in] at (0,0) (a) {$a$}; 
\node[circle,draw,minimum width=0.25in] at (3,0) (c) {$c$}; 
\node[circle,draw,minimum width=0.25in] at (1.5,1.5) (b) {$b$}; 
\node[circle,draw,minimum width=0.25in] at (1.5,-1.5) (d) {$d$}; 

\path[->,draw,thick] (a) to (c);
\path[->,draw,thick,red] (d) to (b);
\path[->,draw,thick,red] (b) to node[fill=white] {$7$} (a);
\path[->,draw,thick] (c) to node[fill=white] {$5$} (b);
\path[->,draw,thick] (c) to node[fill=white] {$3$} (d);
\path[->,draw,thick,red] (a) to node[fill=white] {$1$} (d);

\node[fill=white] at (1.5,.5)  {{\color{red}$5$}}; 
\node[fill=white] at (2,0)  {$9$}; 

  \end{tikzpicture}
\end{minipage}\hspace{.25in}\begin{minipage}{1.5in}\begin{tikzpicture}

\node[circle,draw, minimum width=0.25in] at (0,0) (a) {$a$}; 
\node[circle,draw,minimum width=0.25in] at (3,0) (c) {$c$}; 
\node[circle,draw,minimum width=0.25in] at (1.5,1.5) (b) {$b$}; 
\node[circle,draw,minimum width=0.25in] at (1.5,-1.5) (d) {$d$}; 

\path[->,draw,thick,blue] (a) to (c);
\path[->,draw,thick] (d) to (b);
\path[->,draw,thick,blue] (b) to node[fill=white] {$7$} (a);
\path[->,draw,thick,blue] (c) to node[fill=white] {$5$} (b);
\path[->,draw,thick] (c) to node[fill=white] {$3$} (d);
\path[->,draw,thick] (a) to node[fill=white] {$1$} (d);

\node[fill=white] at (1.5,.5)  {$5$}; 
\node[fill=white] at (2,0)  {{\color{blue}$9$}}; 

  \end{tikzpicture}
\end{minipage}\hspace{.25in}\begin{minipage}{1.5in}\begin{tikzpicture}

\node[circle,draw, minimum width=0.25in] at (0,0) (a) {$a$}; 
\node[circle,draw,minimum width=0.25in] at (3,0) (c) {$c$}; 
\node[circle,draw,minimum width=0.25in] at (1.5,1.5) (b) {$b$}; 
\node[circle,draw,minimum width=0.25in] at (1.5,-1.5) (d) {$d$}; 

\path[->,draw,thick,medgreen] (a) to (c);
\path[->,draw,thick,medgreen] (d) to (b);
\path[->,draw,thick,medgreen] (b) to node[fill=white] {$7$} (a);
\path[->,draw,thick] (c) to node[fill=white] {$5$} (b);
\path[->,draw,thick,medgreen] (c) to node[fill=white] {$3$} (d);
\path[->,draw,thick] (a) to node[fill=white] {$1$} (d);

\node[fill=white] at (1.5,.5)  {{\color{medgreen}$5$}}; 
\node[fill=white] at (2,0)  {{\color{medgreen}$9$}}; 

  \end{tikzpicture}
\end{minipage}\end{center}
The splitting number of the cycle $a\to d \to b\to a$ is 1; the splitting number of the cycle $a\to c \to b\to a$ is 5; and the splitting number of the cycle $a\to c\to d\to b\to a$ is 3. Comparing the margins against these splitting numbers, one can calculate that the defeat relation is as follows:
\begin{center}
\begin{minipage}{1.5in}\begin{tikzpicture}

\node[circle,draw, minimum width=0.25in] at (0,0) (a) {$a$}; 
\node[circle,draw,minimum width=0.25in] at (3,0) (c) {$c$}; 
\node[circle,draw,minimum width=0.25in] at (1.5,1.5) (b) {$b$}; 
\node[circle,draw,minimum width=0.25in] at (1.5,-1.5) (d) {$d$}; 

\path[->,draw,thick] (a) to (c);
\path[->,draw,thick] (d) to (b);
\path[->,draw,thick] (b) to node[fill=white] {$D$} (a);

\node[fill=white] at (1.5,.5)  {$D$}; 
\node[fill=white] at (2,0)  {$D$}; 

  \end{tikzpicture}
  \end{minipage}
  \end{center}
Since $d$ is the only undefeated candidate, $d$ is the winner according to Split Cycle.\footnote{To contrast this result with that of another VCCR, note that $c$ covers $d$ (left and right covering are equivalent in this case), whereas $a$, $b$, and $c$ are uncovered, so according to the Covering VCCRs, $c$ defeats $d$, whereas none of $a$, $b$, or $c$ is defeated.}

Additional examples of determining the Split Cycle defeat relation will be given below (Example \ref{IIAExample}, Remark \ref{MaskinRemark}, and Example \ref{BordaExample}). For still more examples, see \citealt{HP2020}.
\end{example}

A useful fact, proved in \citealt{HP2020}, is that it suffices to only look at majority cycles in which $y$ directly follows $x$. We include the proof in Appendix \ref{Proofs} to keep the paper self-contained.

\begin{restatable}{lemma}{OnlySome}\label{OnlySomeCycles} Let $\mathbf{P}$ be a profile and $x,y\in X(\mathbf{P})$. Then $x$ defeats $y$ in $\mathbf{P}$ according to Split Cycle if and only if $Margin_\mathbf{P}(x,y)>0$ and
\[Margin_\mathbf{P}(x,y)>Split\#_\mathbf{P}(\rho)\mbox{ for every majority cycle $\rho$ in }\mathbf{P}\mbox{ of the form } x \rightarrow y\rightarrow z_1\rightarrow \dots\rightarrow z_n\rightarrow x.\]
\end{restatable}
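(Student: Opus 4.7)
The plan is to prove the biconditional by combining Lemma \ref{SplittingLem} (the splitting-number reformulation of Split Cycle defeat) with a ``shortcut'' construction: given an arbitrary majority cycle through $x$ and $y$, replace the portion from $x$ to $y$ by the single edge $x\to y$ to obtain a cycle of the special form, and argue that its splitting number bounds that of the original.

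The forward direction is immediate: if $x$ defeats $y$ according to Split Cycle, then by Lemma \ref{SplittingLem} we have $Margin_\mathbf{P}(x,y)>0$ and $Margin_\mathbf{P}(x,y)>Split\#_\mathbf{P}(\rho)$ for \emph{every} majority cycle $\rho$ containing $x$ and $y$, so the restricted quantification over cycles of the form $x\to y\to z_1\to\dots\to z_n\to x$ certainly holds.

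For the reverse direction, assume $Margin_\mathbf{P}(x,y)>0$ and that the margin bound holds for every cycle of the special form; let $\rho'$ be an arbitrary majority cycle containing $x$ and $y$. Rotating the cycle, write $\rho'$ as $x=w_0\to w_1\to\dots\to w_{j-1}\to w_j=y\to w_{j+1}\to\dots\to w_{k-1}\to w_k=x$ for some $1\le j<k$. If $j=1$, then $\rho'$ itself is of the special form and the bound holds by hypothesis. Otherwise $j\ge 2$, and I form the shortcut cycle
\[
C:\; x\to y\to w_{j+1}\to\dots\to w_{k-1}\to x.
\]
This is a legitimate simple cycle: the edge $x\to y$ exists because $Margin_\mathbf{P}(x,y)>0$; the vertices $x,y,w_{j+1},\dots,w_{k-1}$ are pairwise distinct (they are distinct in $\rho'$); and $w_{j+1}\neq x$, since $w_{j+1}=x$ would mean $y\to x$ in $\mathbf{P}$, contradicting $x\to y$ (the majority relation is asymmetric). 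Thus $C$ has the form $x\to y\to z_1\to\dots\to z_n\to x$.

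The hypothesis applied to $C$ gives $Margin_\mathbf{P}(x,y)>Split\#_\mathbf{P}(C)$. Since $Margin_\mathbf{P}(x,y)$ is itself one of the margins being minimized in the definition of $Split\#_\mathbf{P}(C)$, this strict inequality forces
\[
Split\#_\mathbf{P}(C)=\min\bigl(Margin_\mathbf{P}(y,w_{j+1}),Margin_\mathbf{P}(w_{j+1},w_{j+2}),\dots,Margin_\mathbf{P}(w_{k-1},x)\bigr).
\]
But every edge appearing in this minimum also appears in $\rho'$, so $Split\#_\mathbf{P}(\rho')\le Split\#_\mathbf{P}(C)$, and therefore $Margin_\mathbf{P}(x,y)>Split\#_\mathbf{P}(C)\ge Split\#_\mathbf{P}(\rho')$. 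Since $\rho'$ was arbitrary, Lemma \ref{SplittingLem} yields that $x$ defeats $y$ in $\mathbf{P}$ according to Split Cycle. The only subtle step, which I would take care to write out carefully, is verifying that the shortcut $C$ is indeed a simple cycle — specifically ruling out $w_{j+1}=x$ via asymmetry of the majority relation; everything else is bookkeeping.
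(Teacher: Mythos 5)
Your proof is correct and follows essentially the same route as the paper's: both reduce to Lemma \ref{SplittingLem} and handle an arbitrary cycle through $x$ and $y$ by replacing its $x$-to-$y$ portion with the direct edge $x\to y$, then noting that since $Margin_\mathbf{P}(x,y)$ exceeds the shortcut cycle's splitting number, that splitting number is attained on an edge shared with the original cycle, giving $Split\#_\mathbf{P}(\rho')\le Split\#_\mathbf{P}(C)<Margin_\mathbf{P}(x,y)$. The only cosmetic differences are that the paper first fixes a cycle of maximal splitting number (which your direct argument renders unnecessary) and that you spell out the simple-cycle check (ruling out $w_{j+1}=x$ via asymmetry) that the paper leaves implicit.
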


In \citealt{HP2020}, we show that Split Cycle---understood as a voting method, i.e., as $\overline{f}$ for the Split Cycle VCCR $f$---satisfies a number of desirable axioms for voting methods, and we systematically compare Split Cycle to other margin-based voting methods, including Beat Path and Ranked Pairs (\citealt{Tideman1987}). In particular, we show that Split Cycle is the only \textit{known} voting method that satisfies several axioms, including anti-spoiler axioms and axioms preventing the so-called Strong No Show Paradox. In the next two sections, we take a different approach: we characterize Split Cycle as a VCCR, rather than a voting method, and we characterize the Split Cycle VCCR relative to \textit{all} VCCRs.

\section{Axioms}\label{AxiomsSection}

In this section, we propose six axioms concerning when one candidate should defeat another in a democratic election involving two or more candidates. Four axiom are standard (Section \ref{StandardAx}); one is less well known but also from the previous literature (Section \ref{BalancedSection}); and the key axiom is new (Section \ref{CoherentIIASection}).

\subsection{Standard axioms}\label{StandardAx}

The first four axioms are ubiquitous in social choice and voting theory. The first axiom appears already in May's \citeyearpar{May1952} characterization of majority rule for two-candidate elections:
\begin{enumerate}
\item[A1.] Anonymity and Neutrality: if $x$ defeats $y$ in $\mathbf{P}$, and $\mathbf{P}'$ is obtained from $\mathbf{P}$ by swapping the ballots assigned to two voters, then $x$ still defeats $y$ in $\mathbf{P}'$ (Anonymity); and if $x$ defeats $y$ in $\mathbf{P}$, and $\mathbf{P}'$ is obtained from $\mathbf{P}$ by swapping $x$ and $y$ on each voter's ballot, then $y$ defeats $x$ in $\mathbf{P}'$ (Neutrality).\footnote{\label{Permutation}These versions of Anonymity and Neutrality stated in terms of the transposition of two ballots/candidates are equivalent to the usual versions stated in terms of a permutation of the ballots/candidates, since any permutation can be obtained by a sequence of transpositions. The usual version of, e.g., Neutrality states that if $\sigma$ is a permutation of $X$, and $\sigma \mathbf{P}$ is the profile obtained from $\mathbf{P}$ by setting $x \sigma\mathbf{P}_iy$ if and only if $\sigma(x)\mathbf{P}_i\sigma(y)$, then $x$ defeats $y$ in $\sigma\mathbf{P}$ if and only if $\sigma(x)$ defeats $\sigma(y)$ in $\mathbf{P}$.}
\end{enumerate}
It is clear that all VCCRs defined so far in this paper satisfy Anonymity and Neutrality.

The second axiom is definitive of the problem of choosing winners that we aim to solve:
\begin{enumerate}
\item[A2.] Availability: for every $\mathbf{P}$, there is some undefeated candidate in $\mathbf{P}$.
\end{enumerate}
To say that in some profiles all candidates are defeated and hence excluded from further consideration---so no candidate is available to become the ultimate winner---is to give up on solving the problem. Unlike the Simple Majority VCCR (Example \ref{OtherVCCRs}), Split Cycle satisfies Availability.

\begin{proposition}\label{NontrivialDefeat} Split Cycle satisfies Availability.
\end{proposition}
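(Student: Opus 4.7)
The plan is to show that the Split Cycle defeat relation is acyclic on each profile; since $X(\mathbf{P})$ is finite and nonempty, acyclicity of an asymmetric relation immediately yields at least one undefeated candidate, which is exactly Availability.

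To prove acyclicity, I would argue by contradiction. Suppose there were a cycle $x_1, x_2, \dots, x_n, x_1$ in the Split Cycle defeat relation on some profile $\mathbf{P}$. By Lemma \ref{SplittingLem}, each defeat edge $x_i$ defeats $x_{i+1}$ requires $Margin_\mathbf{P}(x_i, x_{i+1}) > 0$, so $x_i \to_\mathbf{P} x_{i+1}$ for each consecutive pair (indices mod $n$). This means the sequence $\rho = x_1, x_2, \dots, x_n, x_1$ is itself a majority cycle in $\mathbf{P}$.

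Let $m = Split\#_\mathbf{P}(\rho)$ be the splitting number of $\rho$, i.e., the smallest margin $Margin_\mathbf{P}(x_i, x_{i+1})$ among the edges of $\rho$. Pick an index $j$ achieving this minimum, so that $Margin_\mathbf{P}(x_j, x_{j+1}) = m$. Now, since $x_j$ defeats $x_{j+1}$ according to Split Cycle and $\rho$ is a majority cycle containing both $x_j$ and $x_{j+1}$, Lemma \ref{SplittingLem} gives
\[ Margin_\mathbf{P}(x_j, x_{j+1}) > Split\#_\mathbf{P}(\rho) = m, \]
contradicting $Margin_\mathbf{P}(x_j, x_{j+1}) = m$. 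Hence no such cycle exists, the Split Cycle defeat relation is acyclic, and Availability follows.

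There is no substantive obstacle here; the whole argument rides on Lemma \ref{SplittingLem}, which converts the defeat condition into a margin-versus-splitting-number inequality that is self-defeating along any would-be defeat cycle. The only thing worth double-checking in the write-up is the routine point that an asymmetric acyclic relation on a finite nonempty set must have a source, which is the content of the claim that some candidate is undefeated.
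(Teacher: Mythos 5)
Your proof is correct and follows essentially the same route as the paper: both arguments reduce Availability to the impossibility of a defeat cycle, and both derive the contradiction from Lemma \ref{SplittingLem}, since a defeat cycle would be a majority cycle in which every margin strictly exceeds the cycle's splitting number, i.e., exceeds its own minimum. The only cosmetic difference is that you prove acyclicity first and then invoke finiteness to get an undefeated candidate, whereas the paper starts from ``every candidate is defeated'' and extracts the cycle directly (and notes afterwards that the same argument yields acyclicity).
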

\begin{proof} Suppose there is a profile $\mathbf{P}$ in which every candidate is defeated by some other. Since $X(\mathbf{P})$ is finite, it follows that there is a sequence $\rho=x_1,\dots,x_n$ of candidates with $x_1=x_n$ such that each candidate defeats the next candidate in the sequence. It follows by Lemma \ref{SplittingLem} that $\rho$ is a majority cycle in which  the margin of each candidate over the next is greater than the splitting number of $\rho$, which is impossible. \end{proof}

\noindent Note that Availability is strictly weaker than the assumption that a VCCR is acyclic. There being no defeat cycles implies that some candidate is undefeated (given that the set of candidates in a profile is finite), but some candidate being undefeated does not imply that there are no defeat cycles (e.g., $a$ could be undefeated while there is a defeat cycle involving $b$, $c$, and $d$). Nonetheless, the proof of Proposition \ref{NontrivialDefeat} (starting in the second sentence) shows that Split Cycle is an acyclic VCCR as well.

For the third axiom, given any profile $\mathbf{P}$ and natural number $m$, the profile $m\mathbf{P}$ is obtained from $\mathbf{P}$ by replacing each voter by $m$ copies of that voter. For example, if $\mathbf{P}$ has three voters $i,j,k$, then $2\mathbf{P}$ has six voters $i_1,i_2,j_1,j_2,k_1,k_2$ such that the ballots of $i_1$ and $i_2$ in $2\mathbf{P}$ are the same as those of $i$ in $\mathbf{P}$, etc.
\begin{enumerate}
\item[A3.] (Upward) Homogeneity: for every $\mathbf{P}$, if $x$ defeat $y$ in $\mathbf{P}$, then $x$ defeats $y$ in $2\mathbf{P}$.
\end{enumerate}
Homogeneity is usually stated as the condition that for  any $m\geq 1$, $x$ defeats $y$ in $\mathbf{P}$ \textit{if and only if} $x$ defeats $y$ in $m\mathbf{P}$. As Smith \citeyearpar[p.~1029]{Smith1973} remarks, ``Homogeneity seems an extremely natural requirement; if each voter suddenly splits into $m$ voters, each of whom has the same preferences as the original, it would be hard to imagine how the ``collective preference'' would change.'' Nonetheless, we use the weaker version stated above since it is sufficient for our main result. Almost all standard voting procedures satisfy (the usual statement of) Homogeneity.\footnote{One exception is the Dodgson voting procedure (see \citealt{Fishburn1977}, \citealt{Brandt2009}).} That Split Cycle satisfies Homogeneity follows from the fact that Split Cycle is qualitative-margin based as in Remark \ref{QualMarg}, and $\mathbf{P}$ and $2\mathbf{P}$ have the same qualitative margin graphs.

The fourth axiom is one of the most widely discussed principles in voting theory. The term `Monotonicity' is used for a number of different conditions, but our formulation is equivalent (for profiles of linear ballots) to Arrow's \citeyearpar{Arrow1963} axiom of Positive Association of Social and Individual Values:\footnote{Positive Association states that if $x$ defeats $y$ in $\mathbf{P}$ according to $f$, and $\mathbf{P}'$ is a profile such that for all $x',y'\in X\setminus\{x\}$, (i)~$\mathbf{P}_{\mid \{x',y'\}}=\mathbf{P}'_{\mid \{x',y'\}}$, (ii) for all $i\in V$, $x \mathbf{P}_i y'$ implies $x \mathbf{P}'_i y'$, and (iii) for all $i\in V$, $y'\mathbf{P}'_ix$ implies $y'\mathbf{P}_ix$, then $x$ defeats $y$ in $\mathbf{P}'$ according to $f$.}
\begin{enumerate}
\item[A4.] Monotonicity (resp.~Monotonicity for two-candidate profiles): if $x$ defeats $y$ in a profile (resp.~two-candidate profile) $\mathbf{P}$, and $\mathbf{P}'$ is obtained from $\mathbf{P}$ by some voter $i$ moving $x$ above the candidate that $i$ ranked immediately above $x$ in $\mathbf{P}$, then $x$ defeats $y$ in $\mathbf{P}'$.
\end{enumerate}
One might argue that Monotonicity should hold for any number of candidates, but Monotonicity for two-candidate profiles is sufficient for the proof of our main result. This is noteworthy because the Instant Runoff VCCR in Example \ref{OtherVCCRs} (as well as VCCRs based on other standard voting procedures, e.g., Baldwin, Coombs, and Nanson) does not satisfy Monotonicity for arbitrary profiles (see  \citealt{Felsenthal2017}) but does for two-candidate profiles. All other VCCRs defined above satisfy Monotonicity for all profiles.

The axioms proposed so far imply the principle of Majority Defeat for two-candidates profiles. The proof is essentially part of the proof of May's \citeyearpar{May1952} characterization of majority rule.

\begin{lemma}\label{SpecialMaj} If $f$ satisfies Anonymity, Neutrality, and Monotonicity with respect to two-candidate profiles, then $f$ satisfies Special Majority Defeat: for any two-candidate profile $\mathbf{P}$,  $x$ defeats $y$  in $\mathbf{P}$ according to $f$ only if $x$ is majority preferred to $y$. 
\end{lemma}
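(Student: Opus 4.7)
The plan is to run a contradiction argument in the spirit of May's \citeyearpar{May1952} proof. Assume that $x$ defeats $y$ in a two-candidate profile $\mathbf{P}$, and suppose for contradiction that $x$ is not majority preferred to $y$, so $n_{yx} \geq n_{xy}$, where $n_{xy}$ denotes the number of voters in $\mathbf{P}$ ranking $x$ above $y$. The goal is to manufacture a single profile in which both $x$ defeats $y$ and $y$ defeats $x$, contradicting the asymmetry built into the definition of a VCCR.

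The three axioms then play complementary roles. First, let $\mathbf{P}^\ast$ be obtained from $\mathbf{P}$ by transposing $x$ and $y$ on every ballot; by Neutrality, $y$ defeats $x$ in $\mathbf{P}^\ast$, and the anonymized form of $\mathbf{P}^\ast$ is $(n_{yx}, n_{xy})$ (with the two columns swapped relative to $\mathbf{P}$). Second, starting from $\mathbf{P}$, I would apply Monotonicity for two-candidate profiles exactly $n_{yx} - n_{xy}$ times: in a two-candidate setting the only possible ``move $x$ above its immediate predecessor'' step is to switch some voter's ballot from $yx$ to $xy$, each such step preserves $x$'s defeat of $y$, and the hypothesis $n_{yx} \geq n_{xy}$ guarantees that enough $yx$-voters are available. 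The resulting profile $\mathbf{P}''$ has anonymized form $(n_{yx}, n_{xy})$, matching that of $\mathbf{P}^\ast$, and $x$ still defeats $y$ in $\mathbf{P}''$. Third, since $\mathbf{P}''$ and $\mathbf{P}^\ast$ differ only by a permutation of voters, Anonymity (applied as a sequence of two-voter transpositions, as noted in the paper's footnote on A1) yields that $y$ defeats $x$ in $\mathbf{P}''$. Combined with $x$ defeats $y$ in $\mathbf{P}''$, this violates the asymmetry of $f(\mathbf{P}'')$.

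The only real subtlety is bookkeeping: verifying that the chosen number of Monotonicity moves is both feasible and exactly the number needed to match $\mathbf{P}^\ast$'s anonymized form. This is routine once one notes that each move transfers one unit from the $yx$-count to the $xy$-count. The degenerate equality case $n_{yx} = n_{xy}$ requires zero Monotonicity moves and reduces to a pure Neutrality-plus-Anonymity argument; the strict case $n_{yx} > n_{xy}$ is the one that genuinely uses Monotonicity. I do not anticipate any other obstacle.
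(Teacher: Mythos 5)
Your proposal is correct and follows essentially the same route as the paper's proof: use Neutrality to obtain the reversed profile, Monotonicity to flip $yx$-ballots to $xy$ until the anonymized forms match, Anonymity to transfer the defeat, and then contradict the asymmetry of the defeat relation. The only cosmetic difference is that you treat the tie case $n_{yx}=n_{xy}$ uniformly (zero Monotonicity moves), whereas the paper disposes of it separately via Anonymity, Neutrality, and asymmetry before running the flipping argument.
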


\begin{proof} Suppose $x$ defeats $y$ in $\mathbf{P}$. It follows by Anonymity, Neutrality, and the asymmetry of defeat that the number of voters who rank $x$ above $y$ is not \textit{equal} to the number who rank $y$ above $x$. Now we claim that $x$ is majority preferred to $y$. Suppose instead that $y$ is majority preferred to $x$ by a margin of $m$. Flip voters with $y\mathbf{P}_ix$ to $x\mathbf{P}_i'y$ until we obtain a profile $\mathbf{P}'$ in which $x$ is majority preferred to $y$ by a margin of $m$. Since $x$ defeats $y$ in $\mathbf{P}$, $x$ still defeats $y$ in $\mathbf{P}'$ by Monotonicity. But $\mathbf{P}'$ can also be obtained from $\mathbf{P}$ by the voter and candidate swaps described in the statements of Anonymity and Neutrality. Thus, since $x$ defeats $y$ in $\mathbf{P}$,  $y$ defeats $x$ in $\mathbf{P}'$. Hence in $\mathbf{P}'$, $x$ defeats $y$ and $y$ defeats $x$, contradicting the asymmetry of defeat.\end{proof}

\begin{remark}\label{PRremark} Monotonicity is weaker than May's \citeyearpar{May1952} condition of Positive Responsiveness, which in addition requires that if $y$ does not defeat $x$ in $\mathbf{P}$, then changing a single voter from $y\mathbf{P}_ix$ to $x\mathbf{P}'_iy$ results in $x$ defeating $y$ in $\mathbf{P}'$. We find imposing Positive Responsiveness in general, i.e., for elections with any number of candidates, much too strong,\footnote{Cf.~Woeginger \citeyearpar{Woeginger2003}, who notes that while Anonymity and Neutrality ``are natural and fairly weak, the positive responsiveness axiom is usually criticized for being too strong'' (p.~89).} so we prefer to motivate majority rule in two-candidate elections using axioms that are more plausible for any number of candidates, as in Proposition \ref{Majority} below.\end{remark}

\subsection{Neutral Reversal}\label{BalancedSection}

Like the axiom of Homogeneity,  the next axiom is a variable-voter axiom. Say that two voters $i$ and $j$ have \textit{reversed ballots} in a profile $\mathbf{Q}$ if for all $x,y\in X(\mathbf{Q})$, we have $x\mathbf{Q}_iy$ if and only if $y\mathbf{Q}_j x$. For example, if $i$ has  $abcd$ and $j$ has $dcba$, then $i$ and $j$ have reversed ballots. Adding a pair of voters with reversed ballots to a profile does not change the margins between any candidates. A natural thought is that such voters balance each other out, so adding such a pair to an election should not change the defeat relations between candidates. This leads to what Saari \citeyearpar{Saari2003} calls the Neutral Reversal Requirement.

\begin{itemize}
\item[A5.] Neutral Reversal: if $\mathbf{P}'$ is obtained  from $\mathbf{P}$ by adding two voters with reversed ballots, then $x$ defeats $y$ in $\mathbf{P}$ if and only if $x$ defeats $y$ in $\mathbf{P}'$.
\end{itemize}
Not only Split Cycle but all other margin-based VCCRs (recall Example \ref{OtherVCCRs}) satisfy Neutral Reversal. However, Neutral Reversal is weaker than the assumption that a VCCR is margin based.

\begin{example} We define the Positive/Negative VCCR (cf.~\citealt{Lapresta2010} and \citealt{Heckelman2020}) as follows. In a profile $\mathbf{P}$, a candidate $x$ receives $1$ point for every voter who ranks $x$ first and $-1$ point for every voter who ranks $x$ last. The score of $x$ in $\mathbf{P}$ is the sum of the points $x$ receives from voters. Then $x$ defeats $y$ in $\mathbf{P}$ if the score of $x$ is greater than the score of $y$. This Positive/Negative VCCR satisfies Neutral Reversal, as a pair of reversed ballots adds a net score of 0 to each candidate. Yet it is easy to construct profiles with the same margin graphs that have different defeat relations.\end{example}

Not all common VCCRs satisfy Neutral Reversal, as Examples \ref{PluralityReversal}-\ref{ParetoReversal} below show. To analyze violations of Neutral Reversal, we distinguish its two directions:
\begin{itemize}
\item Upward Neutral Reversal: if $\mathbf{P}'$ is obtained  from $\mathbf{P}$ by adding two voters with reversed ballots, then if $x$ defeats $y$ in $\mathbf{P}$, $x$ defeats $y$ in $\mathbf{P}'$.
\item Downward Neutral Reversal: if $\mathbf{P}'$ is obtained  from $\mathbf{P}$ by adding two voters with reversed ballots, then if $x$ defeats $y$ in $\mathbf{P}'$, $x$ defeats $y$ in $\mathbf{P}$.
\end{itemize}

\begin{example}\label{PluralityReversal} Consider the Plurality VCCR from Example \ref{OtherVCCRs}. Let $\mathbf{P}$ be any profile for candidates $a,b,c$. Adding to $\mathbf{P}$ a pair of voters with the reversed ballots $abc$ and $cba$ to obtain a profile $\mathbf{P}'$ increases the plurality scores of $a$ and $c$ by one but does not increase the plurality score of $b$. From here it is easy to see that the Plurality VCCR violates both Upward and Downward Neutral Reversal.\end{example}

\begin{example}\label{ParetoReversal} The Pareto VCCR $f$ is defined as follows: for any profile $\mathbf{P}$ and $x,y\in X(\mathbf{P})$, $x$ defeats $y$ in $\mathbf{P}$ if and only if all voters in $\mathbf{P}$ rank $x$ above $y$. Clearly adding two voters with reversed ballots to a profile in which $x$ is unanimously ranked above $y$ results in a profile in which $x$ is not unanimously ranked above $y$. Thus, the Pareto VCCR violates Upward Neutral Reversal. However, it trivially satisfies Downward Neutral Reversal, because if $\mathbf{P}'$ has a pair of voters with reversed ballots, then no candidates defeats any other in $\mathbf{P}'$.\end{example}

The Pareto VCCR seems reasonable for certain special purposes, e.g., in a small club, unanimity may be valued and often possible. However, in elections where disagreement is expected, the Pareto VCCR would be of little help in narrowing down the range of potential winners, as so few candidates would defeat others. Of course, we agree that it is a \textit{sufficient} condition for $x$ to defeat $y$ that $x$ is unanimously ranked above $y$. A VCCR $f$ is said to satisfy the Pareto axiom if for all profiles $\mathbf{P}$ and $x,y\in X(\mathbf{P})$, if $x\mathbf{P}_iy$ for all $i\in V(\mathbf{P})$, then $x$ defeats $y$ in $\mathbf{P}$ according to $f$. Split Cycle clearly satisfies the Pareto axiom, since there cannot be a majority cycle all of whose margins are equal to the total number of voters. Moreover, we can use Pareto and Upward Neutral Reversal to derive the converse of  Special Majority Defeat (Lemma \ref{SpecialMaj}), thereby obtaining a characterization of majority rule for two-candidate elections that differs from May's \citeyearpar{May1952} famous characterization (cf.~\citealt{Asan2002}, \citealt{Woeginger2003}, and \citealt[Cor.~12]{Llamazares2006}).

\begin{proposition}\label{Majority} For any VCCR $f$, the following are equivalent:
\begin{enumerate}
\item\label{Majority1} $f$ coincides with majority rule on two-candidate profiles;
\item\label{Majority2} $f$ satisfies the following axioms with respect to two-candidate profiles: Anonymity, Neutrality, Monotonicity, Pareto, and Upward Neutral Reversal.
\end{enumerate}
\end{proposition}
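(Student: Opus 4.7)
The forward direction $(\ref{Majority1}) \Rightarrow (\ref{Majority2})$ I would dispose of quickly: since majority rule on two-candidate profiles depends only on the margin (the difference between the number of $xy$-ballots and $yx$-ballots), it manifestly satisfies Anonymity, Neutrality, and Monotonicity; it satisfies Pareto because if every voter ranks $x$ above $y$ then the margin equals the number of voters, which is positive; and it satisfies Upward Neutral Reversal (indeed both directions) because a reversed pair contributes $+1$ and $-1$ to the margin, leaving it unchanged.

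For the converse $(\ref{Majority2}) \Rightarrow (\ref{Majority1})$, I would fix a two-candidate profile $\mathbf{P}$ over $\{x,y\}$ with $a$ voters ranking $x$ above $y$ and $b$ voters ranking $y$ above $x$. One half of the biconditional is already in hand: Lemma \ref{SpecialMaj}, which uses only Anonymity, Neutrality, and Monotonicity, tells me that if $x$ defeats $y$ in $\mathbf{P}$ according to $f$, then $a > b$. So the entire remaining task is the other half: if $a > b$, then $x$ defeats $y$ in $\mathbf{P}$. This is where Pareto and Upward Neutral Reversal do the real work, and it is also exactly the sort of completeness claim one cannot extract from the May-style axioms alone, which is why something extra (Pareto together with a reversal axiom) is needed.

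To establish that half, I would set $n = a - b > 0$ and begin with the profile $\mathbf{P}_0$ consisting of $n$ voters all ranking $x$ above $y$; by Pareto, $x$ defeats $y$ in $\mathbf{P}_0$. I would then form a chain $\mathbf{P}_0, \mathbf{P}_1, \dots, \mathbf{P}_b$ in which $\mathbf{P}_{k+1}$ is obtained from $\mathbf{P}_k$ by adding one pair of voters with reversed ballots $xy$ and $yx$; by $b$ successive applications of Upward Neutral Reversal, $x$ still defeats $y$ in $\mathbf{P}_b$. The terminal profile $\mathbf{P}_b$ contains $n + b = a$ voters with ballot $xy$ and $b$ voters with ballot $yx$, matching the anonymized form of $\mathbf{P}$. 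Anonymity (whose transposition formulation generates all ballot permutations, cf.\ Footnote \ref{Permutation}) then transports the defeat of $y$ by $x$ from $\mathbf{P}_b$ to $\mathbf{P}$. The only subtlety I anticipate is the bookkeeping needed to arrange voter labels along the chain so that $\mathbf{P}_b$ and $\mathbf{P}$ share a voter set up to permutation, which is straightforward in the variable-voter framework and is not a real obstacle.
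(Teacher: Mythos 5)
Your proof is correct and essentially the paper's: one direction via Lemma \ref{SpecialMaj}, the other via Pareto on a unanimous profile followed by repeated Upward Neutral Reversal. The only (cosmetic) difference is that the paper obtains $\mathbf{P}_0$ by removing reversed pairs from $\mathbf{P}$ itself and then adds back exactly those voters, so that the final profile is literally $\mathbf{P}$ and no appeal to Anonymity is needed; your bottom-up construction works too, provided you do the bookkeeping you flag---drawing the voters of $\mathbf{P}_0$ and the added pairs from $V(\mathbf{P})$---since Anonymity as stated only permutes ballots within a fixed voter set and cannot by itself bridge profiles with different voter sets that merely share an anonymized form.
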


\begin{proof} The implication from 1 to 2 is easy to check. From 2 to 1, we already proved in Lemma \ref{SpecialMaj} that if $f$ satisfies Anonymity, Neutrality, and Monotonicity for two-candidate profiles, then in such a profile, $x$ defeats $y$ only if $x$ is majority preferred to $y$. We now use Pareto and Upward Neutral Reversal to show that if $x$ is majority preferred to $y$, then $x$ defeats $y$. For suppose $\mathbf{P}$ is a two-candidate profile in which $x$ is majority preferred to $y$. Consecutively remove pairs of voters $(i,j)$ with $x\mathbf{P}_iy$ and $y\mathbf{P}_jx$ until we obtain a profile $\mathbf{P}_0$ in which all voters rank $x$ over $y$.  By Pareto, $x$ defeats $y$ in $\mathbf{P}_0$. Then by repeated application of Upward Neutral Reversal, adding back the removed pairs of voters, $x$ defeats $y$ in $\mathbf{P}$.
\end{proof}

\noindent We prefer this characterization of majority rule to that of May \citeyearpar{May1952} for the reason given in Remark \ref{PRremark}.

\subsection{Coherent IIA and The Fallacy of IIA}\label{CoherentIIASection}

Suppose $x$ defeats $y$ in a profile $\mathbf{P}$, and a profile $\mathbf{P}'$ is exactly like $\mathbf{P}$ with respect to how every voter ranks $x$ vs.~$y$. Should it follow that $x$ defeats $y$ in $\mathbf{P}'$? Arrow's \citeyearpar{Arrow1963} famous axiom of the Independence of Irrelevant Alternatives (IIA) says `yes' (see Section \ref{FixedImposs}). But we say `no' if $\mathbf{P}'$ is more incoherent than $\mathbf{P}$, in terms of cycles. If $\mathbf{P}'$ is sufficiently incoherent, we may need to suspend judgment on many defeat relations that we could coherently accept in $\mathbf{P}$. To overlook this point is to commit what we call The Fallacy of IIA.\footnote{We take this criticism of IIA to differ from some other criticisms of IIA, such as those in \citealt[Ch.~6]{Mackie2003}.}

Although there is a perfectly reasonable notion of the \textit{advantage} of  $x$ over  $y$ that only depends on how voters rank $x$ vs.~$y$, whether that intrinsic advantage is sufficient for $x$ to \textit{defeat} $y$ may depend on a standard that takes into account the whole election, e.g., that takes into account whether the electorate is incoherent with respect to a set of candidates including $x,y$ (see \citealt{HK2020b} for a formalization of the advantage-standard idea). That standards may be \textit{context dependent} should be no surprise: just as whether a person counts as ``tall'' depends on who else is being assessed for tallness in the context of our judgment, whether one candidate's performance against another counts as ``a defeat'' depends on which other pairwise candidate performances are also being assessed as potential defeats in the context of our judgment.\footnote{For a related proposal in the setting of judgment aggregation to set supermajority thresholds in a local, context-sensitive way, see \citealt{Cariani2016}.}

\begin{example}\label{IIAExample} In the profiles $\mathbf{P}$ and $\mathbf{P}'$ below, we have $\mathbf{P}_{\mid \{a,b\}}=\mathbf{P}'_{\mid \{a,b\}}$. In the context of the perfectly coherent profile $\mathbf{P}$, the margin of $n$ for $a$ over $b$ should be sufficient for $a$ to defeat $b$. But in the context of the incoherent profile $\mathbf{P}'$, it is not sufficient: no one can be judged to defeat anyone else (this follows from Anonymity, Neutrality, and Availability). Thus, this is a counterexample to IIA.

\begin{center}
$\mathbf{P}$\qquad
\begin{minipage}{2in}\begin{tabular}{ccc}
$n$ & $n$ & $n$   \\\hline
$\boldsymbol{a}$ & $\boldsymbol{b}$ &  $c$ \\
$\boldsymbol{b}$ &  $\boldsymbol{a}$ & $\boldsymbol{a}$ \\
$c$ &  $c$ &  $\boldsymbol{b}$ \\
\end{tabular}\end{minipage}\begin{minipage}{2in}\begin{tikzpicture}

\node[circle,draw, minimum width=0.25in] at (0,0) (a) {$a$}; 
\node[circle,draw,minimum width=0.25in] at (3,0) (c) {$c$}; 
\node[circle,draw,minimum width=0.25in] at (1.5,1.5) (b) {$b$}; 

\path[->,draw,thick] (a) to node[fill=white] {$n$} (b);
\path[->,draw,thick] (b) to node[fill=white] {$n$} (c);
\path[->,draw,very thick] (a) to node[fill=white] {$n$} (c);

\end{tikzpicture}
\end{minipage}\end{center}

\begin{center}
$\mathbf{P}'$\qquad
\begin{minipage}{2in}\begin{tabular}{ccc}
$n$ & $n$ & $n$   \\\hline
$\boldsymbol{a}$ & $\boldsymbol{b}$ &  $c$ \\
$\boldsymbol{b}$ &  $c$ & $\boldsymbol{a}$ \\
$c$ &  $\boldsymbol{a}$ &  $\boldsymbol{b}$ \\
\end{tabular}\end{minipage}\begin{minipage}{2in}\begin{tikzpicture}

\node[circle,draw, minimum width=0.25in] at (0,0) (a) {$a$}; 
\node[circle,draw,minimum width=0.25in] at (3,0) (c) {$c$}; 
\node[circle,draw,minimum width=0.25in] at (1.5,1.5) (b) {$b$}; 

\path[->,draw,thick] (a) to node[fill=white] {$n$} (b);
\path[->,draw,thick] (b) to node[fill=white] {$n$} (c);
\path[->,draw,very thick] (c) to node[fill=white] {$n$} (a);

\end{tikzpicture}
\end{minipage}\end{center}
However, it is not as if the standard for defeat in every case of a majority cycle is unattainable. In the profile $\mathbf{Q}$ below, we believe that the advantage of $a$ over $b$ is sufficient for $a$ to defeat $b$:

\begin{center}
$\mathbf{Q}$\qquad
\begin{minipage}{2in}\begin{tabular}{ccc}
$4$ & $2$ & $3$   \\\hline
$a$ & $b$ &  $c$ \\
$b$ &  $c$ & $a$ \\
$c$ &  $a$ &  $b$ \\
\end{tabular}\end{minipage}\begin{minipage}{2in}\begin{tikzpicture}

\node[circle,draw, minimum width=0.25in] at (0,0) (a) {$a$}; 
\node[circle,draw,minimum width=0.25in] at (3,0) (c) {$c$}; 
\node[circle,draw,minimum width=0.25in] at (1.5,1.5) (b) {$b$}; 

\path[->,draw,thick] (b) to node[fill=white] {$3$} (c);
\path[->,draw,thick] (c) to node[fill=white] {$1$} (a);
\path[->,draw,thick] (a) to node[fill=white] {$5$} (b);

\end{tikzpicture}
\end{minipage}\end{center}
According to Split Cycle, the standard for $a$ to defeat $b$ in a profile, which the margin of $a$ over $b$ must surpass, is the maximum of the splitting numbers of the cycles containing $a$ and $b$. Since the splitting number of the cycle in the profile $\mathbf{Q}$ is 1, the margin of $a$ over $b$ surpasses the standard, so $a$ defeats $b$.\end{example}

\begin{remark} In an illuminating result, Patty and Penn \citeyearpar{Patty2014} prove that Arrow's IIA is equivalent to the condition of \textit{unilateral flip independence}, which states that if two profiles are alike except that one voter flips one pair of adjacent candidates on her ballot, then the defeat relations for the two profiles can differ at most on the flipped candidates. They write that this theorem ``demonstrates a fundamental basis of the normative appeal of IIA'' (p.~52) (cf.~\citealt[p.~155]{Patty2019}).\footnote{Patty and Penn \citeyearpar{Patty2019} do not think that IIA is compelling for voting procedures in elections (see their Section 3.1), but they do find it compelling in contexts of multicriterial decision making.} However, observe that for the profiles $\mathbf{P}$ and $\mathbf{P}'$ in Example \ref{IIAExample}, if $n=1$, then a single voter (the middle voter) flipping adjacent candidates on her ballot ($ac$ to $ca$) takes us from the coherent profile $\mathbf{P}$, in which there is no difficulty in judging that $a$ defeats $b$, to the incoherent profile $\mathbf{P}'$, in which no one can be judged to defeat anyone else. Hence unilateral flip independence makes the same mistake as IIA in ignoring how context can affect the standard for defeat.\end{remark}

\begin{remark}\label{MaskinRemark} Maskin \citeyearpar{Maskin2020} proposes a weakening of IIA called Modified IIA, which states that if profiles $\mathbf{P}$ and $\mathbf{P}'$ are alike in how every voter ranks $x$ vs.~$y$, and for each voter $i$ and candidate $z$, $i$ ranks $z$ in between $x$ and $y$ in $\mathbf{P}$ if and only if $i$ ranks $z$ in between $x$ and $y$ in $\mathbf{P}'$, then $x$ defeats $y$ in $\mathbf{P}$ if and only if $x$ defeats $y$ in $\mathbf{P}'$. Saari \citeyearpar{Saari1994,Saari1995,Saari1998} proposed a stronger axiom, though still weaker than IIA, called Intensity IIA: if profiles $\mathbf{P}$ and $\mathbf{P}'$ are alike in how every voter ranks $x$ vs.~$y$, and for each voter $i$, the number of candidates whom $i$ ranks in between $x$ and $y$ in $\mathbf{P}$ is the same as the number of candidates whom $i$ ranks in between $x$ and $y$ in $\mathbf{P}'$, then $x$ defeats $y$ in $\mathbf{P}$ if and only if $x$ defeats $y$ in $\mathbf{P}'$. Modified IIA and Intensity IIA are problematic for the same reason that IIA is, only we now need four candidates to see why. In the profiles $\mathbf{P}$ and $\mathbf{P}'$ below, we have $\mathbf{P}_{\mid \{a,b\}}=\mathbf{P}'_{\mid \{a,b\}}$, and for each voter $i$ and candidate $z$, $i$ ranks $z$ in between $a$ and $b$ in $\mathbf{P}$ if and only if $i$ ranks $z$ in between $a$ and $b$ in $\mathbf{P}'$. In the context of the perfectly coherent profile $\mathbf{P}$, the margin of $2n$ for $a$ over $b$ should be sufficient for $a$ to defeat $b$. But in the context of the incoherent profile $\mathbf{P}'$, it is not sufficient: no one can be judged to defeat anyone else (this follows from Anonymity, Neutrality, and Availability). Thus, this is a counterexample to Modified IIA and Intensity~IIA.
\begin{center}
$\mathbf{P}$\qquad
\begin{minipage}{2in}\begin{tabular}{cccc}
$n$ & $n$ & $n$ & $n$  \\\hline
$\boldsymbol{a}$ & $\boldsymbol{b}$ &  $\boldsymbol{a}$ & $\boldsymbol{a}$ \\
$\boldsymbol{b}$ &  $c$ & $\boldsymbol{b}$ & $\boldsymbol{b}$ \\
$c$ &  $d$ &  $c$ & $d$ \\
$d$ &  $\boldsymbol{a}$ & $d$ & $c$
\end{tabular}\end{minipage}\begin{minipage}{2in}\begin{tikzpicture}

\node[circle,draw, minimum width=0.25in] at (0,2) (a) {$a$}; 
\node[circle,draw,minimum width=0.25in] at (3,2) (b) {$b$};
\node[circle,draw,minimum width=0.25in] at (3,0) (c) {$c$}; 
 \node[circle,draw,minimum width=0.25in] at (0,0) (d) {$d$}; 
 
\path[->,draw,thick] (a) to node[fill=white] {$2n$} (b);
\path[->,draw,thick] (b) to node[fill=white] {$4n$} (c);
\path[->,draw,thick] (c) to node[fill=white] {$2n$} (d);
\path[->,draw,very thick] (a) to node[fill=white] {$2n$} (d);

\path[->,draw,thick] (a) to node[fill=white,pos=0.75] {$2n$} (c);
\path[->,draw,thick] (b) to node[fill=white,pos=0.75] {$4n$} (d);

\end{tikzpicture}
\end{minipage}\end{center}

\begin{center}
$\mathbf{P}'$\qquad
\begin{minipage}{2in}\begin{tabular}{cccc}
$n$ & $n$ & $n$ & $n$  \\\hline
$\boldsymbol{a}$ & $\boldsymbol{b}$ &  $c$ & $d$ \\
$\boldsymbol{b}$ &  $c$ & $d$ & $\boldsymbol{a}$ \\
$c$ &  $d$ &  $\boldsymbol{a}$ & $\boldsymbol{b}$ \\
$d$ &  $\boldsymbol{a}$ & $\boldsymbol{b}$ & $c$
\end{tabular}\end{minipage}\begin{minipage}{2in}\vspace{.1in}\begin{tikzpicture}

\node[circle,draw, minimum width=0.25in] at (0,2) (a) {$a$}; 
\node[circle,draw,minimum width=0.25in] at (3,2) (b) {$b$};
\node[circle,draw,minimum width=0.25in] at (3,0) (c) {$c$}; 
 \node[circle,draw,minimum width=0.25in] at (0,0) (d) {$d$}; 
 
\path[->,draw,thick] (a) to node[fill=white] {$2n$} (b);
\path[->,draw,thick] (b) to node[fill=white] {$2n$} (c);
\path[->,draw,thick] (c) to node[fill=white] {$2n$} (d);
\path[->,draw,very thick] (d) to node[fill=white] {$2n$} (a);

\end{tikzpicture}
\end{minipage}\end{center}
Maskin \citeyearpar{Maskin2020} suggests the benefit of Modified IIA is that it rules out vote-splitting, which he illustrates using spoiler effects in Plurality voting as in Examples \ref{BushGore} and \ref{Trump}. However, Modified IIA is neither necessary nor sufficient for a voting procedure to have good anti-spoiler properties. Split Cycle does not satisfy Modified IIA---it correctly says that $a$ defeats $b$ in $\mathbf{P}$ but not $\mathbf{P}'$ above---yet Split Cycle satisfies strong anti-spoiler properties: not only Independence of Clones (\citealt{Tideman1987}), as shown in \citealt{HP2020}, but also a condition of Immunity to Spoilers, as shown below. On the other hand, Borda  satisfies Modified IIA yet satisfies neither Independence of Clones nor Immunity to Spoilers, as shown in Example \ref{BorderSpoiler} below.\end{remark}

Avoiding The Fallacy of IIA does not mean abandoning the idea behind IIA entirely.\footnote{For an intellectual history of the ideas behind IIA, going back to Condorcet and Daunou, see \citealt{McLean1995b}.} We need only depart from its local evaluation of $x$ vs.~$y$ when increasing incoherence demands that we be more conservative in locking in relations of defeat.  If there is no increase in incoherence from profile $\mathbf{P}$ to $\mathbf{P}'$, then if the intrinsic advantage of $x$ over $y$ is sufficient for $x$ to defeat $y$ in $\mathbf{P}$, we think it should still be sufficient for $x$ to defeat $y$ in $\mathbf{P}'$.  Moreover, a clearly sufficient condition for there to be no increase in incoherence from $\mathbf{P}$ to $\mathbf{P}'$ is the following: the margin graph of $\mathbf{P}'$ is obtained from that of $\mathbf{P}$ by deleting zero or more candidates other than $x$ and $y$ and deleting or reducing the margins on zero or more edges not connecting $x$ and $y$.\footnote{By allowing for the deletion of candidates, Coherent IIA is a weakening of a variable-candidate version of IIA that we call VIIA, defined in Section \ref{IIAVCCR}.} For such deletions or reductions can only \textit{reduce} incoherence. For example, in the profile $\mathbf{Q}$ in Example \ref{IIAExample}, deleting candidate $c$ or deleting or reducing the margins on the $c\to a$ or $b\to c$ edges only reduces incoherence, so $a$'s defeat of $b$ should be preserved. Thus, we arrive at our proposed axiom of Coherent IIA.

\begin{itemize}
\item[A6.] Coherent IIA: if $x$ defeats $y$ in $\mathbf{P}$, and $\mathbf{P}'$ is a profile such that $\mathbf{P}_{\mid \{x,y\}}=\mathbf{P}'_{\mid \{x,y\}}$ and the margin graph of $\mathbf{P}'$ is obtained from that of $\mathbf{P}$ by deleting zero or more candidates other than $x$ and $y$ and deleting or reducing the margins on zero or more edges not connecting $x$ and $y$, then $x$ still defeats $y$ in $\mathbf{P}'$.
\end{itemize}
In Section \ref{BaigentSection}, we show that Coherent IIA implies Weak IIA (\citealt{Baigent1987}): if two profiles are exactly alike with respect to how every voter ranks $x$ vs.~$y$, it cannot be that in one profile $x$ defeats $y$ while in the other profile $y$ defeats $x$. At most, a defeat that holds in one can be withdrawn in the other.

\begin{proposition}\label{CoherentProp} Split Cycle satisfies Coherent IIA.
\end{proposition}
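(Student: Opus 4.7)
The plan is to reduce the statement to a direct comparison of splitting numbers using Lemma \ref{SplittingLem} (or, more conveniently, Lemma \ref{OnlySomeCycles}). Suppose $x$ defeats $y$ in $\mathbf{P}$ according to Split Cycle, and let $\mathbf{P}'$ be a profile satisfying the hypotheses of Coherent IIA. Because $\mathbf{P}_{\mid\{x,y\}}=\mathbf{P}'_{\mid\{x,y\}}$, the margin of $x$ over $y$ is preserved: $Margin_{\mathbf{P}'}(x,y)=Margin_{\mathbf{P}}(x,y)>0$. It remains to verify that $Margin_{\mathbf{P}'}(x,y)>Split\#_{\mathbf{P}'}(\rho')$ for every majority cycle $\rho'$ in $\mathbf{P}'$ of the form $x\to y\to z_1\to\cdots\to z_n\to x$.

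The key observation I would make is a ``lifting lemma'' for cycles: any such cycle $\rho'$ in $\mathbf{P}'$ is also a majority cycle in $\mathbf{P}$ on the same sequence of candidates, and each edge of the cycle has margin in $\mathbf{P}$ at least as large as its margin in $\mathbf{P}'$. This is where the specific form of the allowed transformation from $\mathcal{M}(\mathbf{P})$ to $\mathcal{M}(\mathbf{P}')$ does all the work: since the latter is obtained from the former by (i) possibly deleting candidates other than $x,y$ and (ii) possibly deleting or \emph{reducing} margins on edges not incident to the $xy$-edge, no edge of $\mathcal{M}(\mathbf{P}')$ is absent from (or reversed with respect to) $\mathcal{M}(\mathbf{P})$, and only the $x\to y$ edge is guaranteed to retain its weight. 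In particular the candidates $z_1,\ldots,z_n$ survive to $\mathbf{P}'$ so they were not deleted, and each consecutive majority preference in $\rho'$ was already a majority preference in $\mathbf{P}$ with weight at least as large.

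Writing $\rho$ for the corresponding cycle in $\mathbf{P}$, the $x\to y$ edge contributes the same weight in both profiles while every other edge of the cycle can only have a larger (or equal) margin in $\mathbf{P}$, so $Split\#_{\mathbf{P}}(\rho)\ge Split\#_{\mathbf{P}'}(\rho')$. Since $x$ defeats $y$ in $\mathbf{P}$, Lemma \ref{OnlySomeCycles} gives $Margin_{\mathbf{P}}(x,y)>Split\#_{\mathbf{P}}(\rho)$, and combining this chain of inequalities yields
\[
Margin_{\mathbf{P}'}(x,y)\;=\;Margin_{\mathbf{P}}(x,y)\;>\;Split\#_{\mathbf{P}}(\rho)\;\ge\;Split\#_{\mathbf{P}'}(\rho').
\]
Hence $Margin_{\mathbf{P}'}(x,y)>Split\#_{\mathbf{P}'}(\rho')$ for every relevant cycle $\rho'$ in $\mathbf{P}'$, and together with $Margin_{\mathbf{P}'}(x,y)>0$ this gives, via Lemma \ref{OnlySomeCycles}, that $x$ defeats $y$ in $\mathbf{P}'$.

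I do not expect any real obstacle here; the proof is essentially a bookkeeping argument built directly on the margin-based reformulation in Lemma \ref{SplittingLem}. The only subtle point worth stating carefully is the lifting step in the middle paragraph: one must check that edge deletions and margin reductions on edges not connecting $x$ and $y$ can never create new cycles, reverse existing ones, or shrink the $x\to y$ edge, which is exactly the content of the allowed transformation. Once this is spelled out, the inequality chain is immediate.
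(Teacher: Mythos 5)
Your proof is correct and follows essentially the same route as the paper's: both reduce the claim via the margin-based reformulation (Lemma \ref{SplittingLem}, equivalently Lemma \ref{OnlySomeCycles}), observe that $Margin_{\mathbf{P}'}(x,y)=Margin_{\mathbf{P}}(x,y)$, and lift every relevant cycle in $\mathbf{P}'$ back to a cycle in $\mathbf{P}$ whose splitting number is at least as large, yielding the same inequality chain. The only cosmetic difference is that you invoke Lemma \ref{OnlySomeCycles} to restrict to cycles in which $y$ immediately follows $x$, whereas the paper works with all cycles containing $x$ and $y$ via Lemma \ref{SplittingLem}; this changes nothing essential.
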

\begin{proof}Suppose $x$ defeats $y$ in $\mathbf{P}$, so by Lemma \ref{SplittingLem}, $Margin_\mathbf{P}(x,y)>Split\#_\mathbf{P}(\rho)$ for every  majority cycle $\rho$ in $\mathbf{P}$ containing $x$ and $y$. Since $\mathbf{P}_{\mid \{x,y\}}=\mathbf{P}'_{\mid \{x,y\}}$,  $Margin_\mathbf{P}(x,y)=Margin_{\mathbf{P}'}(x,y)$. Since $\mathcal{M}(\mathbf{P}')$ is obtained from $\mathcal{M}(\mathbf{P})$ by deleting zero or more candidates other than $x$ and $y$ and zero or more edges not connecting $x$ and $y$, every majority cycle $\rho$ in $\mathbf{P}'$ containing $x$ and $y$ is already a majority cycle in $\mathbf{P}$ containing $x$ and $y$, and as no margins have increased from $\mathbf{P}$ to $\mathbf{P}'$,  $Split\#_{\mathbf{P}}(\rho)\geq Split\#_{\mathbf{P}'}(\rho)$. It follows that $Margin_{\mathbf{P}'}(x,y)>Split\#_{\mathbf{P}'}(\rho)$ for every majority cycle $\rho$ in $\mathbf{P}'$ containing $x$ and $y$. Thus, $x$ defeats $y$ in~$\mathbf{P}'$ by Lemma \ref{SplittingLem}.\end{proof}
 
Unlike our proposed axioms in previous sections, Coherent IIA significantly cuts down the space of allowable VCCRs, ruling out all the VCCRs in Examples \ref{OtherVCCRs} and \ref{BeatPathEx} except for Simple Majority. For example, it rules out the Borda VCCR as follows.

\begin{example}\label{BordaExample} To see that Borda fails Coherent IIA, consider the following profiles $\mathbf{P}$  and $\mathbf{P}'$:

  \begin{center}
$\mathbf{P}$\qquad
 \begin{minipage}{2in}\begin{tabular}{ccc}
 $1$ & $1$ & $2$    \\\hline
$\boldsymbol{x}$ & $\boldsymbol{y}$ & $\boldsymbol{y}$    \\
$a$ &  $\boldsymbol{x}$ & $\boldsymbol{x}$   \\
$b$ &  $a$ & $c$  \\
$c$ &  $b$ & $b$  \\
$\boldsymbol{y}$ &  $c$ & $a$ \\
\end{tabular}
\end{minipage} \begin{minipage}{2in}\begin{tikzpicture}

\node[circle,draw, minimum width=0.25in] at (0,1) (x) {$x$}; 
\node[circle,draw,minimum width=0.25in] at (0,-1) (a) {$a$}; 
\node[circle,draw,minimum width=0.25in] at (3,1) (y) {$y$}; 
\node[circle,draw,minimum width=0.25in] at (3,-1) (c) {$c$}; 

\node[circle,draw,minimum width=0.25in] at (1.5,-1) (b) {$b$}; 

\path[->,draw,thick] (y) to node[fill=white] {$2$} (x);

\path[->,draw,thick] (x) to node[fill=white] {$4$}  (a);
\path[->,draw,thick] (x) to node[fill=white] {$4$} (b);
\path[->,draw,thick] (x) to[pos=0.3] node[fill=white] {$4$} (c);

\path[->,draw,thick] (y) to[pos=0.3] node[fill=white] {$2$} (a);
\path[->,draw,thick] (y) to node[fill=white] {$2$} (b);
\path[->,draw,thick] (y) to node[fill=white] {$2$} (c);

 \end{tikzpicture}
\end{minipage}

\end{center}

\begin{center}
$\mathbf{P}'$\qquad
 \begin{minipage}{2in}\begin{tabular}{ccc}
 $1$ & $1$ & $2$     \\\hline
$a$ & $\boldsymbol{y}$ & $\boldsymbol{y}$     \\
$b$ &  $a$ & $\boldsymbol{x}$   \\
$c$ &  $b$ & $c$   \\
$\boldsymbol{x}$ &  $c$ & $b$  \\
$\boldsymbol{y}$ &  $\boldsymbol{x}$ & $a$  \\
\end{tabular}\end{minipage}  \begin{minipage}{2in}\begin{tikzpicture}

\node[circle,draw, minimum width=0.25in] at (0,1) (x) {$x$}; 
\node[circle,draw,minimum width=0.25in] at (0,-1) (a) {$a$}; 
\node[circle,draw,minimum width=0.25in] at (3,1) (y) {$y$}; 
\node[circle,draw,minimum width=0.25in] at (3,-1) (c) {$c$}; 

\node[circle,draw,minimum width=0.25in] at (1.5,-1) (b) {$b$}; 

\path[->,draw,thick] (y) to node[fill=white] {$2$} (x);

\path[->,draw,thick] (y) to[pos=0.3] node[fill=white] {$2$} (a);
\path[->,draw,thick] (y) to node[fill=white] {$2$} (b);
\path[->,draw,thick] (y) to node[fill=white] {$2$} (c);

 \end{tikzpicture}
\end{minipage}
\end{center}
According to the Borda VCCR, $x$ defeats $y$ in $\mathbf{P}$: despite the fact that only one person prefers $x$ to $y$, whereas \textit{three} prefer $y$ to $x$, the proponent of Borda ascribes some significance to the fact that the first voter places $a,b,c$ between $x$ and $y$ (perhaps strategically, of course). Now although $\mathbf{P}_{\mid \{x,y\}}=\mathbf{P}'_{\mid \{x,y\}}$ and the margin graph of $\mathbf{P}'$ is obtained from that of $\mathbf{P}$ by deleting some edges not connecting $x$ and $y$, Borda changes the verdict for $\mathbf{P}'$ and says that $y$ defeats $x$ in $\mathbf{P}'$. Thus, Borda violates Coherent IIA. The mistake, in our view, was to judge that $x$ defeats $y$ in $\mathbf{P}$ in the first place. According to Split Cycle, by contrast, $y$ defeats $x$ in $\mathbf{P}$ because $y$ is majority preferred to $x$ and there are no majority cycles in $\mathbf{P}$.\end{example}

Example \ref{BordaExample} shows that Borda fails to satisfy the principle of Majority Defeat from Section \ref{IntroDefeat}: if $x$ defeats $y$ in $\mathbf{P}$, then $x$ is majority preferred to $y$ in $\mathbf{P}$. 

\begin{lemma}\label{MajDefeatLem} Anonymity, Neutrality,  Monotonicity (for two-candidate profiles), and Coherent IIA together imply Majority Defeat.
\end{lemma}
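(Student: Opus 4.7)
The plan is to reduce a general defeat situation to a two-candidate profile via Coherent IIA, and then invoke Lemma \ref{SpecialMaj} (which the excerpt already establishes from Anonymity, Neutrality, and Monotonicity for two-candidate profiles).

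Concretely, suppose $f$ satisfies the four axioms and that $x$ defeats $y$ in some profile $\mathbf{P}$. I would consider the two-candidate profile $\mathbf{P}' := \mathbf{P}_{\mid\{x,y\}}$ and verify the hypothesis of Coherent IIA: trivially $\mathbf{P}_{\mid\{x,y\}} = \mathbf{P}'_{\mid\{x,y\}}$, and the margin graph $\mathcal{M}(\mathbf{P}')$ has node set $\{x,y\}$ with the single edge between $x$ and $y$ carrying exactly the margin it has in $\mathcal{M}(\mathbf{P})$ (since the restriction preserves each voter's ordering of the pair $\{x,y\}$). Hence $\mathcal{M}(\mathbf{P}')$ is obtained from $\mathcal{M}(\mathbf{P})$ by deleting every candidate other than $x$ and $y$ and changing no edge connecting $x$ and $y$. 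Coherent IIA then yields that $x$ defeats $y$ in $\mathbf{P}'$.

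Now $\mathbf{P}'$ is a two-candidate profile, so by Lemma \ref{SpecialMaj} applied to $f$ (using Anonymity, Neutrality, and Monotonicity for two-candidate profiles), $x$ is majority preferred to $y$ in $\mathbf{P}'$. Finally, because each voter's ranking of $x$ versus $y$ in $\mathbf{P}'$ coincides with her ranking of $x$ versus $y$ in $\mathbf{P}$, $x$ is majority preferred to $y$ in $\mathbf{P}$ as well, which is precisely Majority Defeat.

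There is essentially no obstacle here beyond checking that Coherent IIA is correctly applicable to the move from $\mathbf{P}$ down to $\mathbf{P}_{\mid\{x,y\}}$; the force of the argument is that Coherent IIA lets us ``project out'' all other candidates (which can only reduce incoherence), after which the two-candidate consequence of the May-style axioms, already packaged in Lemma \ref{SpecialMaj}, does the rest. So the proof should be short.
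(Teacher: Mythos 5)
Your proposal is correct and is essentially identical to the paper's own proof: apply Coherent IIA to pass from $\mathbf{P}$ to $\mathbf{P}_{\mid\{x,y\}}$, invoke Lemma \ref{SpecialMaj} there, and note that the majority preference transfers back to $\mathbf{P}$ since the restriction preserves how each voter ranks $x$ versus $y$. The extra detail you give in checking the hypothesis of Coherent IIA is accurate but not a different route.
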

\begin{proof} Suppose $x$ defeats $y$ in $\mathbf{P}$. Then by Coherent IIA, $x$ defeats $y$ in $\mathbf{P}_{\mid \{x,y\}}$. Given Anonymity, Neutrality, and Monotonicity, it follows by Lemma \ref{SpecialMaj} that $x$ is majority preferred to $y$ in  $\mathbf{P}_{\mid \{x,y\}}$ and hence in~$\mathbf{P}$.\end{proof}

\begin{remark}\label{MajCon} Whenever an election has a Condorcet winner, Majority Defeat implies that the Condorcet winner is undefeated, but it does not imply that the Condorcet winner is the \textit{only} undefeated candidate. Thus, Majority Defeat does not by itself imply Condorcet consistency.\end{remark}

Finally, we will show that Coherent IIA (together with Anonymity, Neutrality, and Monotonicity) rules out the kind of spoiler effects shown in Examples \ref{BushGore}-\ref{Trump}. For this, we must consider an election with and without a potential spoiler. Given a profile $\mathbf{P}$ and $b\in X(\mathbf{P})$, let $\mathbf{P}_{-b}$ be the profile obtained from $\mathbf{P}$ by deleting $b$ from all ballots.\footnote{I.e., $\mathbf{P}_{-b}=\mathbf{P}_{\mid X(\mathbf{P})\setminus\{b\}}$, using the notation of Definition \ref{Restriction}.} In \citealt{HP2020}, we define the following  axiom:
\begin{itemize}
\item Immunity to Spoilers: if $a$ is undefeated in $\mathbf{P}_{-b}$, and $a$ is majority preferred to $b$ in $\mathbf{P}$, and $b$ is defeated in $\mathbf{P}$, then $a$ is still undefeated in $\mathbf{P}$.
\end{itemize}

\noindent Examples \ref{BushGore} and \ref{Trump} show how Plurality voting can violate Immunity to Spoilers. In the first example, assume Gore would win in the two-candidate profile $\mathbf{P}_{-\mathrm{Nader}}$ and that Gore is majority preferred to Nader in the full election $\mathbf{P}$. Then since Nader is defeated in $\mathbf{P}$, Immunity to Spoilers requires that Nader not spoil the election for Gore, i.e., that Gore is still a winner in $\mathbf{P}$. The analysis of Example \ref{Trump} is similar, making some assumptions about voters' rankings of Cruz, Kasich, and Rubio.  Example \ref{Burlington} shows how Instant Runoff voting can violate Immunity to Spoilers. Montroll wins in the two-candidate profile $\mathbf{P}_{-\mathrm{Wright}}$, and Montroll is majority preferred to Wright in the full election $\mathbf{P}$. Then since Wright is defeated  in $\mathbf{P}$, Immunity to Spoilers requires that Wright not spoil the election for Montroll, i.e., that Montroll is still a winner in $\mathbf{P}$.

To see how Borda violates Immunity to Spoilers, consider the following.

\begin{example}\label{BorderSpoiler} Let $\mathbf{P}_{-b}$ and $\mathbf{P}$ be the following profiles:
\begin{center}
$\mathbf{P}_{-b}$\quad
\begin{tabular}{cc}
$2$ & $3$    \\\hline
$c$ & $a$ \\
$a$ & $c$ \\
\end{tabular}\quad \begin{tikzpicture}

\node[circle,draw, minimum width=0.25in] at (0,0) (a) {$a$}; 
\node[circle,draw,minimum width=0.25in] at (2,0) (c) {$c$}; 
\path[->,draw,thick] (a) to node[fill=white] {$1$} (c);

\end{tikzpicture}\qquad\qquad $\mathbf{P}$\quad \begin{tabular}{cc}
$2$ & $3$   \\\hline
$c$ & $a$  \\
$b$ &  $c$  \\
$a$ &  $b$ \\
\end{tabular}\quad
\begin{tikzpicture}

\node[circle,draw, minimum width=0.25in] at (0,0) (a) {$a$}; 
\node[circle,draw,minimum width=0.25in] at (2,0) (c) {$c$}; 
\node[circle,draw,minimum width=0.25in] at (4,0) (b) {$b$}; 
\path[->,draw,thick] (a) to node[fill=white] {$1$} (c);
\path[->,draw,thick] (c) to node[fill=white] {$5$} (b);
\path[->,draw,thick,bend left] (a) to node[fill=white] {$1$} (b);
\end{tikzpicture}
\end{center}

\noindent According to Borda, $a$ defeats $c$ in $\mathbf{P}_{-b}$. Note that $a$ is majority preferred to $b$ in $\mathbf{P}$, and $b$ is defeated in $\mathbf{P}$ by both $a$ and $c$ according to Borda. But the addition of the loser $b$ spoils the election for $a$, as $c$ defeats $a$ in $\mathbf{P}$ according to Borda. This violates Immunity to Spoilers. It is also violates Independence of Clones (\citealt{Tideman1987}), as $b$ is a clone of $c$ (no candidates appear in between $b$ and $c$ on any voter's ballot). Finally, $c$ defeating $a$ in $\mathbf{P}$ but not in $\mathbf{P}_{-b}$ is another violation of Coherent IIA (see the proof of Proposition \ref{StrongStability}).\end{example}

In fact, Coherent IIA (together with the other mentioned axioms) implies an even stronger anti-spoiler axiom from \citealt{HP2020}:
\begin{itemize}
\item Stability for Winners (resp.~Strong Stability for Winners): if $a$ in undefeated in $\mathbf{P}_{-b}$, and $a$ is majority preferred to $b$ in $\mathbf{P}$ (resp.~$b$ is not majority preferred to $a$ in $\mathbf{P}$), then $a$ is still undefeated in $\mathbf{P}$.
\end{itemize}

\begin{proposition}\label{StrongStability} Anonymity, Neutrality,  Monotonicity (for two-candidate profiles), and Coherent IIA together imply Strong Stability for Winners.
\end{proposition}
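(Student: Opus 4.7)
The plan is to argue by contradiction: assume that $a$ is defeated in $\mathbf{P}$ by some candidate $c$, and derive a contradiction by splitting on whether $c = b$ or $c \neq b$.

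First I would handle the case $c = b$. Here $b$ defeats $a$ in $\mathbf{P}$, so by Lemma \ref{MajDefeatLem} (Majority Defeat, which follows from Anonymity, Neutrality, Monotonicity for two-candidate profiles, and Coherent IIA), $b$ must be majority preferred to $a$ in $\mathbf{P}$. This directly contradicts the hypothesis that $b$ is not majority preferred to $a$ in $\mathbf{P}$.

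For the case $c \neq b$, the idea is to apply Coherent IIA to transport $c$'s defeat of $a$ from $\mathbf{P}$ to $\mathbf{P}_{-b}$, contradicting the assumption that $a$ is undefeated in $\mathbf{P}_{-b}$. I would verify that the pair $(c,a)$ and the profiles $\mathbf{P}, \mathbf{P}_{-b}$ satisfy the hypotheses of Coherent IIA: since $b \notin \{a,c\}$, deleting $b$ from every ballot leaves the restriction to $\{a,c\}$ unchanged, so $\mathbf{P}_{\mid\{a,c\}} = (\mathbf{P}_{-b})_{\mid\{a,c\}}$; and the margin graph of $\mathbf{P}_{-b}$ is obtained from that of $\mathbf{P}$ by deleting exactly the node $b$ (and the edges incident to $b$), none of which connect $a$ and $c$. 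By Coherent IIA, $c$ defeats $a$ in $\mathbf{P}_{-b}$, contradicting the assumption that $a$ is undefeated there.

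The main (only) subtle point is confirming that passing from $\mathbf{P}$ to $\mathbf{P}_{-b}$ really is a permissible move under Coherent IIA, i.e., that it amounts only to deleting a candidate different from $a$ and $c$ together with the edges incident to that candidate, without reducing any margin between candidates in $X(\mathbf{P})\setminus\{b\}$. This holds because removing $b$ from each voter's ballot preserves the relative ranking of every other pair of candidates, hence preserves their margins. Once this structural check is in place, both cases close immediately, completing the contradiction and thus the proof.
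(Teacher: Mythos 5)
Your proof is correct and is essentially the paper's own argument: the paper likewise uses Lemma \ref{MajDefeatLem} to rule out $b$ defeating $a$ (your case $c=b$) and then applies Coherent IIA to the pair $(c,a)$ with the candidate deletion $\mathbf{P}\mapsto\mathbf{P}_{-b}$ to contradict $a$ being undefeated in $\mathbf{P}_{-b}$. Your extra verification that deleting $b$ preserves the restriction to $\{a,c\}$ and all remaining margins is a welcome (if routine) explicit check of the Coherent IIA hypotheses.
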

\begin{proof} Suppose $a$ is undefeated in $\mathbf{P}_{-b}$ according to $f$ and that $b$ is not majority preferred to $a$ in $\mathbf{P}$. Suppose for contradiction that $a$ is defeated in $\mathbf{P}$ according to $f$. Since $b$ is not majority preferred to $a$ in $\mathbf{P}$, it follows by Lemma \ref{MajDefeatLem} that $b$ does not defeat $a$  in $\mathbf{P}$ according to $f$. Hence there is some $c\in X(\mathbf{P})\setminus\{b\}$ that defeats $a$ in $\mathbf{P}$ according to $f$. Then since $\mathbf{P}_{\mid \{a,c\}}=(\mathbf{P}_{-b})_{\mid \{a,c\}}$ and the margin graph of $\mathbf{P}_{-b}$ is obtained from that of $\mathbf{P}$ by deleting a candidate other than $a$ and $c$, it follows by Coherent IIA that $c$ defeats $a$ in $\mathbf{P}_{-b}$ according to $f$, contradicting our initial assumption. \end{proof}

\noindent Thus, contrary to \citealt{Maskin2020}, it is Coherent IIA rather than Modified IIA that mitigates spoiler effects.

\section{Characterization}\label{CharSection}

In this section, we prove our main result using the axioms proposed in Section \ref{AxiomsSection}. Given VCCRs $f$ and $g$, we say that $g$ is \textit{a refinement of} $f$ if for every profile $\mathbf{P}$ and $x,y\in X(\mathbf{P})$, if $x$ defeats $y$ in $\mathbf{P}$ according to $f$, then $x$ defeats $y$ in $\mathbf{P}$ according to $g$. It follows that the set of candidates selected by $\overline{g}$ (i.e., the set of undefeated candidates according to $g$) is always a \textit{subset} of the set of candidates selected by $\overline{f}$ (i.e., the set of undefeated candidates according to $f$). Thus, a refinement extends the defeat relation and may shrink the set of potential winners.

Given a class $\mathsf{C}$ of VCCRs and $g\in\mathsf{C}$, we say that $g$ is the \textit{most resolute} VCCR \textit{in} $\mathsf{C}$ if $g$ is a refinement of every VCCR in $\mathsf{C}$, or equivalently, if for every $f\in \mathsf{C}$, profile $\mathbf{P}$, and $x,y\in X(\mathbf{P})$, if $x$ defeats $y$ in $\mathbf{P}$ according to $f$, then $x$ defeats $y$ in $\mathbf{P}$ according to $g$. A number of voting procedures can be characterized as ``the most resolute procedure satisfying such and such properties'' (see, e.g., \citealt{Brandt2013,Brandt2014}). We now give such a characterization of Split Cycle.

\begin{theorem}\label{MainThm} Split Cycle is the most resolute of all VCCRs satisfying the six axioms for defeat:
\begin{enumerate}
\item[A1.] Anonymity and Neutrality: if $x$ defeats $y$ in $\mathbf{P}$, and $\mathbf{P}'$ is obtained from $\mathbf{P}$ by swapping the ballots assigned to two voters, then $x$ still defeats $y$ in $\mathbf{P}'$ (Anonymity); and if $x$ defeats $y$ in $\mathbf{P}$, and $\mathbf{P}'$ is obtained from $\mathbf{P}$ by swapping $x$ and $y$ on each voter's ballot, then $y$ defeats $x$ in $\mathbf{P}'$ (Neutrality).
\item[A2.]  Availability: for every $\mathbf{P}$, there is some undefeated candidate in $\mathbf{P}$.
\item[A3.] (Upward) Homogeneity: for every $\mathbf{P}$, if $x$ defeats $y$ in $\mathbf{P}$, then $x$ defeats $y$ in $2\mathbf{P}$.
\item[A4.] Monotonicity (for two-candidate profiles): if $x$ defeats $y$ in $\mathbf{P}$ (a two-candidate profile), and $\mathbf{P}'$ is obtained from $\mathbf{P}$ by some voter $i$ moving $x$ above the candidate that $i$ ranked immediately above $x$ in $\mathbf{P}$, then $x$ defeats $y$ in $\mathbf{P}'$.
\item[A5.]  Neutral Reversal: if $\mathbf{P}'$ is obtained  from $\mathbf{P}$ by adding two voters with reversed ballots, then $x$ defeats $y$ in $\mathbf{P}$ if and only if $x$ defeats $y$ in $\mathbf{P}'$.
\item[A6.]  Coherent IIA: if $x$ defeats $y$ in $\mathbf{P}$, and $\mathbf{P}'$ is a profile such that $\mathbf{P}_{\mid \{x,y\}}=\mathbf{P}'_{\mid \{x,y\}}$ and the margin graph of $\mathbf{P}'$ is obtained from that of $\mathbf{P}$ by deleting zero or more candidates other than $x$ and $y$ and deleting or reducing the margins on zero or more edges not connecting $x$ and $y$, then $x$ still defeats $y$ in $\mathbf{P}'$.
\end{enumerate}
\end{theorem}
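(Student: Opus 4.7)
The proof naturally divides into two parts: (1) showing that Split Cycle satisfies the six axioms, and (2) showing that any VCCR $f$ satisfying A1--A6 is at most as resolute as Split Cycle. For (1), A1 follows from Split Cycle being margin based (indeed qualitative-margin based); A2 is Proposition \ref{NontrivialDefeat}; A3 follows from qualitative-margin based-ness since $\mathbf{P}$ and $2\mathbf{P}$ induce identical qualitative margin graphs; A4 holds since Split Cycle restricted to two-candidate profiles coincides with simple majority rule; A5 follows from margin based-ness since a pair of reversed ballots contributes $0$ to every margin; and A6 is Proposition \ref{CoherentProp}. The substantive work lies in (2).

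For (2), I would assume $f$ satisfies A1--A6 and that $x$ defeats $y$ in $\mathbf{P}$ according to $f$, and show the same holds according to Split Cycle. By Lemmas \ref{SplittingLem} and \ref{OnlySomeCycles}, it suffices to establish (a) $Margin_\mathbf{P}(x,y) > 0$ and (b) $Margin_\mathbf{P}(x,y) > Split\#_\mathbf{P}(\rho)$ for every majority cycle $\rho = x, y, z_1, \ldots, z_n, x$ in $\mathbf{P}$. Condition (a) is exactly the conclusion of Lemma \ref{MajDefeatLem}. For (b), I would argue by contradiction: suppose there is some such cycle $\rho$ with $Split\#_\mathbf{P}(\rho) \geq m := Margin_\mathbf{P}(x,y)$. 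Since the edge $x \to y$ itself lies on $\rho$, necessarily $Split\#_\mathbf{P}(\rho) = m$ exactly, and every edge of $\rho$ carries margin at least $m$.

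The aim is to produce an auxiliary profile $\mathbf{P}^*$ on candidates $\{x, y, z_1, \ldots, z_n\}$ whose margin graph consists precisely of the cycle $\rho$ with every edge of weight $m$, and whose multiset of ballots is invariant under the cyclic permutation $\sigma = (x\; y\; z_1 \; \cdots\; z_n)$ (realizable, e.g., by a Latin-square-style construction, possibly after applying Homogeneity to secure the appropriate parity). In such a $\mathbf{P}^*$, the combination of Anonymity and Neutrality applied with $\sigma$ renders the defeat relation $\sigma$-equivariant: if any edge of $\rho$ is a defeat in $\mathbf{P}^*$ then all are, yielding a defeat cycle that contradicts Availability. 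Consequently, $x$ does not defeat $y$ in $\mathbf{P}^*$ according to $f$.

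The central obstacle is the \emph{transfer} step: deriving ``$x$ defeats $y$ in $\mathbf{P}^*$'' from ``$x$ defeats $y$ in $\mathbf{P}$'' so that the contradiction can be extracted. Coherent IIA lets us delete candidates outside $\rho$ from $\mathbf{P}$ and delete or reduce margins on cycle edges $z_i \to z_{i+1}$ for $1 \leq i \leq n-1$, which are not incident to $x$ or $y$; but it does \emph{not} permit modifying $y \to z_1$ or $z_n \to x$, whose margins in $\mathbf{P}$ may strictly exceed $m$, nor does it allow removing extraneous $x,y$-incident edges to other $z_i$. My plan to bridge this gap is to interleave Coherent IIA with Homogeneity and Neutral Reversal: after applying Coherent IIA to obtain an intermediate profile $\mathbf{P}_1$ whose margin graph has the cycle with interior $z_i$-$z_j$ edges reduced to $m$, use Homogeneity to scale up and then construct a companion profile whose cyclic averaging (or explicit ballot-level construction) realizes the symmetric target $\mathbf{P}^*$ while preserving the $x$-over-$y$ defeat. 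Handling the residual $x,y$-incident edges that Coherent IIA cannot touch is where I expect the argument to demand the most care---likely via a short induction on $n$ or on the total excess $\sum_i (m_i - m)$, combined with a clever use of Neutrality applied to transpositions of candidates that reroute the cycle.
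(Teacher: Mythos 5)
Your overall skeleton is the same as the paper's: positivity of the margin via Lemma \ref{MajDefeatLem}, a cyclically symmetric profile realizing the cycle with all margins equal to $m$ in which Anonymity, Neutrality (applied with the cyclic permutation), and Availability preclude $x$ defeating $y$, and a transfer step using Homogeneity, Neutral Reversal, and Coherent IIA. But the transfer step, which you yourself identify as the central obstacle, is not actually carried out, and your diagnosis of the obstacle rests on a misreading of A6. In A6, ``edges not connecting $x$ and $y$'' means edges other than the single edge between $x$ and $y$, not edges non-incident to $x$ or $y$; the paper's own application of the axiom reads ``deleting or reducing the margins on zero or more edges other than the $x\to y$ edge.'' So Coherent IIA \emph{does} permit deleting or reducing the margins on $y\to z_1$, $z_n\to x$, and every other edge incident to $x$ or $y$ except the $x$--$y$ edge itself, and the difficulty you propose to resolve by ``induction on the excess'' or ``rerouting the cycle by transpositions'' is a non-problem.

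The constraint you actually must respect, and which your sketch never addresses, is the other hypothesis of A6: $\mathbf{P}_{\mid \{x,y\}}=\mathbf{P}'_{\mid \{x,y\}}$, i.e., the two profiles have the \emph{same voters}, each ranking $x$ vs.\ $y$ identically. A free-standing Latin-square-style profile $\mathbf{P}^*$ with the all-$m$ cyclic margin graph will in general not stand in this relation to $\mathbf{P}$ (or $2\mathbf{P}$): its voter set and its split on $\{x,y\}$ need not match. This is exactly what the paper's bookkeeping is for: double to $2\mathbf{P}$ (Homogeneity, making the margin even), pad with reversal pairs to a profile $\mathbf{Q}$ with at least $(n+2)Margin_{2\mathbf{P}}(x,y)$ voters (Upward Neutral Reversal preserves the $f$-defeat of $y$ by $x$), build the symmetric cycle profile $\mathbf{P}'$ McGarvey-style \emph{out of voters of $\mathbf{Q}$}, assigning each used voter a ballot that preserves that voter's ranking of $x$ vs.\ $y$, run the Anonymity/Neutrality/Availability argument in $\mathbf{P}'$, then add back the unused voters of $\mathbf{Q}$ as reversal pairs (Downward Neutral Reversal) to obtain $\mathbf{P}''$ with $V(\mathbf{P}'')=V(\mathbf{Q})$ and $\mathbf{Q}_{\mid\{x,y\}}=\mathbf{P}''_{\mid\{x,y\}}$, and only then apply Coherent IIA from $\mathbf{Q}$ to $\mathbf{P}''$ to reach the contradiction. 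You name the right axioms, but without this voter-level construction the appeal to Coherent IIA is not licensed, so as it stands the proposal has a genuine gap at its decisive step.
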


\begin{proof} We have already observed that Split Cycle satisfies the axioms. Next, we show that for any VCCR $f$ satisfying the axioms for defeat and any profile $\mathbf{P}$, if $x$ defeats $y$ in $\mathbf{P}$ according to $f$, then $x$ defeats $y$ in $\mathbf{P}$ according to Split Cycle. Toward a contradiction, suppose $x$ defeats $y$ in $\mathbf{P}$ according to $f$ but not according to Split Cycle. Since $\mathbf{P}$ may have an odd number of voters, consider $2\mathbf{P}$. It follows by (Upward) Homogeneity that $x$ defeats $y$ in $2\mathbf{P}$ according to $f$. But according to Split Cycle, $x$ does not defeat $y$ in $2\mathbf{P}$ (whose qualitative margin graph is the same as $\mathbf{P}$). Since $x$ defeats $y$ according to $f$, by Majority Defeat (Lemma \ref{MajDefeatLem}), we have $Margin_{2\mathbf{P}}(x,y)>0$. Moreover, since $2\mathbf{P}$ has an even number of voters, $Margin_{2\mathbf{P}}(x,y)$ is even. Now since $x$ does not defeat $y$ in  $2\mathbf{P}$ according to Split Cycle, by Lemma \ref{OnlySomeCycles} there is a majority cycle $x\to y\to z_1\to\dots\to z_n\to x$ in $2\mathbf{P}$ such that $Margin_{2\mathbf{P}}(x,y)$ is less than or equal to every margin along the cycle. Let $\mathbf{Q}$ be obtained from $\mathbf{2P}$ by adding zero or more reversal pairs of voters so that $|V(\mathbf{Q})|\geq (n+2)Margin_{2\mathbf{P}}(x,y)$. Then by (Upward) Neutral Reversal, $x$ defeats $y$ in $\mathbf{Q}$ according to $f$. Let $\mathcal{M}'$ be the weighted directed graph obtained from $\mathcal{M}(2\mathbf{P})$ by deleting  all candidates except $x,y,z_1,\dots,z_n$ and all edges except the edges in the cycle and reducing the weights on all remaining edges so they are equal to $Margin_{2\mathbf{P}}(x,y)$. Let $k=Margin_{2\mathbf{P}}(x,y)/2$. We now construct a profile $\mathbf{P}'$ whose margin graph is $\mathcal{M}'$.\footnote{This standard kind of construction is used in the proof of McGarvey's theorem (\citealt{McGarvey1953}).} For each edge $a\to b$ in $x\to y\to z_1\to\dots\to z_n\to x$,  where $c_1,\dots,c_n$ are the elements from $\{x,y,z_1,\dots,z_n\}\setminus\{a,b\}$ such that $b\to c_1\to\dots\to c_n\to a$, we add voters to $\mathbf{P}'$ as follows:
\begin{itemize}
\item if $a=x$ (and hence $b=y$), then for $k$ voters from $V(\mathbf{Q})$ who rank $x$ above $y$ in $\mathbf{Q}$, add them to $\mathbf{P}'$ with the ballot $\boldsymbol{a},\boldsymbol{b},c_1,\dots,c_n$ (which in this case is $\boldsymbol{x}, \boldsymbol{y},z_1,\dots,z_n$), and for  $k$ other voters from $V(\mathbf{Q})$ who rank $x$ above $y$ in $\mathbf{Q}$, add them to $\mathbf{P}'$ with the ballot $c_n,\dots,c_1, \boldsymbol{a},\boldsymbol{b}$ (which in this case is $z_1,\dots,z_n, \boldsymbol{x}, \boldsymbol{y}$). See the first and second columns in the profile in Figure \ref{McGarvey}.
\item if $a\neq x$ and $y$ occurs before $x$ in the sequence $b,c_1,\dots,c_n,a$, then for $k$ voters from $V(\mathbf{Q})$ who rank $y$ above $x$ in $\mathbf{Q}$, add them to $\mathbf{P}'$ with the ballot $\boldsymbol{a},\boldsymbol{b},c_1,\dots,c_n$, and for  $k$ voters from $V(\mathbf{Q})$ who rank $x$ above $y$ in $\mathbf{Q}$, add them to $\mathbf{P}'$ with the ballot $c_n,\dots,c_1, \boldsymbol{a},\boldsymbol{b}$. See, e.g., the third and fourth columns in the profile in Figure \ref{McGarvey}.
\item if $a\neq x$ and $x$ occurs before $y$ in the sequence $b,c_1,\dots,c_n,a$, then for $k$ voters from $V(\mathbf{Q})$ who rank $x$ above $y$ in $\mathbf{Q}$, add them to $\mathbf{P}'$ with the ballot $\boldsymbol{a},\boldsymbol{b},c_1,\dots,c_n$, and for  $k$ voters from $V(\mathbf{Q})$ who rank $y$ above $x$ in $\mathbf{Q}$, add them to $\mathbf{P}'$ with the ballot $c_n,\dots,c_1, \boldsymbol{a},\boldsymbol{b}$. See, e.g., the fifth and sixth columns in the profile in Figure \ref{McGarvey}.
\end{itemize}
This construction uses $2k+(n+1)k$ voters from $V(\mathbf{Q})$ who rank $x$ over $y$ and $(n+1)k$ voters from $V(\mathbf{Q})$ who rank $y$ over $x$, for a total of $(n+2)Margin_{2\mathbf{P}}(x,y)$ voters from $V(\mathbf{Q})$. Then $\mathbf{P}'$ has the form in Figure~\ref{McGarvey}. Observe that $\mathcal{M}(\mathbf{P}')=\mathcal{M}'$ (e.g., $Margin_{\mathbf{P}'}(x,y)=2k=Margin_{2\mathbf{P}}(x,y)$, $Margin_{\mathbf{P}'}(x,z_1)=0$, etc.).
\begin{figure}[h]
  \begin{center}
\begin{tabular}{cc|cc|cc|c|cc}
 $k$ &  $k$ & $k$ & $k$ & $k$ & $k$ & $\cdots$ & $k$ & $k$    \\\hline
$\boldsymbol{x}$ & $z_n$ & $\boldsymbol{y}$ & $x$ & $\boldsymbol{z_1}$ & $y$ & $\cdots$ & $ \boldsymbol{z_n}$ & $z_{n-1}$      \\
$\boldsymbol{y}$ &  $\vdots$ & $\boldsymbol{z_1}$ & $z_n$ &  $\boldsymbol{z_2}$ & $x$ & $\cdots$ & $\boldsymbol{x}$ & $\vdots$   \\
$z_1$ &  $\vdots$ & $z_2$ & $\vdots$ & $\vdots$ & $z_n$ & $\cdots$ & $y$ & $z_1$    \\
$\vdots$ &  $z_1$ & $\vdots$ & $z_2$ & $z_n$ & $\vdots$ & $\cdots$ & $z_1$ & $y$   \\
$\vdots$ &  $\boldsymbol{x}$ & $z_n$ & $\boldsymbol{y}$ & $x$ & $\boldsymbol{z_1}$ & $\cdots$ & $\vdots$ & $ \boldsymbol{z_n}$   \\
$z_n$ &  $\boldsymbol{y}$ & $x$ & $\boldsymbol{z_1}$ & $y$ & $\boldsymbol{z_2}$ & $\cdots$ & $z_{n-1}$ & $\boldsymbol{x}$   \\
\end{tabular}
\end{center}
\caption{the profile $\mathbf{P}'$.} \label{McGarvey}
\end{figure}

 Now we claim that $x$ does not defeat $y$ in $\mathbf{P}'$ according to $f$. Toward a contradiction, suppose $x$ does defeat $y$ in $\mathbf{P}'$ according to $f$. Let $\sigma(\mathbf{P}')$ be the profile obtained from $\mathbf{P}'$ by the permutation $\sigma$ that maps each $a\in \{x,y,z_1,\dots,z_n\}$ to the unique $b\in  \{x,y,z_1,\dots,z_n\}$ such that $a\to b$ in $\mathbf{P}'$, as shown in Figure \ref{McGarveyPermuted}.
\begin{figure}[h]
  \begin{center}
\begin{tabular}{cc|cc|cc|c|cc}
  $k$ & $k$ & $k$ & $k$ & $k$ & $k$ & $\cdots$  &  $k$ &  $k$    \\\hline
  $\boldsymbol{y}$ & $x$ & $\boldsymbol{z_1}$ & $y$ & $\boldsymbol{z}_2$  & $z_1$ & $\cdots$ & $\boldsymbol{x}$ & $z_n$      \\
  $\boldsymbol{z_1}$ & $z_n$ &  $\boldsymbol{z_2}$ & $x$ & $\boldsymbol{z}_3$ & $y$ & $\cdots$ & $\boldsymbol{y}$ &  $\vdots$   \\
 $z_2$ & $\vdots$ & $\vdots$ & $z_n$ & $\vdots$ & $x$  & $\cdots$  & $z_1$ &  $\vdots$    \\
  $\vdots$ & $z_2$ & $z_n$ & $\vdots$ & $z_n$ & $\vdots$  & $\cdots$  & $\vdots$ &  $z_1$  \\
  $z_n$ & $\boldsymbol{y}$ & $x$ & $\boldsymbol{z_1}$ & $y$ & $\boldsymbol{z}_2$ & $\cdots$  & $\vdots$ &  $\boldsymbol{x}$   \\
  $x$ & $\boldsymbol{z_1}$ & $y$ & $\boldsymbol{z_2}$ & $z_1$ & $\boldsymbol{z}_3$  & $\cdots$ & $z_n$ &  $\boldsymbol{y}$   \\
\end{tabular}
\end{center} 
\caption{the profile $\sigma(\mathbf{P}')$.} \label{McGarveyPermuted}
\end{figure}

\noindent By Neutrality,\footnote{Here we use the permutation version of Neutrality in Footnote \ref{Permutation}.} since $x$ defeats $y$ in $\mathbf{P}'$, it follows that $y$ defeats $z_1$ in $\sigma(\mathbf{P}')$. But $\mathbf{P}'$ can  obviously be obtained from $\sigma(\mathbf{P}')$ by a permutation of the voters (e.g., in Figures \ref{McGarvey}-\ref{McGarveyPermuted}, the first column in $\mathbf{P}'$ is the same as the second to last column in $\sigma(\mathbf{P}')$). Thus, by Anonymity, since $y$ defeats $z_1$ in $\sigma(\mathbf{P}')$, it follows that $y$ defeats $z_1$ in $\mathbf{P}'$. By similar reasoning using $\sigma(\sigma(\mathbf{P}'))$, we have that $z_1$ defeats $z_2$ in $\mathbf{P}'$, etc., until we conclude that $xDyDz_1D\dots Dz_nDx$ where $D$ is the defeat relation in $\mathbf{P}'$.  This contradicts Availability. Hence $x$ does not defeat $y$ in $\mathbf{P}'$ according to~$f$.  

Since $V(\mathbf{P}')\subseteq V(\mathbf{Q})$ and  $Margin_\mathbf{Q}(x,y)=Margin_{\mathbf{P}'}(x,y)$, it follows that half of the voters in ${V(\mathbf{Q})\setminus V(\mathbf{P}')}$ (which may be empty) rank $x$ above $y$ and half of the voters in $V(\mathbf{Q})\setminus V(\mathbf{P}')$ rank $y$ above $x$. Let $\mathbf{P}''$ be obtained from $\mathbf{P}'$ as follows: for each voter in $V(\mathbf{Q})\setminus V(\mathbf{P}')$ who ranks $x$ above $y$, add them to $\mathbf{P}''$ with the ballot $x,y,z_1,\dots, z_n$, and for each voter $V(\mathbf{Q})\setminus V(\mathbf{P}')$ who ranks $y$ above $x$, add them to $\mathbf{P}''$ with the ballot $z_n,\dots,z_1, y,x$. Thus, $\mathbf{P}''$ is obtained from $\mathbf{P}'$ by adding zero or more reversal pairs of voters, so by (Downward) Neutral Reversal, since $x$ does not defeat $y$ in $\mathbf{P}'$ according to $f$, it follows that $x$ does not defeat $y$ in $\mathbf{P}''$ according to $f$. Finally, we have:
\begin{itemize}
\item $V(\mathbf{Q})=V(\mathbf{P}'')$ and $\mathbf{Q}_{\mid \{x,y\}}=\mathbf{P}''_{\mid \{x,y\}}$;
\item $\mathcal{M}(\mathbf{P}'')$ is obtained from $\mathcal{M}(\mathbf{Q})$ by deleting zero or more candidates other than $x$ and $y$ and deleting or reducing the margins on zero or more edges other than the $x\to y$ edge.
\end{itemize}
Thus, by Coherent IIA, since $x$ defeats $y$ in $\mathbf{Q}$ according to $f$, we have that $x$ defeats $y$ in $\mathbf{P}''$ according to $f$, contradicting what we derived above.\end{proof}

Of course Split Cycle is not the only VCCR that satisfies the six axioms for defeat. For example, the null VCCR according to which no one ever defeats anyone else satisfies all six axioms---although it can easily be ruled out by other axioms that Split Cycle satisfies, such as the Pareto axiom. Another example is the VCCR according to which $x$ defeats $y$ if $Margin_\mathbf{P}(x,y)$ is greater than the splitting number of every majority cycle in $\mathbf{P}$, not only those majority cycles containing $x$ and $y$. Since these VCCRs are not refinements of Split Cycle,\footnote{E.g., in the profile $\mathbf{P}$ in Example \ref{WinByEx}, neither of the mentioned VCCRs judges that $a$ defeats $d$.} they do not satisfy the seventh ``axiom'' that the VCCR should be the most resolute VCCR among those satisfying the first six axioms. A natural next step would be to obtain another axiomatic characterization of Split Cycle as the only VCCR satisfying some axioms without reference to resoluteness.

Theorem \ref{MainThm} shows that using a VCCR other than Split Cycle requires either violating one of the six axioms for defeat or sacrificing resoluteness. For these and other reasons (see \citealt{HP2020}), we settle on Split Cycle as our preferred VCCR and hence as our preferred answer to the question of when one candidate should defeat another in a democratic election using ranked ballots.

\begin{remark}\label{Tiebreaking} When there are multiple undefeated candidates, but a single winner must be chosen, some further tiebreaking process must select the ultimate winner from  the undefeated candidates. However, we need not interpret that process as establishing additional relations of \textit{defeat} between candidates in the politically significant sense of defeat. This is obvious if we randomly choose the ultimate winner from the undefeated candidates. But the point may also apply if we first make a deterministic choice of a subset of undefeated candidates before resorting to random choice if necessary.\end{remark}

\section{Escaping impossibility}\label{EscapeSection}

In this section, we address the question: how does Split Cycle escape Arrow's Impossibility Theorem and related impossibility results? That is, how did we relax Arrow's assumptions in order to avoid the existence of a dictator, vetoers, etc.? In Section \ref{FixedImposs}, we recall the standard formulation of Arrow's theorem and related results, and we explain how Split Cycle escapes these results. In Section \ref{IIAVCCR}, we reformulate these results in the variable-candidate setting in which we characterized Split Cycle. Finally, in Section \ref{AlphaVoting}, we consider some simple impossibility results based not on IIA but instead on a choice-consistency principle sometimes conflated with IIA, allegedly even by Arrow himself (see Appendix \ref{ArrowSection}).

\subsection{Impossibility theorems in the fixed-candidate setting}\label{FixedImposs}

\subsubsection{Arrow's Theorem}

Arrow \citeyearpar{Arrow1963} worked in a fixed-voter and fixed-candidate setting (see \citealt{Campbell2002} and \citealt{Penn2015} for modern presentations). Fix nonempty finite sets $V$ and $X$ of voters and candidates, respectively. A $(V,X)$-profile is a profile $\mathbf{P}$ as in Definition \ref{ProfileDef} in which $V(\mathbf{P})=V$ and $X(\mathbf{P})=X$. A \textit{$(V,X)$-collective choice rule} (or $(V,X)$-CCR) is a function on the set of $(V,X)$-profiles such that for any $(V,X)$-profile $\mathbf{P}$, $f(\mathbf{P})$ is an asymmetric binary relation on $X$.\footnote{Note that we have built the condition of Universal Domain with respect to $(V,X)$ into the definition of a $(V,X)$-CCR.} If $f$ is a VCCR as in Section \ref{Prelim}, then for any finite $V\subset\mathcal{V}$ and finite $X\subset\mathcal{X}$, the restriction $f_{\mid V,X}$ of $f$ to the set of $(V,X)$-profiles is a $(V,X)$-CCR. In particular, for any such $(V,X)$, the Split Cycle VCCR restricts to the Split Cycle $(V,X)$-CCR. Our question is: how does the Split Cycle $(V,X)$-CCR escape Arrow's Theorem?

Let $f$ be a $(V,X)$-CCR. To state Arrow's Theorem, we recall the following key notions:
\begin{itemize}
\item $f$ is a $(V,X)$-\textit{social welfare function} (or $(V,X)$-SWF) if for any $(V,X)$-profile $\mathbf{P}$, $f(\mathbf{P})$ is a strict weak order (recall Section \ref{Prelim});
\item $f$ satisfies \textit{Independence of Irrelevant Alternatives} (IIA) if for any $(V,X)$-profiles $\mathbf{P}$ and $\mathbf{P}'$ and $x,y\in X$, if $\mathbf{P}_{\mid\{x,y\}}=\mathbf{P}'_{\mid\{x,y\}}$, then $x$ defeats $y$ in $\mathbf{P}$ according to $f$ if and only if $x$ defeats $y$ in $\mathbf{P}'$ according to~$f$;
\item $f$ satisfies \textit{Pareto} if for any $(V,X)$-profile $\mathbf{P}$ and $x,y\in X$, if $x\mathbf{P}_iy$ for all $i\in V$, then $x$ defeats $y$ in $\mathbf{P}$ according to $f$;
\item an $i\in V$ is a \textit{dictator for $f$} if for all $(V,X)$-profiles $\mathbf{P}$ and $x,y\in X$, if $x\mathbf{P}_iy$, then  $x$ defeats $y$ in $\mathbf{P}$ according to $f$.
\end{itemize}

Then Arrow's famous Impossibility Theorem can be stated as follows.\footnote{\label{LinearProfiles}Arrow considered profiles where each voter's strict preference relation is a strict weak order, whereas we have assumed strict linear orders. However, it is well known that Arrow's Theorem can be proved for strict linear order profiles (see, e.g., \citealt[p.~208]{Fishburn1973}). In fact, Arrow's Theorem for strict linear order profiles is a corollary of the statement of Arrow's Theorem for strict weak orders, by applying Lemma 3.4 of \citealt{HP2020a}.}

\begin{theorem}[\citealt{Arrow1963}]\label{ArrowThm} Assume $|X|\geq 3$. Any $(V,X)$-SWF satisfying IIA and Pareto has a dictator.
\end{theorem}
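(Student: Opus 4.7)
The plan is to prove Arrow's Theorem via Geanakoplos's pivotal voter argument. Fix a $(V,X)$-SWF $f$ satisfying IIA and Pareto with $|X|\geq 3$ and $V=\{1,\dots,n\}$ finite. The first key step is an Extremal Lemma: for any candidate $b$ and any profile $\mathbf{Q}$ in which every voter ranks $b$ either first or last, $b$ must occupy an extreme position in the strict weak order $f(\mathbf{Q})$. Argue by contradiction: if $a f(\mathbf{Q}) b f(\mathbf{Q}) c$ for some $a,c\neq b$, modify $\mathbf{Q}$ to $\mathbf{Q}'$ by moving $c$ just above $a$ on every ballot (possible without disturbing the $\{a,b\}$ or $\{b,c\}$ restrictions because $b$ is extreme for each voter). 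IIA preserves both defeats, transitivity of the strict weak order forces $a f(\mathbf{Q}') c$, yet Pareto yields $c f(\mathbf{Q}') a$, contradicting asymmetry.

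Second, I would locate a pivotal voter. Start from a profile $\mathbf{P}^{0}$ in which every voter ranks $b$ last (so $b$ is socially last by Pareto), and define $\mathbf{P}^{k}$ by moving $b$ to the top of voter $k$'s ballot in $\mathbf{P}^{k-1}$; in $\mathbf{P}^{n}$, Pareto forces $b$ to be socially first. By the Extremal Lemma, $b$ is either socially first or socially last in each $\mathbf{P}^{k}$, so there is a least $k^{*}$ with $b$ socially first in $\mathbf{P}^{k^{*}}$ and socially last in $\mathbf{P}^{k^{*}-1}$. Third, I would show voter $k^{*}$ is a dictator over every pair $\{a,c\}$ with $a,c\neq b$. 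Given any profile $\mathbf{P}$ with $a\mathbf{P}_{k^{*}}c$, construct an auxiliary profile $\mathbf{R}$ that agrees with $\mathbf{P}^{k^{*}-1}$ on voters $i<k^{*}$, agrees with $\mathbf{P}^{k^{*}}$ on voters $i>k^{*}$, and has voter $k^{*}$ rank $a$ above $b$ above $c$ with the remaining candidates arranged to match $\mathbf{P}$ on the $\{a,c\}$ pair. Applying IIA to $\{a,b\}$ against $\mathbf{P}^{k^{*}-1}$ gives $a f(\mathbf{R}) b$; applying IIA to $\{b,c\}$ against $\mathbf{P}^{k^{*}}$ gives $b f(\mathbf{R}) c$; transitivity then yields $a f(\mathbf{R}) c$, and a final IIA step transfers this to $\mathbf{P}$. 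Finally, running the same construction with a different focal candidate $b'\neq b$ produces a pivot $k^{**}$; a short argument shows $k^{**}=k^{*}$ (otherwise the two pivots would disagree on some pair not involving their own focal candidate, contradicting the dictatorship just established), so $k^{*}$ dictates over all pairs.

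The main obstacle is the bookkeeping in the dictator step: the profile $\mathbf{R}$ must be arranged so that its $\{a,b\}$-restriction matches $\mathbf{P}^{k^{*}-1}$ exactly, its $\{b,c\}$-restriction matches $\mathbf{P}^{k^{*}}$ exactly, and its $\{a,c\}$-restriction matches $\mathbf{P}$ exactly, all simultaneously and while keeping every ballot a strict linear order on $X$. This is where one must use the freedom in $|X|\geq 3$ to slot the remaining candidates into positions that do not interfere with any of these three restrictions. A secondary nuisance, noted in Footnote \ref{LinearProfiles}, is verifying throughout that restricting to strict linear (rather than weak) ballots does not break any step; the surgeries above all produce linear orders from linear orders, so this is harmless.
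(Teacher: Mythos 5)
You should first note that the paper does not actually prove Theorem \ref{ArrowThm}: it is imported from Arrow (1963) (with Footnote \ref{LinearProfiles} pointing to the literature for the linear-ballot version) and then used as a black box in the proof of Theorem \ref{Arrow}. So your proposal is not an alternative to an argument in the paper; it supplies a proof where the paper supplies a citation, and the Geanakoplos pivotal-voter route you choose is a standard and legitimate way to do that. The overall plan (extremal lemma, pivotal voter for a focal candidate $b$, dictatorship over pairs avoiding $b$, identification of the pivots for different focal candidates) is correct.

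There is, however, a genuine flaw in the dictator step as you wrote it. You build $\mathbf{R}$ by copying the \emph{entire ballots} of $\mathbf{P}^{k^*-1}$ for voters $i<k^*$ and of $\mathbf{P}^{k^*}$ for voters $i>k^*$, adjusting only voter $k^*$'s ballot to match $\mathbf{P}$ on $\{a,c\}$. With that construction, $\mathbf{R}_{\mid\{a,c\}}$ agrees with $\mathbf{P}_{\mid\{a,c\}}$ only on voter $k^*$; every other voter's ranking of $a$ vs.\ $c$ in $\mathbf{R}$ is whatever it happens to be in the fixed profiles $\mathbf{P}^{k^*-1},\mathbf{P}^{k^*}$, which bear no relation to the arbitrary $\mathbf{P}$. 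Hence your ``final IIA step'' transferring $a\,f(\mathbf{R})\,c$ to $\mathbf{P}$ is not licensed. Your ``main obstacle'' paragraph in fact states the right desiderata (the three pairwise restrictions must match $\mathbf{P}^{k^*-1}$, $\mathbf{P}^{k^*}$, and $\mathbf{P}$ respectively, for \emph{all} voters), but the construction you describe fails the third one. The standard repair is to obtain $\mathbf{R}$ from $\mathbf{P}$ itself by moving only $b$ on each ballot: to the top for $i<k^*$, to the bottom for $i>k^*$, and strictly between $a$ and $c$ for voter $k^*$ (possible since $a\mathbf{P}_{k^*}c$). IIA requires agreement of pair restrictions, not of whole ballots, and this $\mathbf{R}$ agrees with $\mathbf{P}^{k^*-1}$ on $\{a,b\}$ and with $\mathbf{P}^{k^*}$ on $\{b,c\}$ voter by voter, while trivially agreeing with $\mathbf{P}$ on $\{a,c\}$.

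Two smaller repairs: in the extremal lemma, the negation of ``$b$ is socially strictly first or strictly last'' only yields $\lnot\big(b\,f(\mathbf{Q})\,a\big)$ and $\lnot\big(c\,f(\mathbf{Q})\,b\big)$ for some distinct $a,c$ (ties are possible in a strict weak order), so you cannot assume a strict chain $a\,f(\mathbf{Q})\,b\,f(\mathbf{Q})\,c$; the contradiction should instead use the strict-weak-order condition (from $c\,f(\mathbf{Q}')\,a$ infer $c\,f(\mathbf{Q}')\,b$ or $b\,f(\mathbf{Q}')\,a$). And a single additional focal candidate $b'$ leaves the pair $\{b,b'\}$ uncovered; either run the argument for every $b'\neq b$ (each pivot equals $k^*$ because the social $\{a,b\}$ ranking flips between $\mathbf{P}^{k^*-1}$ and $\mathbf{P}^{k^*}$ while only $k^*$'s ballot changes) or invoke a third focal candidate, which $|X|\geq 3$ permits. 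With these corrections your argument becomes the standard, correct proof.
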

\begin{remark}\label{StrongDictator} Since our profiles are profiles of linear ballots, the conclusion of Arrow's theorem can be strengthened to say that $f$ has a \textit{strong dictator}, i.e., an $i\in V$ such that for all $(V,X)$-profiles $\mathbf{P}$ and $x,y\in X$, we have that  $x$ defeats $y$ in $\mathbf{P}$ according to $f$ \textit{if and only if} $x\mathbf{P}_iy$, i.e., $f(\mathbf{P})=\mathbf{P}_i$.
\end{remark}

The Split Cycle $(V,X)$-CCR avoids Arrow's theorem due to the following facts:
\begin{enumerate}
\item We weaken IIA to Coherent IIA.\footnote{Strictly speaking, we have stated Coherent IIA in a variable-candidate setting and IIA in a fixed-candidate setting, so they are not comparable in strength, but see  Proposition \ref{VIIACoherent} in Section \ref{IIAVCCR}.}
\item We weaken Arrow's assumption that the defeat relation is a strict weak order to it being acyclic.
\end{enumerate}
\textit{Neither of these moves by itself} is sufficient to escape Arrow-style impossibility theorems, as we show below.

\subsubsection{Baigent's Theorem}\label{BaigentSection}

To see that weakening IIA to Coherent IIA is not sufficient, we first observe that Coherent IIA implies a weakening of IIA known as Weak IIA, which states that if $\mathbf{P}_{\mid\{x,y\}}=\mathbf{P}'_{\mid\{x,y\}}$ and $x$ defeats $y$ in $\mathbf{P}$ according to $f$, then $y$ does \textit{not} defeat $x$ in $\mathbf{P}'$ according to $f$.

\begin{lemma}\label{CoherentToWeak} If $f$ is a VCCR satisfying Coherent IIA, then for any finite $V\subset \mathcal{V}$ and finite $X\subset\mathcal{X}$, $f_{\mid V,X}$ satisfies Weak IIA.
\end{lemma}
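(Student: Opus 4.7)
The plan is to reduce both profiles to the common two-candidate restriction $\mathbf{P}_{\mid\{x,y\}} = \mathbf{P}'_{\mid\{x,y\}}$ by applying Coherent IIA, and then use the asymmetry of the defeat relation built into the definition of a VCCR.

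Concretely, I would argue by contradiction. Suppose $f_{\mid V,X}$ violates Weak IIA, so there are $(V,X)$-profiles $\mathbf{P},\mathbf{P}'$ and $x,y\in X$ with $\mathbf{P}_{\mid\{x,y\}}=\mathbf{P}'_{\mid\{x,y\}}$ such that $y$ defeats $x$ in $\mathbf{P}$ and $x$ defeats $y$ in $\mathbf{P}'$ according to $f$. Let $\mathbf{Q}=\mathbf{P}_{\mid\{x,y\}}=\mathbf{P}'_{\mid\{x,y\}}$. Then $\mathbf{Q}_{\mid\{x,y\}}=\mathbf{Q}$, and the margin graph of $\mathbf{Q}$ is obtained from that of $\mathbf{P}$ just by deleting all candidates other than $x$ and $y$ (together with all incident edges, none of which connect $x$ to $y$). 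Hence Coherent IIA applies with $\mathbf{Q}$ in the role of the second profile.

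Applying Coherent IIA to $(\mathbf{P},\mathbf{Q})$ with the pair $(y,x)$: since $y$ defeats $x$ in $\mathbf{P}$, it follows that $y$ defeats $x$ in $\mathbf{Q}$ according to $f$. Applying Coherent IIA to $(\mathbf{P}',\mathbf{Q})$ with the pair $(x,y)$: since $x$ defeats $y$ in $\mathbf{P}'$, it follows that $x$ defeats $y$ in $\mathbf{Q}$ according to $f$. But $f(\mathbf{Q})$ is asymmetric by the definition of a VCCR, a contradiction. Hence $f_{\mid V,X}$ satisfies Weak IIA.

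I do not expect any significant obstacle: the whole content of the argument is the observation that the two-candidate restriction $\mathbf{P}_{\mid\{x,y\}}$ of any profile $\mathbf{P}$ is itself obtainable from $\mathbf{P}$ by the margin-graph deletion operations permitted in Coherent IIA, so Coherent IIA forces both incompatible defeats to hold in the common restriction $\mathbf{Q}$, where asymmetry immediately kills them. The only bookkeeping worth mentioning is to confirm that the $\{x,y\}$-edge is untouched in passing from $\mathbf{P}$ (or $\mathbf{P}'$) to $\mathbf{Q}$, which is automatic since $\mathbf{P}_{\mid\{x,y\}}=\mathbf{Q}$ by construction, so $Margin_{\mathbf{P}}(x,y)=Margin_{\mathbf{Q}}(x,y)$.
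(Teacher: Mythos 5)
Your proof is correct and is essentially the paper's argument: both project the putative defeats down to the common two-candidate restriction $\mathbf{P}_{\mid\{x,y\}}$ via Coherent IIA (deleting all candidates other than $x,y$, which leaves the $x$--$y$ margin untouched) and then invoke the asymmetry of $f$ on that restriction. The only difference is presentational---you frame it as a contradiction with both defeats pushed to $\mathbf{Q}$, while the paper pushes one defeat down directly and rules out the other---so nothing further is needed.
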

\begin{proof} Suppose that $\mathbf{P}_{\mid \{x,y\}}=\mathbf{P}'_{\mid \{x,y\}}$ and $x$ defeats $y$ in $\mathbf{P}$. Then by Coherent IIA, $x$ defeats $y$ in $\mathbf{P}_{\mid \{x,y\}}$, so $y$ does not defeat $x$ in $\mathbf{P}_{\mid \{x,y\}}$. Now if $y$ defeats $x$ in $\mathbf{P}'$, then by Coherent IIA, $y$ defeats $x$ in $\mathbf{P}'_{\mid \{x,y\}}$ and hence in $\mathbf{P}_{\mid \{x,y\}}$, since $\mathbf{P}_{\mid \{x,y\}}=\mathbf{P}'_{\mid \{x,y\}}$, which is a contradiction. Therefore, $y$ does not defeat $x$ in $\mathbf{P}'$.\end{proof}

Under Weak IIA, Baigent \citeyearpar{Baigent1987} proved an Arrow-style impossibility theorem asserting the existence of a vetoer instead of a dictator.\footnote{Cf.~Campbell and Kelly \citeyearpar{Campbell2000}, who observe that at least four candidates are required for Baigent's result.} Given a $(V,X)$-CCR $f$, a voter $i\in V$ is a \textit{vetoer for $f$} if for all $(V,X)$-profiles $\mathbf{P}$ and $x,y\in X(\mathbf{P})$, if $x\mathbf{P}_iy$, then  $y$ does not defeat $x$ in $\mathbf{P}$ according to~$f$. 

\begin{theorem}[\citealt{Baigent1987}]\label{BaigentThm} Assume $|X|\geq 4$. Any $(V,X)$-SWF satisfying Weak IIA and Pareto has a vetoer.
\end{theorem}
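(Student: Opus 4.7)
The plan is to adapt the classical \emph{decisive-sets/contraction} strategy from the textbook proof of Arrow's theorem, with ``vetoing'' replacing ``decisive'' throughout and Weak IIA doing the work of IIA.

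Call a coalition $G\subseteq V$ \emph{vetoing over $(x,y)$} if for every $(V,X)$-profile $\mathbf{P}$ in which $x\mathbf{P}_iy$ for all $i\in G$, we have that $y$ does \emph{not} defeat $x$ in $\mathbf{P}$ according to $f$, and call $G$ simply \emph{vetoing} if it is vetoing over every ordered pair of distinct candidates. Note that an individual $i$ is a vetoer in the sense of the theorem precisely when $\{i\}$ is vetoing. First, Pareto together with the asymmetry of $f(\mathbf{P})$ gives at once that $V$ itself is vetoing. The theorem will follow if I can establish two lemmas: a \textbf{field-expansion lemma} (any coalition vetoing over some ordered pair is vetoing over every ordered pair) and a \textbf{contraction lemma} (any vetoing coalition of size at least two has a proper vetoing subcoalition). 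Iterating contraction, starting from the vetoing coalition $V$ and using finiteness, yields a singleton vetoing coalition, i.e.\ a vetoer.

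For field expansion, suppose $G$ is vetoing over $(x,y)$ and let $(a,b)$ be any other ordered pair. I would consider the four cases according to how $\{a,b\}$ overlaps $\{x,y\}$. In the disjoint case, which is where $|X|\ge 4$ enters essentially, I build a profile in which every voter in $G$ ranks $a>x>y>b$ while the voters outside $G$ rank $a$ and $b$ in whichever way is given but have $x$ and $y$ inserted so that Pareto forces $a$ to defeat $x$ and $y$ to defeat $b$. The hypothesis on $G$ (applied via Weak IIA to pass from this auxiliary profile to any profile matching it on $\{x,y\}$) rules out $y$ defeating $x$, and then the strict-weak-order structure of $f(\mathbf{P})$ (transitivity and negative transitivity) propagates this along the chain $a\to x,\; x\not\leftarrow y,\; y\to b$ to forbid $b$ from defeating $a$. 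The overlapping cases are handled similarly with one intermediary candidate. This transfer shows that $G$ is vetoing over $(a,b)$, and moreover, by swapping roles, over $(b,a)$, so $G$ is vetoing.

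For contraction, assume $G$ is vetoing with $|G|\ge 2$, pick a nonempty partition $G=G_1\sqcup G_2$, and fix three distinct candidates $a,b,c$ (which exist since $|X|\ge 4>3$). Construct a profile in which $G_1$ ranks $a>b>c$, $G_2$ ranks $c>a>b$, and voters outside $G$ rank $b>c$ and $c>a$ (with some linear extension to the remaining candidates). Since everyone in $G$ ranks $a>b$, the vetoing of $G$ forbids $b$ from defeating $a$; by Weak IIA applied between this profile and one whose $\{a,b\}$-restriction matches but whose other pairwise structure differs, one extracts that either $G_1$ is vetoing over $(a,b)$ or $G_2$ is vetoing over some pair involving $c$, the split being dictated by which of $a$ defeats $c$ or $c$ defeats $b$ holds in $f(\mathbf{P})$ (here the strict-weak-order output guarantees one of the two). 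By field expansion, whichever of $G_1,G_2$ emerges is a proper vetoing subcoalition of $G$.

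The main obstacle is the field-expansion lemma: under the full strength of IIA this is the well-known neutrality/monotonicity step, but Weak IIA only rules out the \emph{reversal} of defeat across profiles that agree on $\mathbf{P}|_{\{x,y\}}$, not the appearance or disappearance of defeat. One must therefore squeeze every use of independence through the asymmetric ``not-defeats'' conclusion, and this is exactly what forces the argument to take place among four rather than three candidates, since the needed Pareto-forced chain $a\to x,\; y\to b$ must use two auxiliary candidates disjoint from the target pair. Getting the bookkeeping right in that construction, and verifying that the SWF axioms really do propagate the nonreversal of $(x,y)$ to a nonreversal of $(a,b)$, is the delicate part; everything else is a straightforward adaptation of the standard Arrovian template.
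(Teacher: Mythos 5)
The paper itself does not prove this statement---it is quoted from \citealt{Baigent1987} (with the paper's footnote citing Campbell and Kelly for the necessity of four candidates)---so your proposal has to stand on its own, and it does not: the contraction lemma is where it breaks. In your splitting profile ($G_1$: $a>b>c$, $G_2$: $c>a>b$, outsiders: $b>c>a$), the vetoing of $G$ yields only the \emph{non}-defeat ``$b$ does not defeat $a$,'' and Pareto yields nothing among $a,b,c$ since no pair is unanimously ordered by all of $V$. Your parenthetical claim that the strict-weak-order structure then ``guarantees'' that $a$ defeats $c$ or $c$ defeats $b$ is false: the relation in which $a,b,c$ are mutually tied (with any remaining candidates placed at the bottom of all ballots and socially below them) is a strict weak order consistent with everything you know about this profile, and it satisfies neither disjunct. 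In the Arrovian template the analogous disjunction comes from the \emph{strict} social preference $aPb$ supplied by decisiveness, to which negative transitivity can be applied; veto power supplies no strict preference, so negative transitivity has nothing to act on. Worse, even when a strict defeat does hold in a constructed profile, Weak IIA only converts a defeat in one profile into a non-defeat in another; it never transports non-defeats. So the non-defeat facts you accumulate inside the splitting profile cannot be exported to conclude that $G_1$ or $G_2$ is (almost-)vetoing over anything. This is a missing idea, not bookkeeping.

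The field-expansion step has a related but repairable slip: you stop at ``$b$ does not defeat $a$'' in the auxiliary profile, which is not transferable; what you need, and can get, is the strict defeat $a$ over $b$ there (from $y$ defeats $b$ and not ($y$ defeats $x$), the strict-weak-order condition gives $x$ defeats $b$, then transitivity with $a$ defeats $x$ gives $a$ defeats $b$, which Weak IIA then converts into ``$b$ does not defeat $a$'' in any profile agreeing on $\{a,b\}$). But once repaired, expansion can also be run through overlapping pairs using a single auxiliary candidate, so your whole scheme---three-candidate contraction plus field expansion---would, if sound, prove the theorem for $|X|=3$, contradicting the three-candidate counterexample of Campbell and Kelly noted in the paper. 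That is a decisive sanity check: the fourth alternative must do work somewhere your argument never uses it, and Baigent's theorem is not obtained by substituting ``vetoing'' for ``decisive'' in the standard decisive-set/contraction proof. You would need a genuinely different mechanism for the contraction (or a different global strategy) that manufactures strict social preferences to feed through Weak IIA.
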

\noindent The existence of a vetoer for an SWF is inconsistent with the SWF satisfying both Pareto and Anonymity.

\begin{proposition}\label{AnonProp} Suppose $|V|\geq 2$ and $|X|\geq 3$. Let $f$ be a $(V,X)$-SWF satisfying Pareto. If $f$ has a vetoer, then $f$ has a unique vetoer and hence violates Anonymity.
\end{proposition}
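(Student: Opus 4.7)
The plan has two parts: first show that $f$ has at most one vetoer, using $|X|\geq 3$, Pareto, and the strict weak order output; combined with the hypothesis that $f$ has some vetoer, this yields a unique vetoer $i^*$. Second, show that the existence of a unique vetoer in a population with $|V|\geq 2$ forces Anonymity to fail.

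For uniqueness, I would argue by contradiction. Suppose $i\neq j$ are both vetoers for $f$. Pick three distinct candidates $a,b,c\in X$ (possible since $|X|\geq 3$) and construct a $(V,X)$-profile $\mathbf{P}$ whose ballots are arranged so that (i) $i$ ranks $a$ directly above $b$ (and $b$ above $c$), (ii) $j$ ranks $b$ above $c$ and $c$ above $a$, and (iii) every other voter (if any) also ranks $b$ above $c$. Concretely, one may take $i$'s ballot to begin $a\succ b\succ c\succ\cdots$, $j$'s ballot to begin $b\succ c\succ a\succ\cdots$, and every remaining voter's ballot also to begin $b\succ c\succ a\succ\cdots$. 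Since every voter ranks $b$ above $c$, Pareto gives that $b$ defeats $c$ in $\mathbf{P}$. Since $i$ is a vetoer ranking $a$ above $b$, $b$ does not defeat $a$; since $j$ is a vetoer ranking $c$ above $a$, $a$ does not defeat $c$. But $f(\mathbf{P})$ is a strict weak order, so from the fact that $b$ defeats $c$ we need, taking $z=a$, that $b$ defeats $a$ or $a$ defeats $c$—both of which we just ruled out. Contradiction.

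For the Anonymity violation, let $i^*$ be the unique vetoer and pick any $j\in V\setminus\{i^*\}$ (which exists because $|V|\geq 2$). I would show that Anonymity together with $i^*$ being a vetoer would force \emph{every} voter to be a vetoer, contradicting uniqueness. Fix any profile $\mathbf{P}$ and candidates $x,y$ with $x\mathbf{P}_j y$. Let $\mathbf{P}'$ be obtained from $\mathbf{P}$ by swapping the ballots of $i^*$ and $j$, so $x\mathbf{P}'_{i^*}y$. Since $i^*$ is a vetoer, $y$ does not defeat $x$ in $\mathbf{P}'$. By Anonymity, $y$ does not defeat $x$ in $\mathbf{P}$ either. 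Hence $j$ would be a vetoer—contradicting the uniqueness established in the first part. Thus $f$ violates Anonymity.

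The only delicate step is the construction in the uniqueness argument: the three candidates and the two rankings must be chosen so that the vetoes of $i$ and $j$ kill two edges of a 3-cycle $a\succ b\succ c\succ a$ while Pareto forces the third. The choice of ballots beginning $abc\cdots$ and $bca\cdots$ is the natural one, because it lets the same ordering of $b$ above $c$ be used to invoke Pareto, while $i$'s ranking of $a$ above $b$ and $j$'s ranking of $c$ above $a$ together leave the strict weak order condition on $bPc$ with no admissible witness at $z=a$.
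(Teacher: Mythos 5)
Your proof is correct and follows essentially the same route as the paper: the uniqueness argument is the paper's argument up to relabeling of the three candidates (a unanimous pair forced by Pareto, with each vetoer blocking one of the two witnesses required by the strict weak order condition), and your ballot-swapping argument simply spells out the Anonymity violation that the paper leaves implicit in ``hence violates Anonymity.''
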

\begin{proof} Suppose there are two vetoers $j$ and $k$. Consider a profile $\mathbf{P}$ in which (i) $x\mathbf{P}_iy$ for all $i\in V$, (ii) $y\mathbf{P}_jz$, and (iii) $z\mathbf{P}_kx$. By (i) and Pareto, $x$ defeats $y$ in $\mathbf{P}$ according to $f$. By (ii), $z$ does not defeat $y$, since $j$ is a vetoer. By (iii), $x$ does not defeat $z$, since $k$ is a vetoer. But since $f$ is an SWF, $f(\mathbf{P})$ is a strict weak order, so if $x$ defeats $y$, then either $z$ defeats $y$ or $x$ defeats $z$. Thus, we have a contradiction.\end{proof}

As a corollary of Theorem \ref{BaigentThm} and Proposition \ref{AnonProp}, we have the following.

\begin{corollary}\label{BaigentCor} Assume  $|X|\geq 4$. There is no $(V,X)$-SWF satisfying Weak IIA, Pareto, and Anonymity.
\end{corollary}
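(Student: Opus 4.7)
The plan is to obtain the corollary by directly chaining the two results just established, Theorem \ref{BaigentThm} (Baigent's vetoer theorem) and Proposition \ref{AnonProp} (incompatibility of a vetoer with Pareto and Anonymity). The argument will be by contradiction: I will suppose that some $(V,X)$-SWF $f$ satisfies all three of Weak IIA, Pareto, and Anonymity, and derive from this both the existence of a vetoer (via Baigent) and its non-existence (via Proposition \ref{AnonProp}).

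Concretely, first I would assume $V$ is finite, $|X|\geq 4$, and $f$ is a $(V,X)$-SWF satisfying Weak IIA, Pareto, and Anonymity. Since $f$ satisfies Weak IIA and Pareto and the cardinality hypothesis $|X|\geq 4$ of Theorem \ref{BaigentThm} is met, Theorem \ref{BaigentThm} yields a vetoer $i\in V$ for $f$. Second, I note that $|X|\geq 4$ implies $|X|\geq 3$, so (assuming the standard implicit hypothesis $|V|\geq 2$; see below) the cardinality requirements of Proposition \ref{AnonProp} are met. Applying Proposition \ref{AnonProp} to $f$---which is a $(V,X)$-SWF satisfying Pareto and possessing the vetoer $i$---gives that $f$ must violate Anonymity. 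This directly contradicts our assumption that $f$ satisfies Anonymity, completing the proof.

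I do not anticipate any real obstacle in carrying this out, since the corollary is a straightforward two-step combination of the preceding results and all technical work has already been done in the proofs of Theorem \ref{BaigentThm} and Proposition \ref{AnonProp}. The only subtle point worth flagging is that Proposition \ref{AnonProp} presupposes $|V|\geq 2$: when $|V|=1$, the unique voter is trivially a vetoer, Anonymity holds vacuously, and both Pareto and Weak IIA are satisfied by the SWF that simply outputs that voter's ranking, so the corollary must be read with the implicit understanding $|V|\geq 2$. Modulo this conventional assumption (matching the usual interpretation of Anonymity as a genuine symmetry condition on at least two voters), the corollary is then immediate.
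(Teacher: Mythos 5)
Your proof is correct and is exactly the derivation the paper intends: it presents the corollary as an immediate consequence of chaining Theorem \ref{BaigentThm} (Weak IIA and Pareto yield a vetoer) with Proposition \ref{AnonProp} (a vetoer plus Pareto forces a violation of Anonymity). Your remark that Proposition \ref{AnonProp} presupposes $|V|\geq 2$, so the corollary should be read with that implicit hypothesis (the one-voter dictatorship being a trivial exception), is a fair point that the paper leaves unstated.
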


\noindent In light of Lemma \ref{CoherentToWeak}, Theorem \ref{BaigentThm}, and Corollary \ref{BaigentCor}, only weakening IIA to Coherent IIA is not sufficient to escape Arrow-style impossibility results.

\subsubsection{Blau-Deb Theorem}

Weakening Arrow's assumption that the defeat relation is a strict weak order to it being acyclic is also not sufficient by itself. Blau and Deb \citeyearpar{Blau1977} prove a vetoer theorem for acyclic CCRs under IIA together with Neutrality and Monotonicity (recall Section \ref{StandardAx}). Let $f$ be a $(V,X)$-CCR.  A coalition $C\subseteq V$ of voters has \textit{veto power for $f$} if for any $(V,X)$-profile $\mathbf{P}$ and $x,y\in X$, if $x\mathbf{P}_iy$ for all $i\in C$, then $y$ does not defeat $x$ in $\mathbf{P}$ according to $f$. 

\begin{theorem}[\citealt{Blau1977}]\label{BlauDebThm} Let $f$ be an acyclic $(V,X)$-CCR satisfying IIA, Neutrality, and Monotonicity.
\begin{enumerate}
\item\label{BlauDebThma} For any partition of $V$ into at most $|X|$-many coalitions, at least one of the coalitions has veto power.
\item\label{BlauDebThmb} If $|X|\geq |V|$, then $f$ has a vetoer.
\end{enumerate}
\end{theorem}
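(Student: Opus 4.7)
The plan is to first establish a key lemma that converts the hypothesis ``$C$ has no veto power'' into a uniform per-pair consequence. If $C \subseteq V$ lacks veto power, witnessed by a profile $\mathbf{P}_0$ in which all voters in $C$ rank some $u$ above some $v$ while $v$ defeats $u$, then Neutrality (applied with a permutation swapping $u \leftrightarrow x$ and $v \leftrightarrow y$) yields for each pair $(x,y)$ an analogous profile in which $C$ ranks $x$ above $y$ and $y$ defeats $x$. Repeated applications of Monotonicity then push $y$ above $x$ on every ballot in $V \setminus C$ without losing the defeat, and IIA transfers the conclusion to every profile with the same $\{x,y\}$-restriction. The upshot: whenever $C$ uniformly ranks $x$ above $y$ and $V\setminus C$ uniformly ranks $y$ above $x$, we have that $y$ defeats $x$.

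For Part 1 in the case $k = m = |X|$, I would assume for contradiction that no $C_i$ has veto power. Enumerate $X = \{x_1,\dots,x_m\}$ and construct the single profile $\mathbf{P}$ in which every voter in $C_i$ submits the cyclic ballot $x_i\,x_{i+1}\,\cdots\,x_{i-1}$ (indices mod $m$). A direct check shows that for each $j$, only $C_{j+1}$ ranks $x_{j+1}$ above $x_j$, while $V \setminus C_{j+1}$ uniformly ranks $x_j$ above $x_{j+1}$; the key lemma applied to $C_{j+1}$ then yields that $x_j$ defeats $x_{j+1}$. Chaining over $j$ produces the cycle $x_1 \to x_2 \to \cdots \to x_m \to x_1$ in the defeat relation, contradicting acyclicity.

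For $k > m$, the plan is to reduce to the preceding case by systematically pooling the ``extras'' $C_{m+1},\dots,C_k$ into the partition in several configurations: each configuration yields an $m$-partition to which the $k=m$ argument applies, producing either veto power for an original $C_j$ (done) or veto power for a pooled block, and then combining the veto-power conclusions across configurations, together with IIA and Monotonicity, to isolate an original coalition that must carry the veto. Part 2 is then handled by specialising to the partition of $V$ into singletons: when $|V| \geq 2$, pick distinct $a_1,\dots,a_{|V|} \in X$ and assign voter $i_j$ a ballot beginning with the cyclic permutation starting from $a_j$ and ending with the remaining candidates in a fixed order, so the key lemma forces a defeat cycle of length $|V|$ among the $a_j$'s unless some voter is a vetoer; the degenerate case $|V| = 1$ is settled directly by analysing a two-candidate profile and showing that Neutrality plus Monotonicity rule out every ``anti-dictator'' response of $f$.

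The main obstacle is the transition from $k = m$ to $k > m$ in Part 1. Naïve pooling only yields veto power for the pooled union, which is strictly weaker than veto power for an individual coalition, since veto power is monotone under supersets but not under subsets. Bridging the gap requires a careful combinatorial argument comparing several pooling configurations simultaneously and exploiting Monotonicity to translate defeats across profiles that differ only in how the extras are distributed; this is the step where the hypothesis ``$k \geq |X|$'' is actually used in a non-trivial way, since with fewer coalitions than candidates no cyclic profile of the required length can be built.
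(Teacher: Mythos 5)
Your key lemma and your treatment of the case of exactly $|X|$ coalitions are correct, and they are the standard Blau--Deb argument (note the paper does not prove this theorem itself---it cites Blau and Deb 1977---so there is no in-paper proof to compare against): if $C$ lacks veto power, then Neutrality, Monotonicity and IIA give that $y$ defeats $x$ in every profile in which $C$ unanimously ranks $x$ over $y$ and $V\setminus C$ unanimously ranks $y$ over $x$, and feeding the cyclic profile into this lemma yields a defeat cycle, contradicting acyclicity. Your outline of Part 2 via singleton coalitions, with the separate $|V|=1$ case, is likewise fine.

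The genuine gap is the passage to partitions into \emph{more} than $|X|$ coalitions. The ``several pooling configurations'' step is not an argument, and it cannot be completed, because under the literal reading the claim is false in that regime. Counterexample: let $V=\{1,2,3,4\}$, $X=\{x,y,z\}$, and let $a$ defeat $b$ iff at least three voters rank $a$ above $b$. This $(V,X)$-CCR satisfies IIA, Neutrality and Monotonicity, and it is acyclic, since any three subsets of a four-element set each of size at least three have a common member, and that voter would need cyclic preferences. Yet in the partition into four singletons no coalition has veto power, because any three voters can force a defeat. What Blau and Deb actually prove---and what the paper's remark deriving Part 2 from Part 1 via the finest partition requires, given that Part 2 assumes $|X|\geq|V|$ so the finest partition has \emph{at most} $|X|$ blocks---is the statement for partitions into at most $|X|$ coalitions; the theorem's ``at least'' should be read that way. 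Under that reading your construction already suffices for every $k$ with $2\leq k\leq|X|$: run the cycle on $k$ of the candidates, appending the remaining candidates at the bottom of every ballot in a fixed order; and $k=1$ is trivial, since the grand coalition always has veto power (otherwise the empty coalition's opposition could sustain a defeat, contradicting asymmetry). Your closing diagnosis is also reversed: the cyclic construction works fine with fewer coalitions than candidates and is exactly what breaks when there are more coalitions than candidates.
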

\begin{remark} Part \ref{BlauDebThmb} is an immediate consequence of part \ref{BlauDebThma} by considering the finest partition.
\end{remark}
\begin{remark}\label{Availability} Inspection of the proof of the Veto Theorem in \citealt{Blau1977} shows that the assumption of acyclicity may be replaced by the weaker axiom of Availability (recall Section \ref{StandardAx}).
\end{remark}

As an example of applying Theorem \ref{BlauDebThm}.\ref{BlauDebThma}, if there are five candidates, then for any partition of the electorate into five coalitions---say, five coalitions of equal size---one of the five coalitions has veto power (and hence, assuming Anonymity, all coalitions of the same size would have veto power). Moreover, in the variable-candidate setting, we can use Theorem \ref{BlauDebThm} to prove the existence of a single vetoer under a variable-candidate version of IIA (see Proposition \ref{BlauDebThm2} below), without the assumption that $|X|\geq |V|$. Thus, weakening Arrow's strict weak order assumption to the assumption of acyclicity (or even Availability) is not enough by itself to escape Arrow-style impossibility theorems, if we would like to retain Neutrality and Monotonicity.

It is the combination of weakening IIA to Coherent IIA and weakening Arrow's strict weak order assumption to acyclicity that allows Split Cycle to escape Arrow-style impossibility theorems. 

\subsection{Impossibility theorems in the variable-candidate setting}\label{IIAVCCR}

Since we have analyzed Split Cycle as a VCCR in this paper, to properly make claims about how Split Cycle relates to Arrow-style impossibility theorems, we should recast these results in the variable-election setting. In this setting, there are two versions of Arrow's Independence of Irrelevant Alternatives (IIA).

\begin{definition}\label{IIAdef} Let $f$ be a VCCR.
\begin{enumerate}
\item\label{IIAdef1} $f$ satisfies \textit{fixed-candidate Independence of Irrelevant Alternatives} (FIIA) if for any profiles $\mathbf{P}$ and $\mathbf{P}'$ with $X(\mathbf{P})=X(\mathbf{P}')$, if $\mathbf{P}_{\mid\{x,y\}}=\mathbf{P}'_{\mid\{x,y\}}$, then  $x$ defeats $y$ in $\mathbf{P}$ according to $f$ if and only if $x$ defeats $y$ in $\mathbf{P}'$ according to $f$; 
\item $f$ satisfies \textit{variable-candidate Independence of Irrelevant Alternatives} (VIIA) if for any profiles $\mathbf{P}$ and $\mathbf{P}'$ with $x,y\in X(\mathbf{P})\cap X(\mathbf{P}')$, if $\mathbf{P}_{\mid\{x,y\}}=\mathbf{P}'_{\mid\{x,y\}}$, then $x$ defeats $y$ in $\mathbf{P}$ according to $f$ if and only if $x$ defeats $y$ in $\mathbf{P}'$ according to $f$.
\end{enumerate}
\end{definition}

We suggest in Appendix \ref{ArrowSection} that if asked to formulate his axioms for VCCRs, Arrow would formulate IIA as~VIIA. Our Coherent IIA is a weakening of VIIA, as Coherent IIA strengthens the assumption from $\mathbf{P}_{\mid\{x,y\}}=\mathbf{P}'_{\mid\{x,y\}}$ to the assumption that not only $\mathbf{P}_{\mid\{x,y\}}=\mathbf{P}'_{\mid\{x,y\}}$ but also that the margin graph of $\mathbf{P}'$ is obtained from that of $\mathbf{P}$ in a certain way. 

\begin{proposition}\label{VIIACoherent} Any VCCR satisfying VIIA also satisfies Coherent IIA.
\end{proposition}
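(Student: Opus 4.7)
The plan is to observe that the hypothesis of Coherent IIA is a strengthening (not a weakening) of the precondition of VIIA, so whatever VIIA delivers whenever $\mathbf{P}_{\mid\{x,y\}}=\mathbf{P}'_{\mid\{x,y\}}$, it delivers \emph{a fortiori} when the additional margin-graph conditions hold. Concretely, I would assume $f$ satisfies VIIA and assume the antecedent of Coherent IIA: $x$ defeats $y$ in $\mathbf{P}$, and $\mathbf{P}'$ is a profile with $\mathbf{P}_{\mid\{x,y\}}=\mathbf{P}'_{\mid\{x,y\}}$ whose margin graph is obtained from $\mathcal{M}(\mathbf{P})$ by deleting zero or more candidates other than $x$ and $y$, and deleting or reducing margins on zero or more edges not connecting $x$ and $y$.

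The first step is to verify the candidate-overlap condition required to apply VIIA. Since $x$ defeats $y$ in $\mathbf{P}$, we have $x,y\in X(\mathbf{P})$; since the permitted deletions in $\mathcal{M}(\mathbf{P}')$ explicitly exclude $x$ and $y$, we have $x,y\in X(\mathbf{P}')$ as well, so $x,y\in X(\mathbf{P})\cap X(\mathbf{P}')$. The second step is immediate: the hypothesis $\mathbf{P}_{\mid\{x,y\}}=\mathbf{P}'_{\mid\{x,y\}}$ is exactly the precondition of VIIA for the pair $(x,y)$. Applying VIIA yields that $x$ defeats $y$ in $\mathbf{P}$ iff $x$ defeats $y$ in $\mathbf{P}'$, and combined with the first conjunct of our assumption we conclude that $x$ defeats $y$ in $\mathbf{P}'$, which is the conclusion of Coherent IIA.

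There is no real obstacle here; the margin-graph conditions in Coherent IIA's hypothesis are simply never invoked. Their role is to make Coherent IIA strictly \emph{weaker} than VIIA (the converse of the proposition fails, as witnessed by Split Cycle: by Proposition \ref{CoherentProp} it satisfies Coherent IIA, but it violates VIIA because enlarging the candidate set can introduce a majority cycle through $x$ and $y$ that raises the splitting-number threshold, erasing a defeat while leaving $\mathbf{P}_{\mid\{x,y\}}$ unchanged). For the present direction, though, those conditions are harmless restrictions on the conclusion's domain of application, so the implication reduces to a single application of VIIA after the one-line candidate-membership check.
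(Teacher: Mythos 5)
Your proof is correct and matches the paper's reasoning: the paper offers no separate argument for Proposition \ref{VIIACoherent} beyond the observation that Coherent IIA only strengthens the antecedent of VIIA (adding the margin-graph condition), which is precisely the \emph{a fortiori} argument you spell out, including the routine check that $x,y\in X(\mathbf{P})\cap X(\mathbf{P}')$. Your parenthetical that the converse fails via Split Cycle is also consistent with the paper (Example \ref{IIAExample} already shows Split Cycle violates IIA-type conditions while Proposition \ref{CoherentProp} gives Coherent IIA).
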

\noindent We reject VIIA in favor of Coherent IIA for the reasons explained in Section \ref{CoherentIIASection}.

 Arrow's Impossibility Theorem can be stated in the variable-election setting using some additional notions. First, given a finite $V\subset \mathcal{V}$, a \textit{$V$-profile} is a profile $\mathbf{P}$ as in Definition \ref{ProfileDef} in which $V(\mathbf{P})=V$. Second, given $i\in V$ and finite $X\subset \mathcal{X}$, we say that $i$ is \textit{$(V,X)$-dictator} (resp.~\textit{$V$-dictator}) \textit{for $f$} if for any $(V,X)$-profile (resp.~$V$-profile) $\mathbf{P}$ and $x,y\in X(\mathbf{P})$, $x\mathbf{P}_iy$ implies that $x$ defeats $y$ in $\mathbf{P}$ according to $f$.
 
\begin{theorem}[Arrow's Theorem for VSWFs]\label{Arrow} Suppose $f$ is a VSWF satisfying the Pareto principle.
\begin{enumerate}
\item\label{Arrow1} If $f$ satisfies FIIA, then for any finite sets $V\subset\mathcal{V}$ and $X\subset\mathcal{X}$  with $|X|\geq 3$, there is a $(V,X)$-dictator for $f$.
\item\label{Arrow2} If $f$ satisfies VIIA, then for any finite set $V\subset\mathcal{V}$, there is a $V$-dictator for $f$.
\end{enumerate}
\end{theorem}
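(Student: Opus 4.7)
My plan is to derive both parts from Theorem \ref{ArrowThm} by restricting the VSWF to fixed candidate sets and then, for part \ref{Arrow2}, showing that the dictators for different candidate sets must coincide via VIIA.

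For part \ref{Arrow1}, fix finite $V$ and $X$ with $|X|\geq 3$, and consider the restriction $f_{\mid V,X}$. Because $f$ is a VSWF, $f_{\mid V,X}$ is a $(V,X)$-SWF. Because $f$ satisfies Pareto and FIIA and FIIA applied to profiles with $X(\mathbf{P})=X(\mathbf{P}')=X$ is exactly IIA in the fixed-candidate sense, the restriction $f_{\mid V,X}$ satisfies IIA and Pareto. Theorem \ref{ArrowThm} then delivers a dictator $i\in V$ for $f_{\mid V,X}$, which by definition is a $(V,X)$-dictator for $f$.

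For part \ref{Arrow2}, first note that VIIA immediately implies FIIA (take $\mathbf{P},\mathbf{P}'$ with $X(\mathbf{P})=X(\mathbf{P}')$), so part \ref{Arrow1} applies: for every finite $X\subseteq\mathcal{X}$ with $|X|\geq 3$, there is a $(V,X)$-dictator $i_X$ for $f$. Uniqueness of $i_X$ follows from Pareto and $|V|\geq 2$ is not needed since distinct dictators would conflict on any profile where they disagree on some pair (using transitivity of the strict weak order $f(\mathbf{P})$). The crux of the argument is the following claim: $i_X = i_{X'}$ whenever both $|X|\geq 3$ and $|X'|\geq 3$. I would prove this by considering $Y = X\cup X'$, which has $|Y|\geq 3$, and showing $i_Y = i_X$ (and symmetrically $i_Y = i_{X'}$). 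Take any $(V,X)$-profile $\mathbf{P}$ and $x,y\in X$ with $x\mathbf{P}_{i_Y}y$. Extend $\mathbf{P}$ to a $(V,Y)$-profile $\mathbf{P}^\star$ by inserting the candidates in $Y\setminus X$ into each voter's ballot in some arbitrary but fixed way, chosen so that $\mathbf{P}^\star_{\mid\{x,y\}} = \mathbf{P}_{\mid\{x,y\}}$ (always possible, since $x,y\in X$). Since $i_Y$ is the $(V,Y)$-dictator and $x\mathbf{P}^\star_{i_Y}y$, we get that $x$ defeats $y$ in $\mathbf{P}^\star$ according to $f$. By VIIA, $x$ defeats $y$ in $\mathbf{P}$. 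Hence $i_Y$ is a $(V,X)$-dictator, so by uniqueness $i_Y=i_X$. Let $i^\star$ denote this common dictator.

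It remains to check that $i^\star$ is a $V$-dictator, that is, serves as a dictator for every $V$-profile, including those with $|X(\mathbf{P})|\leq 2$. For $|X(\mathbf{P})|\geq 3$ this follows directly from the previous paragraph. For $|X(\mathbf{P})|=2$, say $X(\mathbf{P})=\{x,y\}$ with $x\mathbf{P}_{i^\star}y$, pick any $z\in \mathcal{X}\setminus\{x,y\}$ (possible since $\mathcal{X}$ is infinite) and extend $\mathbf{P}$ to a $(V,\{x,y,z\})$-profile $\mathbf{P}^\star$ with $\mathbf{P}^\star_{\mid\{x,y\}}=\mathbf{P}_{\mid\{x,y\}}$; since $i^\star$ is the $(V,\{x,y,z\})$-dictator, $x$ defeats $y$ in $\mathbf{P}^\star$, and VIIA transfers this to $\mathbf{P}$. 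The case $|X(\mathbf{P})|=1$ is vacuous. I expect the main obstacle to be bookkeeping in the identification step: making the lifting step transparent, carefully invoking VIIA on pairs $x,y$ that lie in two different candidate sets, and ensuring uniqueness of the Arrow dictator so that the voter extracted from different candidate sets is provably the same. Once that is nailed down, the extension to small-candidate profiles via a single auxiliary candidate is routine.
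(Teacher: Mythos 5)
Your proof is correct and follows essentially the same route as the paper: restrict $f$ to fixed candidate sets, apply Theorem \ref{ArrowThm}, and use VIIA to transfer defeats between a profile and its extensions so that the dictators for different candidate sets (via a union set) must coincide. Your explicit uniqueness lemma and the separate treatment of two-candidate profiles via one auxiliary candidate are only cosmetic variations on the paper's argument, which instead fixes one $X_0$ with $|X_0|\geq 3$ and compares dictators on $X_0$ and $X_0\cup X(\mathbf{Q})$.
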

\begin{proof} For part \ref{Arrow1}, let $f_{\mid V,X}$ be the restriction of $f$ to $(V,X)$-profiles. Then $f_{\mid V,X}$ is a $(V,X)$-SWF as in Section \ref{EscapeSection} satisfying IIA and Pareto. Since $|X|\geq 3$, Arrow's Theorem (Theorem \ref{ArrowThm}) gives us the desired $(V,X)$-dictator for $f$.

For part \ref{Arrow2}, consider any finite $V\subset\mathcal{V}$. Pick some finite $X_0\subset \mathcal{X}$ such that $|X_0|\geq 3$. Then as in part \ref{Arrow1}, Arrow's Theorem applied to $f_{\mid V,X_0}$ gives us an $i_{V,X_0}\in V$ who is a $(V,X_0)$-dictator for $f$. We claim that $i_{V,X_0}$ is a $V$-dictator for $f$. Let $\mathbf{Q}$ be a $V$-profile. We must show that for all $x,y\in X(\mathbf{Q})$, $x\mathbf{Q}_{i_{V,X_0}}y$ implies that $x$ defeats $y$ in $\mathbf{Q}$ according to $f$. Suppose $x\mathbf{Q}_{i_{V,X_0}}y$. Let $X=X_0\cup X(\mathbf{Q})$. Since $|X|\geq 3$, Arrow's Theorem applied to $f_{\mid V,X}$ gives us an $i_{V,X}\in V$ who is a $(V,X)$-dictator for $f$. We claim that $i_{V,X}=i_{V,X_0}$. Suppose not. There is a $(V,X)$-profile $\mathbf{P}^\star$ such that for some $a,b\in X_0$, voter $i_{V,X_0}$ ranks $a$ above $b$ in $\mathbf{P}^\star$ while $i_{V,X}$ ranks $b$ above $a$ in $\mathbf{P}^\star$. Then since $i_{V,X}$ is a $(V,X)$-dictator, $b$ defeats $a$ in $\mathbf{P}^\star$ according to $f$. Hence by VIIA, $b$ defeats $a$ in $\mathbf{P}^\star_{\mid X_0}$ according to $f$. But this contradicts the fact that $i_{V,X_0}$ is a $(V,X_0)$-dictator, given that $X(\mathbf{P}^\star_{\mid X_0})=X_0$ and $i_{V,X_0}$ ranks $a$ above $b$ in $\mathbf{P}^\star_{\mid X_0}$. Hence $i_{V,X}=i_{V,X_0}$, so $i_{V,X_0}$ is a $(V,X)$-dictator for $f$.  Now let $\mathbf{Q}^+$ be any $(V,X)$-profile such that $\mathbf{Q}^+_{\mid X}=\mathbf{Q}$. Then $x\mathbf{Q}_{i_{V,X_0}}y$ implies $x\mathbf{Q}^+_{i_{V,X_0}}y$, so $x$ defeats $y$ in $\mathbf{Q}^+$ according to $f$ because $i_{V,X_0}$ is a $(V,X)$-dictator. Hence by VIIA, $x$ defeats $y$ in $\mathbf{Q}$ according to $f$, which completes the proof, as diagrammed in Figure \ref{UpDown} with $Y=X(\mathbf{Q})$.\end{proof}

\begin{figure}[h]
\begin{center}
\begin{minipage}{2in}\begin{tikzpicture}

\node at (0,0) (a) {$(V,X_0)$-dictator}; 
\node[minimum width=0.25in] at (4,0) (b) {$(V,Y)$-dictator}; 
\node[minimum width=0.25in] at (2,1.5) (c) {$(V,X_0\cup Y)$-dictator}; 

\path[->,draw,thick] (a) to   (b);
\path[->,draw,thick,dashed] (c) to  (b);
\path[->,draw,thick, dashed] (a) to (c);

\end{tikzpicture}
\end{minipage}\end{center}
\caption{To show that any $(V,X_0)$-dictator is also a $(V,Y)$-dictator, we first show that any $(V,X_0)$-dictator is also a $(V,X_0\cup Y)$-dictator and then show that any $(V,X_0\cup Y)$-dictator is also a $(V,Y)$-dictator.}\label{UpDown}
\end{figure}

\begin{remark} As in Remark \ref{StrongDictator}, since our profiles are profiles of linear ballots, the conclusions of parts \ref{Arrow1} and \ref{Arrow2} of Theorem \ref{Arrow} can be strengthened with `strong dictator' in place of `dictator'.
\end{remark}

\begin{remark} There are VSWFs satisfying Pareto and VIIA for which there is no $i\in \mathcal{V}$ who is a $V$-dictator with respect to all finite $V\subset \mathcal{V}$ with $i\in V$. For example, let $\mathcal{V}$ be the set of natural numbers, and for any profile $\mathbf{P}$, let $f(\mathbf{P})=\mathbf{P}_{\mathrm{max}(V(\mathbf{P}))}$, where $\mathrm{max}(V(\mathbf{P}))$ is the greatest number in the set $V(\mathbf{P})$. Thus, in the variable-voter setting Arrow's axioms are consistent with different electorates having different dictators.\end{remark}

Just as Arrow's Theorem can be adapted to the variable-election setting, so  can Baigent's Theorem (Theorem \ref{BaigentThm}), which we leave as an exercise to the reader (hint: use Proposition \ref{AnonProp} to obtain the analogue of Theorem \ref{Arrow}.\ref{Arrow2}). More interesting is the reformulation of the Blau-Deb Theorem (Theorem \ref{BlauDebThm}) in the variable-election setting---in particular, the variable-candidate setting---as VIIA allow us to strengthen the conclusion of the theorem to state the existence of a vetoer without the restriction that $|X|\geq |V|$.  

To state the variable-candidate version of the Blau-Deb Theorem, we need the following notions. Given finite $V\subset \mathcal{V}$, $i\in V$, finite $X\subset \mathcal{X}$, and $a,b\in X$, we say that:

\begin{itemize}
\item $i$ is a \textit{$(V,X)$-vetoer for $f$ on $(a,b)$} if for all $(V,X)$-profiles $\mathbf{P}$,  if $a\mathbf{P}_ib$, then  $b$ does not defeat $a$ in $\mathbf{P}$ according to~$f$;
\item $i$ is a \textit{$(V,X)$-vetoer for $f$} if for every $a,b\in X$, $i$ is a $(V,X)$-vetoer for $f$ on $(a,b)$;
\item $i$ is a \textit{$V$-vetoer for $f$} if for every finite $X\subset \mathcal{X}$, $i$ is a $(V,X)$-vetoer for $f$.
\end{itemize}

\begin{theorem}\label{BlauDebThm2} If $f$ is a VCCR satisfying VIIA, Availability, Neutrality, and Monotonicity, then for any finite $V\subset \mathcal{V}$, $f$ has a $V$-vetoer.
\end{theorem}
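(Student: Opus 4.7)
The plan is to mirror the proof of Theorem \ref{Arrow}.\ref{Arrow2} by combining the fixed-candidate Blau--Deb theorem with VIIA to obtain a conclusion valid across all candidate sets. Rather than directly exhibiting a $V$-vetoer, I will argue by contrapositive: if $f$ had no $V$-vetoer, then for some sufficiently large finite $X\subseteq\mathcal{X}$ the restriction $f_{\mid V,X}$ would have no $(V,X)$-vetoer, contradicting Theorem \ref{BlauDebThm}.\ref{BlauDebThmb}. For any finite $X$, observe that $f_{\mid V,X}$ inherits Neutrality and Monotonicity immediately, inherits fixed-candidate IIA from VIIA, and inherits Availability (which substitutes for acyclicity by Remark \ref{Availability}); so Theorem \ref{BlauDebThm}.\ref{BlauDebThmb} applies to $f_{\mid V,X}$ whenever $|X|\geq |V|$.

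Suppose no $i\in V$ is a $V$-vetoer for $f$. For each $i\in V$, I would unpack this to obtain a witness: a finite $X_i\subseteq\mathcal{X}$, a $(V,X_i)$-profile $\mathbf{P}^i$, and candidates $a_i,b_i\in X_i$ with $a_i\mathbf{P}^i_i b_i$ yet $b_i$ defeats $a_i$ in $\mathbf{P}^i$ according to $f$. Let $X\subseteq\mathcal{X}$ be any finite set with $\bigcup_{i\in V}X_i\subseteq X$ and $|X|\geq |V|$, padding with fresh candidates if necessary (possible since $\mathcal{X}$ is infinite). For each $i$, extend every ballot in $\mathbf{P}^i$ arbitrarily to a linear order on $X$ to produce a $(V,X)$-profile $\mathbf{Q}^i$ with $\mathbf{Q}^i_{\mid X_i}=\mathbf{P}^i$. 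Then $\mathbf{Q}^i_{\mid\{a_i,b_i\}}=\mathbf{P}^i_{\mid\{a_i,b_i\}}$, so VIIA yields that $b_i$ defeats $a_i$ in $\mathbf{Q}^i$ while $a_i\mathbf{Q}^i_i b_i$. Hence no $i\in V$ is a $(V,X)$-vetoer for $f_{\mid V,X}$, contradicting Theorem \ref{BlauDebThm}.\ref{BlauDebThmb} applied to $f_{\mid V,X}$.

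The essential move is the VIIA-lifting of local witnesses of non-vetodom into a single sufficiently large common candidate set. The main obstacle, and the reason to prefer the contrapositive over the direct approach of Theorem \ref{Arrow}.\ref{Arrow2}, is the nonuniqueness of vetoers: there is no evident way to pick a vetoer $i$ in some fixed $X_0$ of size at least $|V|$ and show that \emph{this same} $i$ remains a vetoer for every larger $Y$, because, unlike dictators (whose uniqueness follows from asymmetry of defeat, as in the argument around Figure \ref{UpDown}), distinct voters can simultaneously be vetoers on a given $(V,X)$. Collecting failures globally via VIIA sidesteps this difficulty by using only the bare \emph{existence} of a vetoer guaranteed by Blau--Deb at each candidate set.
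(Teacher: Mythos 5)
Your proof is correct, but it takes a genuinely different route from the paper's. The paper argues directly: it fixes one $X_0$ with $|X_0|\geq |V|$, obtains a $(V,X_0)$-vetoer $i_{V,X_0}$ from Theorem \ref{BlauDebThm}.\ref{BlauDebThmb} (with Remark \ref{Availability}), and then shows that \emph{this same} voter is a $V$-vetoer: for any $V$-profile $\mathbf{Q}$ it passes to $X=X_0\cup X(\mathbf{Q})$, uses VIIA to lift the veto on pairs from $X_0$ to $(V,X)$-profiles, invokes the permutation form of Neutrality to upgrade a veto on one pair $(a,b)\subseteq X_0$ to a veto on all pairs in $X$, and finally uses VIIA again to push the veto down to $\mathbf{Q}$ (the same up-then-down shape as Figure \ref{UpDown}). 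You instead argue by contraposition: if no voter is a $V$-vetoer, each $i\in V$ supplies a witness profile over some finite $X_i$, and VIIA (applied pairwise) transports all these witnesses into one common candidate set $X\supseteq\bigcup_i X_i$ with $|X|\geq|V|$, contradicting the existence of a $(V,X)$-vetoer guaranteed by Blau--Deb. Your route only uses the bare existential conclusion of Blau--Deb plus the pair-locality of VIIA, and it avoids the paper's extra Neutrality-permutation step (Neutrality is needed only as a hypothesis of Blau--Deb itself); the price is that it is nonconstructive about \emph{which} voter is the $V$-vetoer, whereas the paper's argument pins down a specific voter found on $X_0$. One small caveat about your closing remark: the nonuniqueness of vetoers does not actually block the direct approach --- the paper's argument never needs uniqueness, since VIIA lifts a pair-specific veto to larger candidate sets and Neutrality then spreads it to all pairs --- but this overstatement does not affect the validity of your proof.
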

\begin{proof} Consider any finite $V\subset \mathcal{V}$. Pick some finite $X_0\subset \mathcal{X}$ such that $|X_0|\geq |V|$. Then Theorem \ref{BlauDebThm}.\ref{BlauDebThmb} (and Remark \ref{Availability}) applied to $f_{\mid V,X_0}$ gives us an $i_{V,X_0}\in V$ who is a $(V,X_0)$-vetoer for $f$. We claim that $i_{V,X_0}$ is a $V$-vetoer for $f$. Let $\mathbf{Q}$ be a $V$-profile. We must show that for all $x,y\in X(\mathbf{Q})$, $x\mathbf{Q}_{i_{V,X_0}}y$ implies that $y$ does not defeat $x$ in $\mathbf{Q}$ according to $f$. Suppose $x\mathbf{Q}_{i_{V,X_0}}y$. Let $X=X_0\cup X(\mathbf{Q})$. We claim that for any $a,b\in X_0$, voter $i_{V,X_0}$ is a $(V,X)$-vetoer on $(a,b)$. Suppose $\mathbf{P}$ is a $(V,X)$-profile such that $a\mathbf{P}_{i_{V,X_0}}b$. Then  $i_{V,X_0}$ ranks $a$ above $b$ in the restricted profile  $\mathbf{P}_{\mid X_0}$, and $i_{V,X_0}$ is a $(V,X_0)$-vetoer, so $b$ does not defeat $a$ in $\mathbf{P}_{\mid X_0}$, which by VIIA implies that $b$ does not defeat $a$ in $\mathbf{P}$. Thus, $i_{V,X_0}$ is a $(V,X)$-vetoer on $(a,b)$, which by Neutrality implies that $i_{V,X_0}$ is a $(V,X)$-vetoer.\footnote{For suppose $i_{V,X_0}$ is not a $(V,X)$-vetoer, so there are  $a',b'\in X$ and a $(V,X)$-profile $\mathbf{Q}$ such that $i_{V,X_0}$ ranks $a'$ above $b'$ in $\mathbf{Q}$ but $b'$ defeats $a'$ in $\mathbf{Q}$. Consider any permutation $\sigma$ of $X$ such that $\sigma(a)=a'$ and $\sigma(b)=b'$. Applying this permutation to $\mathbf{Q}$ as in Footnote \ref{Permutation} yields a profile $\sigma\mathbf{Q}$ in which $i_{V,X_0}$ ranks $a$ above $b$.  By the permutation version of Neutrality in Footnote \ref{Permutation}, since $b'$ defeats $a'$ in $\mathbf{Q}$, $b$ defeats $a$ in $\sigma\mathbf{Q}$. This contradicts the fact that $i_{V,X_0}$ is a $(V,X)$-vetoer on $(a,b)$.}   Now let $\mathbf{Q}^+$ be any $(V,X)$-profile extending $\mathbf{Q}$. Then $x\mathbf{Q}_{i_{V,X_0}}y$ implies $x\mathbf{Q}^+_{i_{V,X_0}}y$, so $y$ does not defeat $x$ in $\mathbf{Q}^+$ according to $f$ because $i_{V,X_0}$ is a $(V,X)$-vetoer. Hence by VIIA, $y$ does not defeat $x$ in $\mathbf{Q}$ according to $f$, which completes the proof.
\end{proof}

\noindent Theorem \ref{BlauDebThm2} shows how moving to the variable-candidate setting and interpreting IIA as VIIA can strengthen impossibility theorems. But by weakening VIIA to Coherent IIA, impossibility results like Theorem \ref{BlauDebThm2} disappear. Split Cycle satisfies Coherent IIA, Availability, Neutrality, and Monotonicity but has no vetoer. 

\subsection{Impossibility, independence, and choice consistency}\label{AlphaVoting}

Our rejection of VIIA in favor of Coherent IIA also leads us to reject another well-known principle that is related to VIIA, at least under one interpretation. In particular, the term `Independence of Irrelevant Alternatives' is sometimes used in the theory of rational choice for a condition that differs from Arrow's but also leads to impossibility theorems when applied in a certain way to voting, as explained below. 

A \textit{choice function} on a set $X$ is a function $\mathcal{C}$ such that for any nonempty subset $Y$ of $X$, $\mathcal{C}(Y)$ is a nonempty subset of $Y$. The intuitive interpretation is that in a given choice situation, it may be that not all alternatives in $X$ are available or feasible; if $Y$ is the set of feasible alternatives, called a \textit{menu} or \textit{feasible set}, the decision maker decides that the ultimately chosen alternative comes from $\mathcal{C}(Y)$. If $\mathcal{C}(Y)$ is a singleton set,  the decision maker has chosen the alternative in that set, whereas if $\mathcal{C}(Y)$ is not a singleton, then some further choice process or tiebreaking mechanism is applied (see \citealt[pp.~14-15]{Schwartz1986}).

A choice function satisfies Sen's \citeyearpar{Sen1971} condition $\alpha$ if
\[\mbox{for all nonempty $Z\subseteq Y\subseteq X$, we have $Z\cap \mathcal{C}(Y)\subseteq \mathcal{C}(Z)$}.\]
As a famous illustration of this condition in the context of individual choice, attributed to Sidney Morgenbesser, imagine that when offered a choice between apple pie and blueberry pie, you choose apple pie; but when offered a choice between apple, blueberry, and cherry, you switch to blueberry. This violates $\alpha$ where $Z=\{\mbox{apple, blueberry}\}$, $Y=\{\mbox{apple, blueberry, cherry}\}$, $\mathcal{C}(Y)=\{\mbox{blueberry}\}$, and $\mathcal{C}(Z)=\{\mbox{apple}\}$. Sen's $\alpha$ is also known as `Chernoff's axiom' (\citealt{Chernoff1954}) and sometimes `Independence of Irrelevant Alternatives' (cf.~\citealt{Radner1954}). But as Suzumura warns \citeyearpar[p.~66]{Suzumura1983}, ``[C]are should be taken concerning the occasional unfortunate confusions in the literature between condition I [IIA] and Chernoff's axiom of choice consistency, despite rather obvious contextual differences between them.'' In Appendix \ref{ArrowSection}, we discuss the common allegation that Arrow himself was guilty of this confusion, as it relates to VIIA.

How can a choice-consistency axiom such as Sen's $\alpha$ be applied to voting to compare it with IIA? The first question is: what does a \textit{feasible set} of candidates represent? One natural interpretation is that $X(\mathbf{P})$ is the set of candidates whose names appear on the ballots in the election represented by $\mathbf{P}$, but after the ballots are collected, a candidate might withdraw, become incapacitated, etc., leaving us with a new feasible set $Y\subset X(\mathbf{P})$ of candidates\footnote{Another interpretation (cf.~\citealt{Bordes1991}) is that $X(\mathbf{P})$ is the set of all possible candidates for office, containing possibly millions of individuals, while the feasible set $Y$ is the set of candidates whose names appear on the voters' ballots in a given election. The problem with this interpretation is that in practice voting methods can only input voters' ballots, not their internal preferences for candidates whose names do not appear on ballots.} and the question of how to choose from $Y$ using the voters' rankings of $X(\mathbf{P})$. To answer this question, given an acyclic VCCR $f$ and profile $\mathbf{P}$, there are two ways to use $f$ and $\mathbf{P}$ to define a choice function on $X(\mathbf{P})$:\footnote{For $\mathcal{G}_f(\mathbf{P},Y)$ to be a choice function, i.e., for $ \varnothing \neq Y\subseteq X(\mathbf{P})$ to imply $\mathcal{G}_f(\mathbf{P},Y)\neq\varnothing$, $f$ must be acyclic. But for $\mathcal{L}_f(\mathbf{P},\cdot)$  to be a choice function, it suffices that $f$ satisfies the weaker axiom of Availability.}
\begin{enumerate}
\item the global choice function $\mathcal{G}_f(\mathbf{P},\cdot)$ induced by $f,\mathbf{P}$: for any nonempty $Y\subseteq X(\mathbf{P})$, \[\mathcal{G}_f(\mathbf{P},Y)=\{y\in Y\mid \mbox{there is no $z\in Y$ that defeats $y$ in $\mathbf{P}$ according to $f$}\}.\]
\item the local choice function $\mathcal{L}_f(\mathbf{P},\cdot)$ induced by $f,\mathbf{P}$: for any nonempty $Y\subseteq X(\mathbf{P})$,
\[\mathcal{L}_f(\mathbf{P},Y)=\{y\in Y\mid \mbox{there is no $z\in Y$ that defeats $y$ in $\mathbf{P}_{\mid Y}$ according to $f$}\}.\]
\end{enumerate}
Intuitively, the local choice function chooses from the feasible set $Y$ by first erasing the names of candidates who have withdrawn, becoming incapacitated, etc., from each voter's ballot and then applying the voting method to the restricted profile $\mathbf{P}_{\mid Y}$. By contrast, while the global choice function excludes the unfeasible candidates from contention, it does not erase their names from voter's ballots, so voters' rankings of balloted but ultimately unfeasible candidates may still affect which of the other candidates are chosen from $Y$.

\begin{example}\label{BordaExample} The distinction between the global choice function and local choice function can be illustrated by the well-known distinction between global Borda count and local Borda count.\footnote{This terminology is due to Kelly \citeyearpar[p.~71, 74]{Kelly1988}. Sen  \citeyearpar[pp.~78-9]{Sen1977} uses the terms `broad' and `narrow'.} Let $f$ be the Borda VCCR according to which $x$ defeats $y$ in a profile $\mathbf{P}$ according to $f$ just in case the Borda score of $x$ in $\mathbf{P}$ is greater than that of $y$. Then $\mathcal{G}_f(\mathbf{P},Y)$, the elements of $Y$ chosen according to global Borda count, are the elements of $Y$ whose Borda scores are maximal among elements of $Y$, where Borda scores are calculated with respect to the full profile $\mathbf{P}$. By contrast,    $\mathcal{L}_f(\mathbf{P},Y)$, the element of $Y$ chosen according to local Borda count, are the elements of $Y$ whose Borda scores are maximal among elements of $Y$, where Borda scores are calculated with respect to the restricted profile $\mathbf{P}_{\mid Y}$. For a concrete example, consider the following profiles $\mathbf{P}$ and $\mathbf{P}_{\mid \{x,y,a\}}$:
\begin{center}
$\mathbf{P}$\qquad
\begin{tabular}{ccc}
 $1$ & $1$ & $2$    \\\hline
$x$ & $y$ & $ y$    \\
$a$ &  $x$ & $x$   \\
$b$ &  $a$ & $c$  \\
$c$ &  $b$ & $b$  \\
$y$ &  $c$ & $a$ \\
\end{tabular}\qquad\qquad\qquad$\mathbf{P}_{\mid \{x,y,a\}}$\qquad\begin{tabular}{ccc}
 $1$ & $1$ & $2$    \\\hline
$x$ & $y$ & $ y$    \\
$a$ &  $x$ & $x$   \\
$y$ &  $a$ & $a$ 
\end{tabular}\end{center}
Global Borda count yields $\mathcal{G}_f(\mathbf{P},\{x,y,a\})=\{x\}$, as $x$ has the highest Borda score in $\mathbf{P}$, while local Borda count yields $\mathcal{L}_f(\mathbf{P},\{x,y,a\})=\{y\}$, as $y$ has the highest Borda score in the restricted profile $\mathbf{P}_{\mid \{x,y,a\}}$.\end{example}

Before using the global and local choice functions to define two senses of Sen's $\alpha$ for voting, we must note that the distinction between global and local is lost under the assumption of VIIA.

\begin{proposition}\label{GlobalEqualsLocal} Let $f$ be an acyclic VCCR. The following are equivalent:
\begin{enumerate}
\item\label{GlobalEqualsLocal1} $f$ satisfies VIIA;
\item\label{GlobalEqualsLocal2} for any profile $\mathbf{P}$ and $Y\subseteq X(\mathbf{P})$, $\mathcal{G}_f(\mathbf{P},Y)=\mathcal{L}_f(\mathbf{P},Y)$.
\end{enumerate}
\end{proposition}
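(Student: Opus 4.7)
The proof is a direct unpacking of the two definitions and the definition of VIIA, so I would give a short two-direction argument. The key observation in both directions is that restriction commutes: for any $Y \subseteq X(\mathbf{P})$ and any $\{y,z\} \subseteq Y$, we have $(\mathbf{P}_{\mid Y})_{\mid \{y,z\}} = \mathbf{P}_{\mid \{y,z\}}$, which follows immediately from Definition \ref{Restriction}.

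For the direction \ref{GlobalEqualsLocal1} $\Rightarrow$ \ref{GlobalEqualsLocal2}, I would fix a profile $\mathbf{P}$ and $Y \subseteq X(\mathbf{P})$, and argue that for any $y, z \in Y$, VIIA applied to $\mathbf{P}$ and $\mathbf{P}_{\mid Y}$ (noting $x, y \in X(\mathbf{P}) \cap X(\mathbf{P}_{\mid Y}) = Y$ and the restriction-commutation above) yields that $z$ defeats $y$ in $\mathbf{P}$ according to $f$ iff $z$ defeats $y$ in $\mathbf{P}_{\mid Y}$ according to $f$. Substituting this equivalence into the two definitions of the choice functions then gives $\mathcal{G}_f(\mathbf{P}, Y) = \mathcal{L}_f(\mathbf{P}, Y)$.

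For the direction \ref{GlobalEqualsLocal2} $\Rightarrow$ \ref{GlobalEqualsLocal1}, I would reduce to the two-candidate case: given profiles $\mathbf{P}, \mathbf{P}'$ with $x, y \in X(\mathbf{P}) \cap X(\mathbf{P}')$ and $\mathbf{P}_{\mid \{x,y\}} = \mathbf{P}'_{\mid \{x,y\}}$, set $Y = \{x,y\}$. Since $\mathcal{L}_f(\mathbf{P}, Y)$ depends only on $\mathbf{P}_{\mid Y}$, we have $\mathcal{L}_f(\mathbf{P}, Y) = \mathcal{L}_f(\mathbf{P}', Y)$, and then by hypothesis \ref{GlobalEqualsLocal2}, $\mathcal{G}_f(\mathbf{P}, Y) = \mathcal{G}_f(\mathbf{P}', Y)$. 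To conclude VIIA, I would use asymmetry of defeat (built into the definition of a VCCR) to show $x$ defeats $y$ in $\mathbf{P}$ iff $y \notin \mathcal{G}_f(\mathbf{P}, Y)$ (and likewise for $\mathbf{P}'$): indeed, if $y \notin \mathcal{G}_f(\mathbf{P}, Y)$ then some $z \in Y$ defeats $y$ in $\mathbf{P}$, and since $y$ cannot defeat itself, that $z$ must be $x$. The equality $\mathcal{G}_f(\mathbf{P}, Y) = \mathcal{G}_f(\mathbf{P}', Y)$ then transfers defeat of $y$ by $x$ between the two profiles, giving VIIA.

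There is no serious obstacle here; the only step requiring a moment of care is the final asymmetry argument, which ensures that the information ``$y$ is defeated in the restricted arena $\{x,y\}$'' is equivalent to ``$x$ defeats $y$'', so that agreement of the global choice sets at $Y = \{x,y\}$ is genuinely equivalent to agreement of the defeat relations restricted to that pair. Acyclicity of $f$ plays no essential role in the equivalence itself; it is assumed only so that $\mathcal{G}_f(\mathbf{P},\cdot)$ is well-defined as a choice function in the statement.
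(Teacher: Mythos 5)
Your proof is correct and takes essentially the same route as the paper's: the forward direction unpacks VIIA to identify defeat in $\mathbf{P}$ with defeat in $\mathbf{P}_{\mid Y}$ for each pair in $Y$ (the paper routes this through the two-candidate restriction $\mathbf{P}_{\mid\{y,z\}}$, while you apply VIIA directly between $\mathbf{P}$ and $\mathbf{P}_{\mid Y}$, a cosmetic difference), and the converse direction is exactly the paper's square of equalities $\mathcal{G}_f(\mathbf{P},\{x,y\})=\mathcal{L}_f(\mathbf{P},\{x,y\})=\mathcal{L}_f(\mathbf{P}',\{x,y\})=\mathcal{G}_f(\mathbf{P}',\{x,y\})$. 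Your explicit use of asymmetry to translate ``$y\notin\mathcal{G}_f(\mathbf{P},\{x,y\})$'' into ``$x$ defeats $y$'' just spells out a step the paper leaves implicit.
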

\begin{proof} Suppose $f$ satisfies VIIA. Consider any profile $\mathbf{P}$ and $Y\subseteq X(\mathbf{P})$. Then the following are equivalent for any $y,z\in Y$:
\begin{itemize}
\item $z$ defeats $y$ in $\mathbf{P}$ according to $f$;
\item $z$ defeats $y$ in $\mathbf{P}_{\mid \{y,z\}}$ according to $f$ (by VIIA);
\item $z$ defeats $y$ in $(\mathbf{P}_{\mid Y})_{\mid \{y,z\}}$ according to $f$ (since $\mathbf{P}_{\mid \{y,z\}}=(\mathbf{P}_{\mid Y})_{\mid \{y,z\}}$);
\item $z$ defeats $y$ in $\mathbf{P}_{\mid Y}$ according to $f$ (by VIIA).
\end{itemize}
Hence $\mathcal{G}_f(\mathbf{P},Y)=\mathcal{L}_f(\mathbf{P},Y)$ by the definitions of $\mathcal{G}_f(\mathbf{P},\cdot)$ and $\mathcal{L}_f(\mathbf{P},\cdot)$.

Suppose condition \ref{GlobalEqualsLocal2} holds. Consider profiles $\mathbf{P}$ and $\mathbf{P}'$ with $x,y\in X(\mathbf{P})\cap X(\mathbf{P}')$ and $\mathbf{P}_{\mid\{x,y\}}=\mathbf{P}'_{\mid\{x,y\}}$. We must show that $x$ defeats $y$ in $\mathbf{P}$ according to $f$ if and only if $x$ defeats $y$ in $\mathbf{P}'$ according to $f$. This is equivalent to the claim that $\mathcal{G}_f(\mathbf{P},\{x,y\})=\mathcal{G}_f(\mathbf{P}',\{x,y\})$. We claim that the following equations hold:
\begin{center}\begin{tabular}{ccc}
$\mathcal{G}_f(\mathbf{P},\{x,y\})$ & & $\mathcal{G}_f(\mathbf{P}',\{x,y\})$\\
$\rotatebox[origin=c]{-90}{$=$}$ && $\rotatebox[origin=c]{-90}{$=$}$ \\
$\mathcal{L}_f(\mathbf{P},\{x,y\})$ & = & $\mathcal{L}_f(\mathbf{P}',\{x,y\})$
\end{tabular}\end{center}
The vertical equations hold by condition \ref{GlobalEqualsLocal2}, while the horizontal equation holds since $\mathbf{P}_{\mid\{x,y\}}=\mathbf{P}'_{\mid\{x,y\}}$. Hence $\mathcal{G}_f(\mathbf{P},\{x,y\})=\mathcal{G}_f(\mathbf{P}',\{x,y\})$.\end{proof}

Assuming we weaken VIIA, we can make the local-global distinction and hence distinguish two senses of Sen's $\alpha$ in the context of voting.

\begin{definition} Let $f$ be an acyclic VCCR.
\begin{enumerate}
\item $f$ satisfies Global-$\alpha$  if $\mathcal{G}_f(\mathbf{P},\cdot)$ satisfies $\alpha$ for all profiles $\mathbf{P}$;
\item $f$ satisfies Local-$\alpha$ if $\mathcal{L}_f(\mathbf{P},\cdot)$ satisfies $\alpha$ for all profiles $\mathbf{P}$.
\end{enumerate}
\end{definition}

To illustrate this definition, let us return to the discussion of Borda count from Example \ref{BordaExample}.

\begin{example} To see that the Borda VCCR $f$ satisfies Global-$\alpha$, we must check that 
\[\mbox{for all profiles $\mathbf{P}$ and nonempty $Z\subseteq Y\subseteq X(\mathbf{P})$, we have $Z\cap \mathcal{G}_f(\mathbf{P},Y)\subseteq \mathcal{G}_f(\mathbf{P},Z)$}.\]
Indeed, if $x\in Z$ and $x\in \mathcal{G}_f(\mathbf{P},Y)$, so among the candidates in $Y$, $x$ has maximal Borda score calculated with respect to the full profile $\mathbf{P}$, then since $Z\subseteq Y$, it follows that among the candidates in $Z$, $x$ has maximal Borda score calculated with respect to the full profile $\mathbf{P}$, so $x\in \mathcal{G}_f(\mathbf{P},Z)$. By contrast, the profile $\mathbf{P}$ from Example \ref{BordaExample} shows that the Borda VCCR does not satisfy Local-$\alpha$: 
\[\mbox{$\{x,y,a\}\subseteq \{x,y,a,b,c\}\subseteq X(\mathbf{P})$, but $\{x,y,a\}\cap \mathcal{L}_f(\mathbf{P},\{x,y,a,b,c\})\not\subseteq \mathcal{L}_f(\mathbf{P},\{x,y,a\})$},\]
because $x\in \{x,y,a\}\cap \mathcal{L}_f(\mathbf{P},\{x,y,a,b,c\})$ but $x\not\in \mathcal{L}_f(\mathbf{P},\{x,y,a\})$. While $x$ has the highest Borda score calculated with respect to the full profile $\mathbf{P}$, so $\mathcal{L}_f(\mathbf{P},\{x,y,a,b,c\})=\{x\}$,  $y$ has the highest Borda score calculated with respect to the restricted profile $\mathbf{P}_{\mid \{x,y,a\}}$, so $\mathcal{L}_f(\mathbf{P},\{x,y,a\})=\{y\}$.\end{example}

It is no accident that the Borda VCCR satisfies Global-$\alpha$ but not Local-$\alpha$. It is well known that Global-$\alpha$ imposes no constraint on an acyclic VCCR.

\begin{proposition} If $f$ is an acyclic VCCR, then $f$ satisfies Global-$\alpha$.
\end{proposition}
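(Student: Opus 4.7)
The plan is straightforward: unpack the definitions and observe that the defeat relation used by $\mathcal{G}_f(\mathbf{P},\cdot)$ is the single relation $f(\mathbf{P})$ on $X(\mathbf{P})$, which does not change as we vary the menu $Y$. So shrinking the menu from $Y$ to $Z\subseteq Y$ can only remove potential defeaters, never add them.

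First, I would remark that acyclicity of $f$ ensures that $\mathcal{G}_f(\mathbf{P},\cdot)$ really is a choice function: for any nonempty $Y\subseteq X(\mathbf{P})$, the restriction of the acyclic relation $f(\mathbf{P})$ to $Y$ is acyclic on the finite set $Y$, hence has a maximal element, so $\mathcal{G}_f(\mathbf{P},Y)\neq\varnothing$. This is the only place acyclicity is used.

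Next, I would verify $\alpha$. Fix a profile $\mathbf{P}$ and nonempty $Z\subseteq Y\subseteq X(\mathbf{P})$, and suppose $y\in Z\cap\mathcal{G}_f(\mathbf{P},Y)$. By definition of $\mathcal{G}_f(\mathbf{P},Y)$, there is no $z\in Y$ that defeats $y$ in $\mathbf{P}$ according to $f$. Since $Z\subseteq Y$, there is \emph{a fortiori} no $z\in Z$ that defeats $y$ in $\mathbf{P}$ according to $f$, so $y\in\mathcal{G}_f(\mathbf{P},Z)$. This establishes $Z\cap\mathcal{G}_f(\mathbf{P},Y)\subseteq\mathcal{G}_f(\mathbf{P},Z)$, which is $\alpha$.

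There is no real obstacle here; the proof is essentially a one-line set-theoretic observation, and the content of the proposition lies in the contrast with the local choice function. For $\mathcal{L}_f(\mathbf{P},\cdot)$, shrinking the menu changes the profile to $\mathbf{P}_{\mid Y}$, and a candidate undefeated in $\mathbf{P}_{\mid Y}$ may become defeated in $\mathbf{P}_{\mid Z}$, since the defeat relation is recomputed. By Proposition \ref{GlobalEqualsLocal}, the global and local choice functions coincide precisely when $f$ satisfies VIIA, so the triviality of Global-$\alpha$ is really an artifact of using the unrestricted profile to adjudicate defeats in every submenu.
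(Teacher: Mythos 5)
Your proof is correct and is essentially the paper's own argument: membership in $\mathcal{G}_f(\mathbf{P},Y)$ means no defeater in $Y$ with respect to the fixed relation $f(\mathbf{P})$, so shrinking $Y$ to $Z$ only removes potential defeaters, giving $Z\cap\mathcal{G}_f(\mathbf{P},Y)\subseteq\mathcal{G}_f(\mathbf{P},Z)$. Your added observation that acyclicity (plus finiteness) is only needed to guarantee nonemptiness of $\mathcal{G}_f(\mathbf{P},Y)$ matches the paper's footnoted remark, so there is nothing to correct.
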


\begin{proof} The claim that $f$ satisfies Global-$\alpha$ is the claim that for any profile $\mathbf{P}$ and nonempty $Z\subseteq Y\subseteq X(\mathbf{P})$, we have $Z\cap \mathcal{G}_f(\mathbf{P},Y)\subseteq \mathcal{G}_f(\mathbf{P},Z)$. Indeed, if $y\in Z\cap \mathcal{G}_f(\mathbf{P},Y)$, so by definition there is no $z\in Y$ that defeats $y$ in $\mathbf{P}$ according to $f$, then since $Z\subseteq Y$, there is no $z\in Z$ that defeats $y$ in $\mathbf{P}$ according to $f$, which by definition implies $y\in \mathcal{G}_f(\mathbf{P},Z)$.\end{proof}

Let us now consider Local-$\alpha$ as a constraint on VCCRs. First, we note that it is a weakening of VIIA.

\begin{proposition}\label{VIIAalpha}$\,$
\begin{enumerate}
\item\label{VIIAalpha1} If $f$ is an acyclic VCCR satisfying VIIA, then $f$ satisfies Local-$\alpha$;
\item\label{VIIAalpha2} There are acyclic VCCRs satisfying Local-$\alpha$ but not FIIA and hence not VIIA.
\end{enumerate}
\end{proposition}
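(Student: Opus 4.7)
For Part~\ref{VIIAalpha1}, my plan is to read off the result from Proposition~\ref{GlobalEqualsLocal} combined with the immediately preceding proposition that any acyclic VCCR satisfies Global-$\alpha$. If $f$ is acyclic and satisfies VIIA, then Proposition~\ref{GlobalEqualsLocal} gives $\mathcal{L}_f(\mathbf{P},\cdot) = \mathcal{G}_f(\mathbf{P},\cdot)$ for every profile $\mathbf{P}$; and since the right-hand side satisfies Sen's $\alpha$, so does the left-hand side, which is exactly Local-$\alpha$. No further work is needed for this direction.

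For Part~\ref{VIIAalpha2}, my plan is to exhibit an explicit ``augmented Pareto'' VCCR that does the job. Define $f$ by: $x$ defeats $y$ in $\mathbf{P}$ if and only if (i) every voter in $V(\mathbf{P})$ ranks $x$ above $y$, and (ii) there is some $w \in X(\mathbf{P}) \setminus \{x,y\}$ that every voter also ranks below $x$. The $f$-defeat relation is contained in the Pareto dominance relation on $\mathbf{P}$, which is a strict partial order, so $f$ is acyclic. For Local-$\alpha$, the key observation is that unanimous pairwise ranking is intrinsic to a pair of candidates and is therefore preserved under any restriction or expansion of the candidate set. Concretely, suppose $y \in Z \cap \mathcal{L}_f(\mathbf{P}, Y)$, and toward a contradiction suppose some $z \in Z$ defeats $y$ in $\mathbf{P}_{\mid Z}$, witnessed by some $w \in Z \setminus \{z,y\}$ that $z$ Pareto-dominates. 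Then $z$ still Pareto-dominates both $y$ and $w$ when we pass to $\mathbf{P}_{\mid Y}$, and $w \in Y \setminus \{z,y\}$, so $z$ defeats $y$ in $\mathbf{P}_{\mid Y}$, contradicting $y \in \mathcal{L}_f(\mathbf{P}, Y)$.

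For the failure of FIIA, a two-voter three-candidate witness suffices. Let $\mathbf{P}$ have both voters ranking $a > b > c$, so $a$ Pareto-dominates both $b$ and $c$, hence $a$ defeats $b$ in $\mathbf{P}$ by clauses (i)--(ii). Now let $\mathbf{P}'$ flip the second voter to $c > a > b$, which preserves the restriction to $\{a,b\}$; here $a$ still Pareto-dominates $b$ but no longer dominates $c$, so clause (ii) fails with $y = b$ and $a$ does not defeat $b$ in $\mathbf{P}'$. Since $X(\mathbf{P}) = X(\mathbf{P}')$ and $\mathbf{P}_{\mid\{a,b\}}=\mathbf{P}'_{\mid\{a,b\}}$, this is an FIIA violation, and a fortiori a VIIA violation. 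The main subtlety lies in designing the rule so that Local-$\alpha$ survives despite the context dependence required to break FIIA; augmented Pareto succeeds precisely because clause (ii), while global, only adds a ``witness $w$'' requirement that, being monotone in the candidate set, cannot be destroyed by enlarging the ambient candidate set from $Z$ to $Y$.
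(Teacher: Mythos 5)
Your proposal is correct, and it differs from the paper's proof in both parts, though mildly. For part \ref{VIIAalpha1}, the paper argues directly: if some $x\in Z$ defeats $y$ in $\mathbf{P}_{\mid Z}$, then since $(\mathbf{P}_{\mid Z})_{\mid \{x,y\}}=(\mathbf{P}_{\mid Y})_{\mid \{x,y\}}$, VIIA transfers that defeat to $\mathbf{P}_{\mid Y}$, so $y\not\in\mathcal{L}_f(\mathbf{P},Y)$; you instead read the claim off as a corollary of Proposition \ref{GlobalEqualsLocal} together with the fact that every acyclic VCCR satisfies Global-$\alpha$. Both of those results precede the proposition in the paper, so your derivation is non-circular and slightly more economical, at the cost of routing through acyclicity, which the paper's direct two-line argument uses only to make $\mathcal{L}_f$ well defined. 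For part \ref{VIIAalpha2}, the paper's counterexample rule is ``$x$ defeats $y$ iff $x$ is unanimously preferred to $y$ and there is a $z\in X(\mathbf{P})\setminus\{x,y\}$ with $x\to z$ but $y\not\to z$,'' whereas your ``augmented Pareto'' rule asks for a third candidate unanimously ranked below $x$; the two constructions are structurally parallel (a subrelation of Pareto dominance, hence acyclic, plus a third-candidate witness clause that depends only on pairwise comparisons and so survives enlarging the candidate set from $Z$ to $Y$, which is exactly what Local-$\alpha$ requires), and both are broken for FIIA by altering rankings that involve only the witness candidate. Your two-voter, three-candidate FIIA violation checks out, and the inference ``not FIIA hence not VIIA'' is justified since FIIA is the special case of VIIA with $X(\mathbf{P})=X(\mathbf{P}')$. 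In short, what your approach buys is reuse of already-proved propositions in part \ref{VIIAalpha1} and a purely unanimity-based witness in part \ref{VIIAalpha2}; what the paper's buys is a self-contained argument for part \ref{VIIAalpha1} and a counterexample whose FIIA failure is witnessed even by profiles in which all voters share the same ballot.
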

\begin{proof} For part \ref{VIIAalpha1}, assuming that $f$ satisfies VIIA, we show that for any profile $\mathbf{P}$ and nonempty $Z\subseteq Y\subseteq X(\mathbf{P})$, we have $Z\cap \mathcal{L}_f(\mathbf{P},Y)\subseteq \mathcal{L}_f(\mathbf{P},Z)$.  Suppose $y\in Z$ but $y\not\in \mathcal{L}_f(\mathbf{P},Z)$, so there is an $x\in Z$ that defeats $y$ in $\mathbf{P}_{\mid Z}$ according to $f$. Then since  $(\mathbf{P}_{\mid Z})_{\mid \{x,y\}}=(\mathbf{P}_{\mid Y})_{\mid \{x,y\}}$, it follows by VIIA that $x$ defeats $y$ in $\mathbf{P}_{\mid Y}$ according to $f$, which with $x\in Z\subseteq Y$ implies $y\not\in \mathcal{L}_f(\mathbf{P},Y)$.

For part \ref{VIIAalpha2}, let $f$ be a VCCR such that $x$ defeats $y$ in $\mathbf{P}$ according to $f$ if and only if (i) $x$ is unanimously preferred to $y$ and (ii) there is a $z\in X(\mathbf{P})\setminus\{x,y\}$ such that $x\to z$ but $y\not\to z$. Then $f$ is acyclic in virtue of (i). To see that $f$ violates FIIA, consider two profiles $\mathbf{P}$ and $\mathbf{P}'$ with $X(\mathbf{P})=X(\mathbf{P}')=\{x,y,z\}$ and $V(\mathbf{P})=V(\mathbf{P}')$ such that all $i\in V(\mathbf{P})$ have  $x\mathbf{P}_i z \mathbf{P}_iy$ while all $i\in V(\mathbf{P}')$ have $x\mathbf{P}'_i y \mathbf{P}'_iz$. Then $\mathbf{P}_{\mid \{x,y\}}=\mathbf{P}'_{\mid\{x,y\}}$, and $x$ defeats $y$ in $\mathbf{P}$ but not in $\mathbf{P}'$, violating FIIA. To see that $f$ satisfies Local-$\alpha$, we must show that for any profile $\mathbf{P}$ and nonempty $Z\subseteq Y\subseteq X(\mathbf{P})$, we have $Z\cap \mathcal{L}_f(\mathbf{P},Y)\subseteq \mathcal{L}_f(\mathbf{P},Z)$. Suppose $y\in Z$ but $y\not\in \mathcal{L}_f(\mathbf{P},Z)$, so there is an $x\in Z$ that defeats $y$ in $\mathbf{P}_{\mid Z}$ according to $f$. Hence $x$ is unanimously preferred to $y$ in $\mathbf{P}_{\mid Z}$ and there is a $z\in Z$ such that $x\to z$ but $y\not\to z$. Then since $Z\subseteq Y$, we have that $x\in Y$, that $x$ is unanimously preferred to $y$ in $\mathbf{P}_{\mid Y}$ according to $f$, and that there is a $z\in Y$ such that $x\to z$ but $y\not\to z$. Therefore, $x$ defeats $y$ in $\mathbf{P}_{\mid Y}$ according to $f$, which with $x\in Y$ implies $y\not\in \mathcal{L}_f(\mathbf{P},Y)$.\end{proof}

Although weaker than VIIA, Local-$\alpha$ is still a significant restriction on an acyclic VCCR, as it rules out that the VCCR coincides with majority rule on two-candidates profiles.

\begin{definition} A VCCR $f$ satisfies \textit{Binary Majoritarianism} if for any profile $\mathbf{P}$ with $X(\mathbf{P})=\{x,y\}$, $x$ defeats $y$ in $\mathbf{P}$ according to $f$ if and only if $x$ is majority preferred to $y$ in $\mathbf{P}$.\end{definition}

We have the following easy impossibility result.

\begin{proposition}\label{BinaryImpossibility} There is no VCCR satisfying Local-$\alpha$, Availability, and Binary Majoritarianism.
\end{proposition}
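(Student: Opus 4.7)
The plan is to derive a contradiction from a standard Condorcet-cycle profile, using Local-$\alpha$ exactly as $\alpha$ is used in the Morgenbesser-style illustration. First I would construct the three-candidate profile $\mathbf{P}$ with $X(\mathbf{P})=\{a,b,c\}$ consisting of one voter with ballot $abc$, one with $bca$, and one with $cab$. A direct count shows that in $\mathbf{P}$, $a$ is majority preferred to $b$, $b$ to $c$, and $c$ to $a$, so that every candidate is majority preferred by exactly one other.

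Next I would apply Availability to $\mathbf{P}$ to obtain some candidate that is undefeated in $\mathbf{P}$ according to $f$; by the cyclic symmetry of the three ballots, any relabeling of $a,b,c$ yields an isomorphic profile, so I may assume this candidate is $a$. Since $\mathbf{P}_{\mid X(\mathbf{P})} = \mathbf{P}$, this gives $a \in \mathcal{L}_f(\mathbf{P},\{a,b,c\})$. On the other hand, in the two-candidate profile $\mathbf{P}_{\mid\{a,c\}}$, $c$ is majority preferred to $a$, so Binary Majoritarianism yields that $c$ defeats $a$ in $\mathbf{P}_{\mid\{a,c\}}$, and therefore $a \notin \mathcal{L}_f(\mathbf{P},\{a,c\})$.

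Finally I would apply Local-$\alpha$ with $Y = \{a,b,c\}$ and $Z = \{a,c\}$. From the inclusion $Z \cap \mathcal{L}_f(\mathbf{P},Y) \subseteq \mathcal{L}_f(\mathbf{P},Z)$ and the fact that $a \in Z \cap \mathcal{L}_f(\mathbf{P},Y)$, we would conclude $a \in \mathcal{L}_f(\mathbf{P},\{a,c\})$, contradicting the previous step. There is no real obstacle: the argument is essentially the observation that $\alpha$ is incompatible with pairwise majority rule on a Condorcet cycle. The only mild care needed is in noting that ``undefeated in $\mathbf{P}$'' coincides with membership in $\mathcal{L}_f(\mathbf{P},X(\mathbf{P}))$, so that Availability delivers a witness at the $Y$-level that Local-$\alpha$ can then push down to the $Z$-level where Binary Majoritarianism rules it out.
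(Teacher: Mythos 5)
Your proof is correct and takes essentially the same route as the paper's: exhibit a three-candidate majority cycle, use Availability to obtain an undefeated candidate, apply Binary Majoritarianism to the two-candidate restriction in which that candidate is majority-dispreferred, and contradict Local-$\alpha$ with $Y=\{a,b,c\}$ and the relevant pair as $Z$. The only point to watch is your ``by symmetry, assume the undefeated candidate is $a$'': since $f$ is not assumed Anonymous or Neutral here, that reduction tacitly relies on the three axioms being invariant under relabelings of candidates and voters (true, but worth saying), and the paper avoids the issue entirely by taking the undefeated candidate $a$ as given and choosing some $b$ with $b\to a$, which works verbatim in every case.
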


\begin{proof} Consider a profile $\mathbf{P}$ with $X(\mathbf{P})=\{x,y,z\}$ and a majority cycle $x\to y\to z\to x$. By Availability, there is some $a\in \{x,y,z\}$ who is undefeated in $\mathbf{P}$ according to $f$. Since there is a majority cycle, there is some $b\in \{x,y,z\}$ such that $b\to a$. Hence by Binary Majoritarianism, $b$ defeats $a$ in $\mathbf{P}_{\mid \{a,b\}}$ according to $f$. Thus, we have $a\in \{a,b\}\cap \mathcal{L}_f(\mathbf{P}, \{x,y,z\})$ but $a\not\in \mathcal{L}_f(\mathbf{P}_{\mid \{a,b\}},\{a,b\})$, so $f$ violates Local-$\alpha$.\end{proof}

Combining Propositions \ref{VIIAalpha}.\ref{VIIAalpha1} and \ref{BinaryImpossibility}, we have the analogue of Proposition \ref{BinaryImpossibility} under VIIA.

\begin{corollary} There is no VCCR satisfying VIIA, Availability, and Binary Majoritarianism.
\end{corollary}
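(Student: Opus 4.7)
The corollary follows by combining Propositions \ref{VIIAalpha}.\ref{VIIAalpha1} and \ref{BinaryImpossibility}, as the surrounding prose indicates, but one must take a small amount of care: Proposition \ref{VIIAalpha}.\ref{VIIAalpha1} is stated for acyclic VCCRs, whereas the corollary only hypothesizes Availability. The plan is therefore either to (i) observe that the proof of Proposition \ref{VIIAalpha}.\ref{VIIAalpha1} goes through verbatim under Availability alone, since the footnote to the definition of $\mathcal{G}_f$ and $\mathcal{L}_f$ records that Availability already suffices to make $\mathcal{L}_f(\mathbf{P},\cdot)$ a choice function and the argument then uses only VIIA and the equation $(\mathbf{P}_{\mid Z})_{\mid\{x,y\}}=(\mathbf{P}_{\mid Y})_{\mid\{x,y\}}$, and then invoke Proposition \ref{BinaryImpossibility}; or (ii) give the direct one-paragraph argument obtained by inlining both proofs, which is what I would actually write.

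For the direct route, I would argue as follows. Suppose for contradiction that $f$ is a VCCR satisfying VIIA, Availability, and Binary Majoritarianism. Fix any profile $\mathbf{P}$ with $X(\mathbf{P})=\{x,y,z\}$ whose majority preference relation is the cycle $x\to y\to z\to x$; such profiles exist (e.g.\ the Condorcet profile in Example \ref{WinByEx}, restricted to $\{a,b,c\}$). By Availability, there is some $a\in\{x,y,z\}$ that is undefeated in $\mathbf{P}$ according to $f$. Since $\to_{\mathbf{P}}$ is a three-cycle on $\{x,y,z\}$, there exists $b\in\{x,y,z\}\setminus\{a\}$ with $b\to_{\mathbf{P}} a$, and since margins restrict correctly, $b$ is also majority preferred to $a$ in $\mathbf{P}_{\mid\{a,b\}}$.

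By Binary Majoritarianism applied to the two-candidate profile $\mathbf{P}_{\mid\{a,b\}}$, $b$ defeats $a$ in $\mathbf{P}_{\mid\{a,b\}}$ according to $f$. Now $\mathbf{P}$ and $\mathbf{P}_{\mid\{a,b\}}$ both contain $a$ and $b$, and trivially $\mathbf{P}_{\mid\{a,b\}}=(\mathbf{P}_{\mid\{a,b\}})_{\mid\{a,b\}}$, so VIIA yields that $b$ defeats $a$ in $\mathbf{P}$ according to $f$, contradicting the fact that $a$ is undefeated in $\mathbf{P}$. This closes the argument.

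There is really no hard step here; the only point requiring attention is the mismatch between the acyclicity hypothesis of Proposition \ref{VIIAalpha}.\ref{VIIAalpha1} and the Availability hypothesis of the corollary, which is why I prefer the direct argument above to an appeal to Local-$\alpha$ as an intermediary. The argument mirrors the structure of the proof of Proposition \ref{BinaryImpossibility}---isolate an undefeated candidate using Availability, then locate a majority-preferred opponent in a majority cycle---but replaces the invocation of Local-$\alpha$ by a single application of VIIA to transport the two-candidate defeat verdict (guaranteed by Binary Majoritarianism) back up to the full profile.
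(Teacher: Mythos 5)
Your proposal is correct, and it is essentially the argument the paper intends: the paper's ``proof'' of the corollary is just the sentence citing Propositions \ref{VIIAalpha}.\ref{VIIAalpha1} and \ref{BinaryImpossibility}, and your direct argument is the inlining of those two proofs, with the transport of the two-candidate defeat verdict from $\mathbf{P}_{\mid\{a,b\}}$ up to $\mathbf{P}$ done by a single application of VIIA rather than via the Local-$\alpha$ intermediary. The wrinkle you flag---that Proposition \ref{VIIAalpha}.\ref{VIIAalpha1} is stated for acyclic VCCRs while the corollary only assumes Availability---is a real (if minor) mismatch in the cited route, and your two ways of closing it are both sound: the proof of Proposition \ref{VIIAalpha}.\ref{VIIAalpha1} uses only VIIA, and Availability suffices for $\mathcal{L}_f(\mathbf{P},\cdot)$ to be a choice function. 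A third repair, worth noting, is that VIIA and Availability together already imply acyclicity: if $x_1,\dots,x_n,x_1$ were a defeat cycle in $\mathbf{P}$ according to $f$, then since $\mathbf{P}_{\mid\{x_i,x_j\}}=(\mathbf{P}_{\mid\{x_1,\dots,x_n\}})_{\mid\{x_i,x_j\}}$, VIIA would give the same defeats in $\mathbf{P}_{\mid\{x_1,\dots,x_n\}}$, leaving every candidate of that profile defeated and contradicting Availability; so the hypotheses of the corollary do put one in the acyclic case and the paper's citation goes through as stated. What your direct version buys is self-containment and the avoidance of any appeal to Local-$\alpha$; what the paper's route buys is the conceptual point of that section, namely that the impossibility under VIIA is an instance of the more general impossibility under the weaker choice-consistency condition Local-$\alpha$.
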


Finally, let us come full circle and return to voting methods in the sense of Definition \ref{VotingMethod}, as opposed to VCCRs. What are the implications of the impossibility result above for voting methods? To answer this question, we first adapt the definition of $\alpha$ to voting methods.

\begin{definition} A voting method $F$ satisfies $\overline{\alpha}$ if for all nonempty $Z\subseteq X(\mathbf{P})$,  $Z\cap F(\mathbf{P})\subseteq F(\mathbf{P}_{\mid Z})$.\footnote{\label{alphabarnote}Note that this is equivalent to the more direct translation of $\alpha$ to voting methods: for all nonempty $Z\subseteq Y\subseteq X(\mathbf{P})$,  $Z\cap F(\mathbf{P}_{\mid Y})\subseteq F(\mathbf{P}_{\mid Z})$.}
\end{definition}
\noindent Simply put, $\overline{\alpha}$ states that if $x$ is a winner in an election represented by $\mathbf{P}$, then $x$ would also have been a winner had some other candidates not participated in the election at all, while the remaining candidates---those in $Z$---were ranked in the same way, as represented by $\mathbf{P}_{\mid Z}$. Although a candidate's never appearing on the ballot in the first place is conceptually different than their appearing on the ballot but then withdrawing or being incapacitated after the ballots are collected---as in our interpretation of the choice functional setting above---nonetheless, there is a formal connection: the satisfaction of $\overline{\alpha}$ by a voting method is equivalent to the satisfaction of Local-$\alpha$ by any VCCR that defeat rationalizes the voting method.

\begin{lemma}\label{AlphaAlpha} If $F$ is a voting method that is defeat rationalized by a VCCR $f$, then $F$ satisfies $\overline{\alpha}$ if and only if $f$ satisfies Local-$\alpha$.
\end{lemma}

\begin{proof} Since $F$ is defeat rationalized by $f$, $F(\mathbf{P})$ is the set of undefeated candidates in $\mathbf{P}$ according to $f$, which is also equal to $\mathcal{L}_f(\mathbf{P},X(\mathbf{P}))$. Thus, for all nonempty $Z\subseteq Y\subseteq X(\mathbf{P})$, we have:
\begin{eqnarray*}
&&Z\cap F(\mathbf{P}_{\mid Y})\subseteq F(\mathbf{P}_{\mid Z}) \\
&\Leftrightarrow & Z\cap \mathcal{L}_f(\mathbf{P},Y)\subseteq \mathcal{L}_f(\mathbf{P},Z).
\end{eqnarray*} 
Hence $F$ satisfies $\overline{\alpha}$ (in the equivalent form given in Footnote \ref{alphabarnote}) if and only if $f$ satisfies Local-$\alpha$.
\end{proof}

We can now answer our question about voting methods with the following impossibility result, whose proof is easily obtained by adapting that of Proposition \ref{BinaryImpossibility}.

\begin{proposition} There is no voting method satisfying $\overline{\alpha}$ and Binary Majoritarianism. 
\end{proposition}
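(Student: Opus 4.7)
The plan is to mimic the proof of Proposition \ref{BinaryImpossibility} almost verbatim, replacing appeals to Availability with the built-in nonemptiness of a voting method's winner set, and replacing appeals to the local choice function with direct appeals to $F$ on the restricted profile.

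Concretely, I would first fix a three-candidate profile $\mathbf{P}$ with $X(\mathbf{P}) = \{x,y,z\}$ exhibiting a majority cycle $x \to y \to z \to x$ (the standard Condorcet cycle profile from the excerpt works). By definition of a voting method, $F(\mathbf{P})$ is a nonempty subset of $X(\mathbf{P})$, so pick some $a \in F(\mathbf{P})$. Since the three candidates form a majority cycle, there exists $b \in \{x,y,z\} \setminus \{a\}$ with $b \to_{\mathbf{P}} a$, i.e., $b$ is majority preferred to $a$ in $\mathbf{P}$, and hence also in $\mathbf{P}_{\mid \{a,b\}}$.

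By Binary Majoritarianism applied to the two-candidate profile $\mathbf{P}_{\mid \{a,b\}}$, the winner set $F(\mathbf{P}_{\mid \{a,b\}})$ must be exactly $\{b\}$: $b$ defeats $a$, so $a$ is not a winner, and some candidate must win. Thus $a \notin F(\mathbf{P}_{\mid \{a,b\}})$. But applying $\overline{\alpha}$ with $Z = \{a,b\}$ gives
\[ \{a,b\} \cap F(\mathbf{P}) \subseteq F(\mathbf{P}_{\mid \{a,b\}}), \]
and the left-hand side contains $a$, yielding $a \in F(\mathbf{P}_{\mid \{a,b\}})$, a contradiction.

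There is really no obstacle here: the only thing to be slightly careful about is the direction of the $\overline{\alpha}$ inclusion and the fact that Binary Majoritarianism for a voting method must be read as ``on two-candidate profiles, the unique winner is the majority-preferred candidate,'' which follows from the asymmetry of the pairwise majority comparison together with the nonemptiness requirement on $F$. Everything else is bookkeeping.
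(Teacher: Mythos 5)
Your proof is correct and follows essentially the same route as the paper, which simply notes that the proof of Proposition \ref{BinaryImpossibility} adapts directly: take a three-candidate Condorcet cycle profile, pick a winner $a\in F(\mathbf{P})$ (nonemptiness replacing Availability), find $b$ majority preferred to $a$, and contrast Binary Majoritarianism on $\mathbf{P}_{\mid\{a,b\}}$ with the $\overline{\alpha}$ inclusion $\{a,b\}\cap F(\mathbf{P})\subseteq F(\mathbf{P}_{\mid\{a,b\}})$. Your remark about how to read Binary Majoritarianism for voting methods on two-candidate profiles is the right gloss, and no further argument is needed.
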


As a voting method (resp.~VCCR) Split Cycle satisfies Binary Majoritarianism but not $\overline{\alpha}$ (resp.~Local-$\alpha$). The mistake of insisting on  $\overline{\alpha}$ for voting is essentially the same as the mistake of insisting on IIA, which can be seen by reformulating $\overline{\alpha}$ as follows: for all nonempty $Z\subseteq X(\mathbf{P})$,  if $x\in Z$ but $x\not\in F(\mathbf{P}_{\mid Z})$, then $x\not\in F(\mathbf{P})$. In terms of a defeat rationalization of $F$, this means that if $x$ is defeated in the smaller profile $\mathbf{P}_{\mid Z}$ then $x$ must also be defeated in the larger profile $\mathbf{P}$. But this should not follow if the larger profile $\mathbf{P}$ is more incoherent than $\mathbf{P}_{\mid Z}$. If $\mathbf{P}$ is sufficiently incoherent, we may need to suspend judgment on many defeat relations that we could coherently accept in $\mathbf{P}_{\mid Z}$.

\begin{example} The same example used against IIA in Example \ref{IIAExample} can be adapted to argue against Local-$\alpha$ for VCCRs or $\overline{\alpha}$ for voting methods. Consider the following profiles $\mathbf{P}$ and $\mathbf{P}'$:

\begin{center}
$\mathbf{P}$\qquad
\begin{minipage}{2in}\begin{tabular}{ccc}
$n$ & $n$ & $n$   \\\hline
$a$ & $b$ &  $a$ \\
$b$ &  $a$ & $b$ \\
\end{tabular}\end{minipage}\begin{minipage}{2in}\begin{tikzpicture}

\node[circle,draw, minimum width=0.25in] at (0,0) (a) {$a$}; 
\node[circle,draw,minimum width=0.25in] at (3,0) (b) {$b$}; 

\path[->,draw,thick] (a) to node[fill=white] {$n$} (b);

\end{tikzpicture}
\end{minipage}\vspace{.15in}

$\mathbf{P}'$\qquad
\begin{minipage}{2in}\begin{tabular}{ccc}
$n$ & $n$ & $n$   \\\hline
$\boldsymbol{a}$ & $\boldsymbol{b}$ &  $c$ \\
$\boldsymbol{b}$ &  $c$ & $\boldsymbol{a}$ \\
$c$ &  $\boldsymbol{a}$ &  $\boldsymbol{b}$ \\
\end{tabular}\end{minipage}\begin{minipage}{2in}\begin{tikzpicture}

\node[circle,draw, minimum width=0.25in] at (0,0) (a) {$a$}; 
\node[circle,draw,minimum width=0.25in] at (3,0) (c) {$c$}; 
\node[circle,draw,minimum width=0.25in] at (1.5,1.5) (b) {$b$}; 

\path[->,draw,very thick] (a) to node[fill=white] {$n$} (b);
\path[->,draw,thick] (b) to node[fill=white] {$n$} (c);
\path[->,draw,thick] (c) to node[fill=white] {$n$} (a);

\end{tikzpicture}
\end{minipage}\end{center}
In the context of the perfectly coherent profile $\mathbf{P}$, the margin of $n$ for $a$ over $b$ should be sufficient for $a$ to defeat $b$, so $a$ should be the uniquely chosen winner. But in the context of the incoherent profile $\mathbf{P}'$, it is not sufficient: no one can be judged to defeat anyone else---this follows from Anonymity, Neutrality, and Availability---so all three must be included in the choice set to which a further tiebreaking process is applied. This is a counterexample to Local-$\alpha$ and $\overline{\alpha}$: $b$ is undefeated in $\mathbf{P}'$ but not in $\mathbf{P}'_{\mid \{a,b\}}=\mathbf{P}$.\end{example}

Our conclusion concerning $\alpha$ applied to voting is in the spirit of Sen's \citeyearpar{Sen1993} view that ``Violations of property $\alpha$\dots can be related to various different types of reasons---easily understandable when the external context is spelled out'' (p.~501). Sen (pp.~500-502) focuses on rationalizing violations of $\alpha$ in individual choice by reference to features of the context of choice. Here we have focused on rationalizing violations in voting by reference to features of the context given by the profile---namely, an increase in incoherence from one profile to another. To overlook this context would be to commit The Fallacy of IIA from Section~\ref{CoherentIIASection}.

\section{Conclusion}\label{Conclusion}

The pessimistic conclusions about democracy that some have drawn from the Paradox of Voting and Arrow's Impossibility Theorem are not justified. Like most voting theorists, we are more optimistic. In particular, we believe that many majority cycles can be resolved in a rational way, while respecting the principle of Majority Defeat, as shown by Split Cycle. Of course there remain some cycles, such as a perfect cycle $a\to b\to c\to a$ in which each candidate is majority preferred to the next by exactly the same margin (and there are no other candidates), which must lead to a tie between all candidates. But to think that democracy is devastated by the possibility of such ties seems almost as unreasonable as thinking that democracy is devastated by the possibility that in an election with only two candidates and an even number of voters, it could happen that half of the voters vote for $x$ over $y$ while half vote for $y$ over $x$. The fact that the set of winners cannot always be a singleton---that some further tiebreaking mechanism must be in place---hardly warrants very pessimistic conclusions, especially if the probability of having many tied candidates is sufficiently low, as we expect when there are many voters compared to candidates (see \citealt{HP2020}). 

Far from justifying pessimism about democracy, social choice theory leads the way to voting procedures that can improve democratic decision making. We agree with Maskin and Sen \citeyearpar{Maskin2017a,Maskin2017b} that a major improvement would come in replacing Plurality voting with a voting procedure using ranked ballots that elects a Condorcet winner whenever there is one. In this paper, we have arrived at a unique collective choice rule, Split Cycle, via six axioms concerning when one candidate should defeat another in a democratic election---with the key axiom being the axiom of Coherent IIA that weakens Arrow's IIA and explains why the latter is too strong. As theorists, we sleep well at night knowing that we have a solid theoretical justification for handling majority cycles in a certain way should they arise. As citizens and committee members, we hope that in practice our elections will have Condorcet winners and that we will elect them.

\subsection*{Acknowledgements}

We thank Mikayla Kelley, John Patty, and the two anonymous referees for the \textit{Journal of Theoretical Politics} for helpful comments. We are also grateful for useful feedback received at the Work in Progress Seminar and Logic Seminar at the University of Maryland in July 2020 and at the FERC reading group at UC Berkeley in August 2020.

\appendix 

\section{Proofs for Section \ref{SCsection}}\label{Proofs}

\Reformulation*

\begin{proof} Suppose that in $\mathbf{P}$, $x$ wins by more than $n$ over $y$ for the smallest natural number $n$ such that there is no majority cycle, containing $x$ and $y$, in which each candidate wins by more than $n$ over the next candidate in the cycle. Then $Margin_\mathbf{P}(x,y)>n\geq 0$. Now consider some majority cycle $\rho$ in $\mathbf{P}$ containing $x$ and $y$. By our choice of $n$, we have $n\geq Split\#_\mathbf{P}(\rho)$, so $Margin_\mathbf{P}(x,y)>n$ implies $Margin_\mathbf{P}(x,y)> Split\#_\mathbf{P}(\rho)$. 

Conversely, suppose $Margin_\mathbf{P}(x,y)>0$ and $Margin_\mathbf{P}(x,y)>Split\#_\mathbf{P}(\rho)$ for every majority cycle $\rho$ in $\mathbf{P}$ containing $x$ and $y$. If there exist such cycles, let $n$ be the maximum of their splitting numbers, and otherwise let $n=0$. It follows that there is no majority cycle containing $x$ and $y$ in which each candidate wins by \textit{more than} $n$ over the next candidate in the cycle; moreover, $n$ is the smallest natural number for which this holds. By our initial supposition, $Margin_\mathbf{P}(x,y)>n$, so we are done.\end{proof}

\OnlySome*

\begin{proof} We use the formulation of Split Cycle in Lemma \ref{SplittingLem}. If $Margin_\mathbf{P}(x,y)>Split\#_\mathbf{P}(\rho)$ for every majority cycle $\rho$ in $\mathbf{P}$ containing $x$ and $y$, then in particular  $Margin_\mathbf{P}(x,y)>Split\#_\mathbf{P}(\rho)$ for every majority cycle $\rho$ in $\mathbf{P}$ of the form $x \rightarrow y\rightarrow z_1\rightarrow \dots\rightarrow z_n\rightarrow x$. Conversely, suppose $Margin_\mathbf{P}(x,y)>Split\#_\mathbf{P}(\rho)$ for every majority cycle $\rho$ in $\mathbf{P}$ of the form $x \rightarrow y\rightarrow z_1\rightarrow \dots\rightarrow z_n\rightarrow x$. Now consider a majority cycle $\rho$ in $\mathbf{P}$ containing $x$ and $y$, whose splitting number is maximal among all such majority cycles. We must show $Margin_\mathbf{P}(x,y)>Split\#_\mathbf{P}(\rho)$. If $y$ occurs immediately after $x$ in $\rho$, then by ``rotating the cycle'' we obtain a cycle $\rho'$ of the form $x \rightarrow y\rightarrow z_1\rightarrow \dots\rightarrow z_n\rightarrow x$ with the same splitting number as $\rho$, in which case $Margin_\mathbf{P}(x,y)>Split\#_\mathbf{P}(\rho')$ by our initial supposition and hence $Margin_\mathbf{P}(x,y)>Split\#_\mathbf{P}(\rho)$. Thus, suppose $y$ does not occur immediately after $x$ in $\rho$. Then without loss of generality, we may assume $\rho$ is of the form  $y\to z_1\to\dots \to z_n\to x\to w_1\to\dots\to w_m\to y$. Let $\rho'$ be the cycle $x\to y\to z_1\to\dots \to z_n\to x$. Then $Margin_\mathbf{P}(x,y)>Split\#_\mathbf{P}(\rho')$ by our initial supposition, so the splitting number of $\rho'$ is the margin associated with some successive candidates in the sequence $y,z_1,\dots z_n,x$. Since $y,z_1,\dots z_n,x$ is a subsequence of $\rho$, and the splitting number is defined as a minimum, it follows that $Split\#_\mathbf{P}(\rho')\geq Split\#_\mathbf{P}(\rho)$. Then since $Margin_\mathbf{P}(x,y)>Split\#_\mathbf{P}(\rho')$, we have $Margin_\mathbf{P}(x,y)>Split\#_\mathbf{P}(\rho)$, and since $\rho$ was chosen to have maximal splitting number among all majority cycles containing $x$ and $y$, we are~done.\end{proof}

\section{Arrow's alleged confusion and VIIA}\label{ArrowSection}

Arrow has been accused of confusing his own condition of IIA, an interprofile condition, with a choice-consistency condition such as Sen's $\alpha$, defined in Section \ref{AlphaVoting} (see, e.g.,  \citealt[\S~3]{Hansson1973}, \citealt[p.~989]{Ray1973}, and \citealt[p.~97, endnote~16, p.~250]{Suzumura1983}). To clarify this matter, which is relevant to our distinction between FIIA and VIIA, we first note that Arrow did not state IIA in what is now its most common form, given in Definition \ref{IIAdef}.\ref{IIAdef1}. Instead, he stated it in the equivalent form (assuming acylicity) in Definition~\ref{ChoiceIIAdef}.\ref{ChoiceIIAdef1}.

\begin{definition}\label{ChoiceIIAdef} Let $f$ be an acyclic VCCR. 
\begin{enumerate}
\item\label{ChoiceIIAdef1}  $f$ satisfies \textit{global choice FIIA} if for any profiles $\mathbf{P}$ and $\mathbf{P}'$ with $V(\mathbf{P})=V(\mathbf{P}')$ and $X(\mathbf{P})=X(\mathbf{P}')$ and $Y\subseteq X(\mathbf{P})$, if $\mathbf{P}_{\mid Y}=\mathbf{P}'_{\mid Y}$, then  $\mathcal{G}_f(\mathbf{P},Y)= \mathcal{G}_f(\mathbf{P}',Y)$.
\item $f$ satisfies \textit{local choice FIIA} if for any profiles $\mathbf{P}$ and $\mathbf{P}'$ with $V(\mathbf{P})=V(\mathbf{P}')$ and $X(\mathbf{P})=X(\mathbf{P}')$ and $Y\subseteq X(\mathbf{P})$, if $\mathbf{P}_{\mid Y}=\mathbf{P}'_{\mid Y}$, then  $\mathcal{L}_f(\mathbf{P},Y)= \mathcal{L}_f(\mathbf{P}',Y)$.
\end{enumerate}
\end{definition}

\begin{proposition} Let $f$ be an acyclic VCCR. Then $f$ satisfies global choice FIIA if and only if $f$ satisfies FIIA.
\end{proposition}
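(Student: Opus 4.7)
The plan is to prove both implications directly by unpacking definitions, with the main work being to observe that the global choice function on a two-element set $\{x,y\}$ records exactly the pairwise defeat facts in the full profile.

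For the forward direction, I would assume $f$ satisfies FIIA and take profiles $\mathbf{P},\mathbf{P}'$ with $V(\mathbf{P})=V(\mathbf{P}')$, $X(\mathbf{P})=X(\mathbf{P}')$, and $\mathbf{P}_{\mid Y}=\mathbf{P}'_{\mid Y}$ for some $Y\subseteq X(\mathbf{P})$. For any $y,z\in Y$, the hypothesis $\mathbf{P}_{\mid Y}=\mathbf{P}'_{\mid Y}$ implies $\mathbf{P}_{\mid \{y,z\}}=\mathbf{P}'_{\mid \{y,z\}}$, so FIIA yields that $z$ defeats $y$ in $\mathbf{P}$ iff $z$ defeats $y$ in $\mathbf{P}'$. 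Plugging this into the definition of $\mathcal{G}_f$ gives $\mathcal{G}_f(\mathbf{P},Y)=\mathcal{G}_f(\mathbf{P}',Y)$.

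For the backward direction, suppose $f$ satisfies global choice FIIA and let $\mathbf{P},\mathbf{P}'$ satisfy $X(\mathbf{P})=X(\mathbf{P}')$ and $\mathbf{P}_{\mid\{x,y\}}=\mathbf{P}'_{\mid\{x,y\}}$. The last equality forces $V(\mathbf{P})=V(\mathbf{P}')$ (by Definition \ref{Restriction} the restriction preserves the voter set), so we may apply global choice FIIA with $Y=\{x,y\}$ to obtain $\mathcal{G}_f(\mathbf{P},\{x,y\})=\mathcal{G}_f(\mathbf{P}',\{x,y\})$. By the definition of $\mathcal{G}_f$ together with the asymmetry of the defeat relation, $y\in \mathcal{G}_f(\mathbf{P},\{x,y\})$ iff $x$ does not defeat $y$ in $\mathbf{P}$, and similarly for $\mathbf{P}'$. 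Thus $x$ defeats $y$ in $\mathbf{P}$ iff $x$ defeats $y$ in $\mathbf{P}'$, establishing FIIA.

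The main conceptual point—really the only one that needs care—is recognizing in the backward direction that although $\mathcal{G}_f(\mathbf{P},\{x,y\})$ is defined using defeat in the full profile $\mathbf{P}$ (not in the restricted profile $\mathbf{P}_{\mid\{x,y\}}$), asymmetry collapses this to the single datum of which, if either, of $x$ or $y$ defeats the other in $\mathbf{P}$. Everything else is routine unpacking of definitions, and acyclicity is not even needed for either direction (it merely ensures that $\mathcal{G}_f(\mathbf{P},Y)$ is nonempty, which is not used here).
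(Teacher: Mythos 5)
Your proof is correct and follows essentially the same route as the paper's: the forward direction restricts $\mathbf{P}_{\mid Y}=\mathbf{P}'_{\mid Y}$ to pairs and applies FIIA, while the backward direction applies global choice FIIA to $Y=\{x,y\}$ and reads the pairwise defeat facts off the definition of $\mathcal{G}_f$. Your extra remarks (asymmetry collapsing $\mathcal{G}_f(\mathbf{P},\{x,y\})$ to the single defeat datum, the voter sets agreeing, and acyclicity being unnecessary) just make explicit what the paper leaves implicit.
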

\begin{proof} Assume $f$ satisfies global choice FIIA. To show that $f$ satisfies FIIA, suppose $\mathbf{P}_{\mid \{x,y\}}=\mathbf{P}'_{\mid \{x,y\}}$. Then by global choice FIIA, $\mathcal{G}_f(\mathbf{P},\{x,y\})= \mathcal{G}_f(\mathbf{P}',\{x,y\})$. It follows by definition of $\mathcal{G}_f$ that $x$ defeats $y$ in $\mathbf{P}$ according to $f$ if and only if $x$ defeats $y$ in $\mathbf{P}'$ according to $f$. Hence $f$ satisfies FIIA.

Now assume $f$ satisfies FIIA. To show that $f$ satisfies global choice FIIA, suppose $\mathbf{P}_{\mid Y}=\mathbf{P}'_{\mid Y}$. Then for any two $x,y\in Y$,  $\mathbf{P}_{\mid \{x,y\}}=\mathbf{P}'_{\mid \{x,y\}}$. Hence by FIIA, $x$ defeats $y$ in $\mathbf{P}$ according to $f$ if and only if $x$ defeats $y$ in $\mathbf{P}'$ according to $f$. It follows by  definition of $\mathcal{G}_f$ that  $\mathcal{G}_f(\mathbf{P},Y)= \mathcal{G}_f(\mathbf{P}',Y)$.\end{proof}

In contrast to global choice FIIA, which is a significant restriction on an acyclic VCCR, local choice FIIA is no restriction.

\begin{proposition} If $f$ is an acyclic VCCR, then $f$ satisfies local choice FIIA.
\end{proposition}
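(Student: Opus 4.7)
The proof will be essentially immediate, and my plan is to emphasize that the entire content of the proposition is captured by unpacking a definition. The key observation is that $\mathcal{L}_f(\mathbf{P},Y)$ is defined purely in terms of the relation $f(\mathbf{P}_{\mid Y})$: it is the set of elements of $Y$ that are not defeated by any other element of $Y$ according to $f$ applied to the restricted profile $\mathbf{P}_{\mid Y}$. So nothing about $\mathbf{P}$ outside of $Y$ enters into the computation of $\mathcal{L}_f(\mathbf{P},Y)$.

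Concretely, I would fix profiles $\mathbf{P}$ and $\mathbf{P}'$ and a set $Y\subseteq X(\mathbf{P})$ satisfying the hypotheses, and show set equality by fixing an arbitrary $y\in Y$ and arguing the biconditional $y\in \mathcal{L}_f(\mathbf{P},Y)\iff y\in \mathcal{L}_f(\mathbf{P}',Y)$. By definition, the left-hand side says that no $z\in Y$ defeats $y$ in $\mathbf{P}_{\mid Y}$ according to $f$, i.e., that no pair $(z,y)$ lies in $f(\mathbf{P}_{\mid Y})$; the right-hand side is the same statement with $\mathbf{P}'_{\mid Y}$ in place of $\mathbf{P}_{\mid Y}$. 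The hypothesis $\mathbf{P}_{\mid Y}=\mathbf{P}'_{\mid Y}$ gives $f(\mathbf{P}_{\mid Y})=f(\mathbf{P}'_{\mid Y})$ (since $f$ is a function on profiles), so the two conditions coincide.

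There is no real obstacle here; the proposition's whole point is to contrast with global choice FIIA by making vivid that the local choice function discards the ambient profile as soon as the candidate set is restricted. I would also note in passing that the acyclicity assumption is not needed for the equality itself---it is required only so that $\mathcal{L}_f(\mathbf{P},\cdot)$ qualifies as a genuine (nonempty-valued) choice function, as noted in the discussion of $\mathcal{L}_f$ in Section \ref{AlphaVoting}. Accordingly my write-up will be just a few lines, and I would not invoke any machinery beyond the defining equation of $\mathcal{L}_f$ and the well-definedness of $f$ as a function.
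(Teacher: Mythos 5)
Your proposal is correct and matches the paper's proof exactly: both just unpack the defining equation of $\mathcal{L}_f$ and observe that, since $\mathbf{P}_{\mid Y}=\mathbf{P}'_{\mid Y}$ and $f$ is a function on profiles, the two defeat relations (and hence the two choice sets) coincide. Your side remark that acyclicity is only needed to make $\mathcal{L}_f(\mathbf{P},\cdot)$ a genuine choice function, not for the equality itself, is also consistent with the paper's discussion.
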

\begin{proof} By definition, we have
\[\mathcal{L}_f(\mathbf{P},Y)=\{y\in Y\mid \mbox{there is no $z\in Y$ that defeats $y$ in $\mathbf{P}_{\mid Y}$ according to $f$}\}\]
\[\mathcal{L}_f(\mathbf{P}',Y)=\{y\in Y\mid \mbox{there is no $z\in Y$ that defeats $y$ in $\mathbf{P}'_{\mid Y}$ according to $f$}\},\]
which with $\mathbf{P}_{\mid Y}=\mathbf{P}'_{\mid Y}$ implies $\mathcal{L}_f(\mathbf{P},Y)= \mathcal{L}_f(\mathbf{P}',Y)$.
\end{proof}

Now consider one of Arrow's \citeyearpar[p.~26]{Arrow1963} supposed arguments for IIA:
\begin{quote}Suppose that an election is held, with a certain number of candidates in the field, each individual filing his list of preferences, and then one of the candidates dies. Surely the social choice should be made by taking each individual's preference lists, blotting out completely the dead candidate's name, and considering only the orderings of the remaining names in going through the procedure of determining the winner. That is, the choice to be made among the set $S$ of surviving candidates should be independent of the preferences of individuals for candidates not in $S$. To assume otherwise would be to make the result of the election dependent on the obviously accidental circumstance of whether a candidate died before or after the date of polling.
\end{quote}
We agree with the literature cited above (\citealt{Hansson1973,Ray1973,Suzumura1983}) that this is not an argument that one's VCCR should satisfy IIA. But neither is it an argument that one's VCCR should satisfy Local-$\alpha$. In our view, the argument above is at most an argument for the thesis that if a candidate who appeared on the ballots in $\mathbf{P}$ dies after the ballots are collected, then one should choose among the surviving candidates using the \textit{local choice function} $\mathcal{L}_f(\mathbf{P},\cdot)$. As long as one chooses using the local choice function, one follows all of Arrow's recommendations above, regardless of whether $f$ satisfies IIA or Local-$\alpha$. 

However, Arrow does not officially make the distinction between the global and local choice functions. He only officially defines the global choice function induced by a CCR.\footnote{This follows from Arrow's \citeyearpar{Arrow1963} Definition 4 (p.~23), the first sentence of his Section III.3 (p.~26), and his Definition 3 (p.~15).} But if in the example above, Arrow wants the global choice function to act like the local choice function, this leads to VIIA according to Proposition \ref{GlobalEqualsLocal}. Thus, one can understand the otherwise puzzling example of the dead candidate as possibly related to Arrow's implicit commitment to VIIA. 

Arrow  \citeyearpar[p.~27]{Arrow1963} gives another supposed argument for IIA, based on Borda count:
\begin{quote}
[S]uppose that there are three voters and four candidates, $x$, $y$, $z$, and $w$. Let the weights for the first, second, third, and fourth choices be 4, 3, 2, and 1, respectively. Suppose that individuals 1 and 2 rank candidates in the order $x$, $y$, $z$, and $w$, while individual 3 ranks them in the order $z$, $w$, $x$, and $y$. Under the given electoral system, $x$ is chosen. Then, certainly, if $y$ is deleted from the ranks of the candidates, the system applied to the remaining candidates should yield the same result, especially since, in this case, $y$ is inferior to $x$ according to the tastes of every individual; but, if $y$ is in fact deleted, the indicated electoral system would yield a tie between $x$ and $z$. 
\end{quote}
Let $\mathbf{P}$ be the initial profile described by Arrow with $X(\mathbf{P})=\{x,y,z,w\}$. When Arrow says ``if $y$ is deleted from the ranks of the candidates, the system applied to the resulting candidates should yield the same result'' which of the following did he mean?
\begin{enumerate}
\item since $\mathcal{G}_f(\mathbf{P},X(\mathbf{P}))=\{x\}$, we should have $\mathcal{G}_f(\mathbf{P},\{x,z,w\})=\{x\}$; 
\item since $\mathcal{G}_f(\mathbf{P},X(\mathbf{P}))=\{x\}$, we should have $\mathcal{G}_f(\mathbf{P}_{\mid \{x,z,w\}},\{x,z,w\})=\{x\}$; 
\item since $\mathcal{L}_f(\mathbf{P},X(\mathbf{P}))=\{x\}$, we should have $ \mathcal{L}_f(\mathbf{P},\{x,z,w\})=\{x\}$; 
\item since $ \mathcal{L}_f(\mathbf{P},X(\mathbf{P}))=\{x\}$, we should have $ \mathcal{L}_f(\mathbf{P}_{\mid \{x,z,w\}},\{x,z,w\})=\{x\}$.
\end{enumerate}
In fact, options 2, 3, and 4 are equivalent. Since Arrow only officially discusses the global choice function induced by a CCR, he could not have officially meant 3 or 4. Moreover, since Arrow assumes that all of the profiles in the domain of a given SWF have the same set of candidates,  he could not have officially meant 2, which requires that both $\mathbf{P}$ and $\mathbf{P}_{\mid \{x,z,w\}}$ be in the domain of $f$. Thus, only option 1 officially makes sense in his framework. Yet if $f$ is Borda count, then $\mathcal{G}_f$ is global Borda count, which  still chooses $x$ as the unique winner after $y$ is removed from the input to $\mathcal{G}_f(\mathbf{P},\cdot)$, contradicting Arrow's conclusion. 

Arrow's passage above certainly shows that the Borda VCCR $f$ violates VIIA, because $x$ defeats $z$ in $\mathbf{P}$ according to $f$ but not in  $\mathbf{P}_{\mid \{x,z,w\}}$.\footnote{It also shows that the Borda VCCR $f$ violates what could be called \textit{Local-$\beta$} (see \citealt{Sen1971}): for any profile $\mathbf{P}$ and nonempty $Z\subseteq Y\subseteq X(\mathbf{P})$, if $\mathcal{L}_f(\mathbf{P},Z)\cap \mathcal{L}_f(\mathbf{P},Y)\neq\varnothing$, then $\mathcal{L}_f(\mathbf{P},Z)\subseteq \mathcal{L}_f(\mathbf{P},Y)$. It is also easy to see that the Borda VCCR violates Local-$\alpha$.} Thus, one way of understanding Arrow's intention in using the example to motivate IIA is that he had in mind VIIA (cf.~\citealt[p.~180]{Bordes1991}).

\bibliographystyle{plainnat}
\bibliography{Axioms}

\end{document}